\newtheorem{theorem}{Theorem}
\newtheorem{lemma}[theorem]{Lemma}
\newtheorem{mydef}{Definition}
\newcommand{\argmin}{argmin}
\newcommand{\Var}{Var}
\DeclareRobustCommand\sbseries{\not@math@alphabet\sbseries\mathbf\fontseries\sbdefault\selectfont}
\DeclareTextFontCommand{\textsb}{\sbseries}
\DeclareRobustCommand\sfitseries{\not@math@alphabet\sfitseries\normalfont\fontseries{m}\fontshape{it}\selectfont}
\DeclareTextFontCommand{\textsfi}{\sfitseries}
\DeclareOldFontCommand{\rm}{\normalfont\rmfamily}{\mathrm}
\DeclareOldFontCommand{\sf}{\normalfont\sffamily}{\mathsf}
\DeclareOldFontCommand{\tt}{\normalfont\ttfamily}{\mathtt}
\DeclareOldFontCommand{\bf}{\normalfont\bfseries}{\mathbf}
\DeclareOldFontCommand{\it}{\normalfont\itshape}{\mathit}
\DeclareOldFontCommand{\sl}{\normalfont\slshape}{\@nomath\sl}
\DeclareOldFontCommand{\sc}{\normalfont\scshape}{\@nomath\sc}
\DeclareRobustCommand*\cal{\mathcal}
\begin{document}

\title{Adaptive multicenter designs for continuous response clinical trials in the presence of an unknown sensitive subgroup}
\title[Adaptive designs in the presence of unknown sensitive subgroup]{Adaptive multicenter designs for continuous response clinical trials in the presence of an unknown sensitive subgroup}

\author{Daria Rukina}
\newcommand{\Addresses}{{
  \bigskip
  \footnotesize
  D.~Rukina, \textsc{Institute of Mathematics, \'Ecole Polytechnique F\'ed\'erale de Lausanne (EPFL),
    Lausanne, Switzerland 1015}\par\nopagebreak
  \textit{E-mail address}, D.~Rukina: \texttt{daria.rukina@epfl.ch} }}

\begin{abstract}The partial effectiveness of drugs is of importance to the pharmaceutical industry. Randomized controlled trials (RCTs) assuming the existence of a subgroup sensitive to the treatment are already used. These designs, however, are available only if there is a known marker for identifying subjects in the subgroup. In this paper we investigate a model in which the response in the treatment group $Z^T$ has a two-component mixture density $(1-p)\mathcal N(\mu^C, \sigma^2)+p\mathcal N(\mu^T, \sigma^2)$ representing the treatment responses of \emph{placebo responders} and \emph{drug responders}. The treatment-specific effect is $\mu = \frac{\mu^T-\mu^C}{\sigma}$ and $p$ is the prevalence of the drug responders in the population. Other patients in the treatment group react as if they had received a placebo.
\par We develop one- and two-stage RCT designs that are able to detect a sensitive subgroup based solely on the responses. We also extend them to a multicenter RCTs using Hochberg's step-up procedure. We avoid extensive simulations and use simple and quick numerical optimization methods.
\end{abstract}
\keywords{adaptive design, mixture model, multiple testing, placebo effect, planning, subgroup analysis}

\maketitle

\section{Introduction}

Because of population heterogeneity, treatment effects may vary considerably from one subject to another. In \cite{incentive2004} it is reported that on average only 60\% of treated patients react to prescription drugs, whereas the others either do not benefit or have adverse effects. If a trial showed no significant \emph{average treatment effect}, it can be wrong to conclude that the treatment effect does not exist and that the treatment is futile. There may well be a subpopulation of subjects for whom the treatment is effective. As an example, consider the Clomethiazole Acute Stroke Study \cite{Wahlgren21} where the relative functional independence of patients with acute hemispheric stroke was investigated. The trial results showed no significant difference between the clomethiazole and the placebo groups (56.1\% vs. 54.8\%) with a p-value of $0.649$. However, in the group of patients with TACS (total anterior circulation syndrome), 40.8\% reached relative functional independence in contrast to 29.8\% of TACS patients in the placebo group, with a p-value of $0.008$.

The idea that there is a proportion of non-responders in the treatment group is often highlighted in clinical papers \cite{Rubin:1974aa, Frangakis:2002aa, Muthen:2009aa, Leiby:2009aa, He:2014aa}. If one has specific biomarkers which point to the subjects who are likely to respond to an intervention, then a subgroup analysis is usually conducted in addition to the standard procedure. Most of the time, however, there is no such information about the sensitive subgroup, and a standard designs may fail to prove the drug's efficacy. 
\par Moreover, even if the proportion of the sensitive patients is big, there is another hurdle which can mask a drug-specific effect, namely the placebo. When its effect is comparable with a drug, the detection of a subgroup is complicated substantially. This problem may occur in treatments for neurological disorders, where many trials fail to show efficacy, and the reason is claimed to be a large placebo effect \cite{Agid:2013aa, Tuttle:2015aa, Holmes:2016aa}. For our purpose, the placebo effect is a combination of natural course of the disease, the Hawthorne effect and other non-specific effects influencing the treatment outcome.
\par Given the assumptions above, we elaborate a framework in which we are able to detect the sensitive subgroup based on the following model. Let the response to the drug, $Z$, be a normally distributed random variable. We assume that there are two groups of patients: placebo-only responders with $Z\sim \mathcal N\left(\mu^C, \sigma^2\right)$ and placebo-and-drug responders with $Z\sim \mathcal N\left(\mu^T, \sigma^2\right)$, where $\mu^T > \mu^C$. The drug-specific effect, $\mu^T - \mu^C$, and the placebo effect are additive. The response in the treatment group is then modelled as a mixture (see Fig.~\ref{diff_scheme}) $$Z^T \sim (1-p)\mathcal N\left(\mu^C, \sigma^2\right)+ p \mathcal N\left(\mu^T, \sigma^2\right), \hspace{0.5cm}$$ 
where $p \ge 0$ is the proportion of placebo-and-drug responders (the sensitive subpopulation). The design is constructed to determine whether $p>0$. Denote the standardized drug-specific effect as $\mu = \frac{\mu^T-\mu^C}{\sigma}$. Given the standardized estimates of responses in the treatment group, $$X_i = \frac{Z^T_i-\overline{Z^C}}{\sigma}, \hspace{0.1cm} i = 1, \ldots, n,$$ the average value $\overline X$ has expectation $\mu p$ and is distributed as $\sum_{k=0}^n \binom {n} {k}p^k(1-p)^{n-k} \mathcal{N}\left(\frac{k}{n}\mu,\frac{2}{n}\right)$. The test based on the average of observations can be used in deciding between $\mathcal H_0: p=0$ against $\mathcal H_1: p>0$. Along with its simplicity the mean value statistic can also be used with non-normal responses that may be encountered in clinical trials.

To control the power of detection, we consider the minimal power over a region of strong effect, the set of $(\mu, p)$ that are deemed to be of interest. We propose one- and two-stage designs for the single and multicenter trials. This framework is mainly aimed at the early phase II before the identification of the drug-responders subgroup.

\par The outline of the paper is as follows. In Section 2, we introduce the model, define type I and type II errors, and describe the designs. In Section 3, we discuss the choice of parameters and optimal designs. We present the planning procedure, from the user (study designer) point of view, with the plots and tables to guide the choice of parameters. In Section 4, the performance of the study designs is illustrated. Section 5 provides an additional argumentation and discusses further related questions, such as estimation of $\mu$ and $p$. Section 6 concludes.
\begin{figure}
\begin{minipage}{.5\textwidth}%
\centering
\includegraphics[width = 7cm]{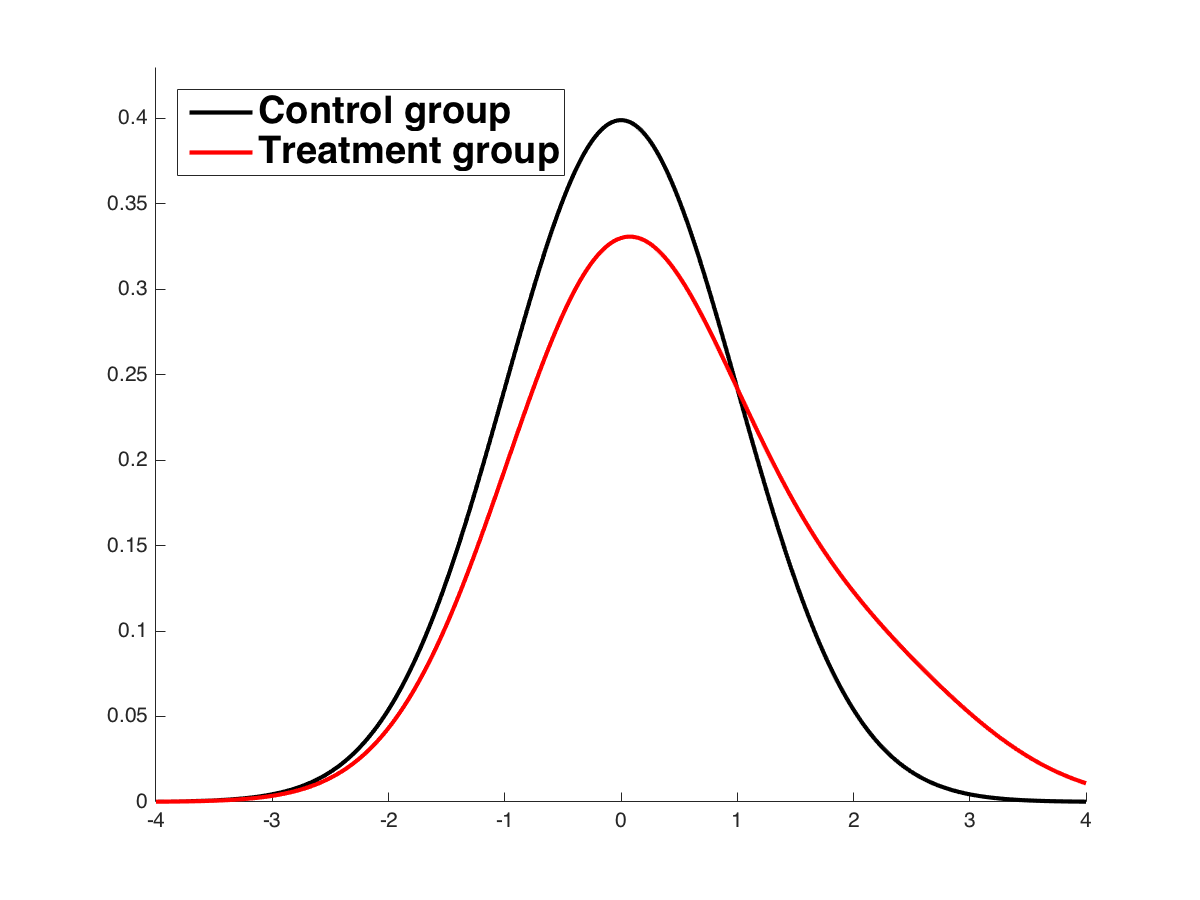}
\end{minipage}%
\begin{minipage}{.5\textwidth}%
\centering
\includegraphics[width = 7cm]{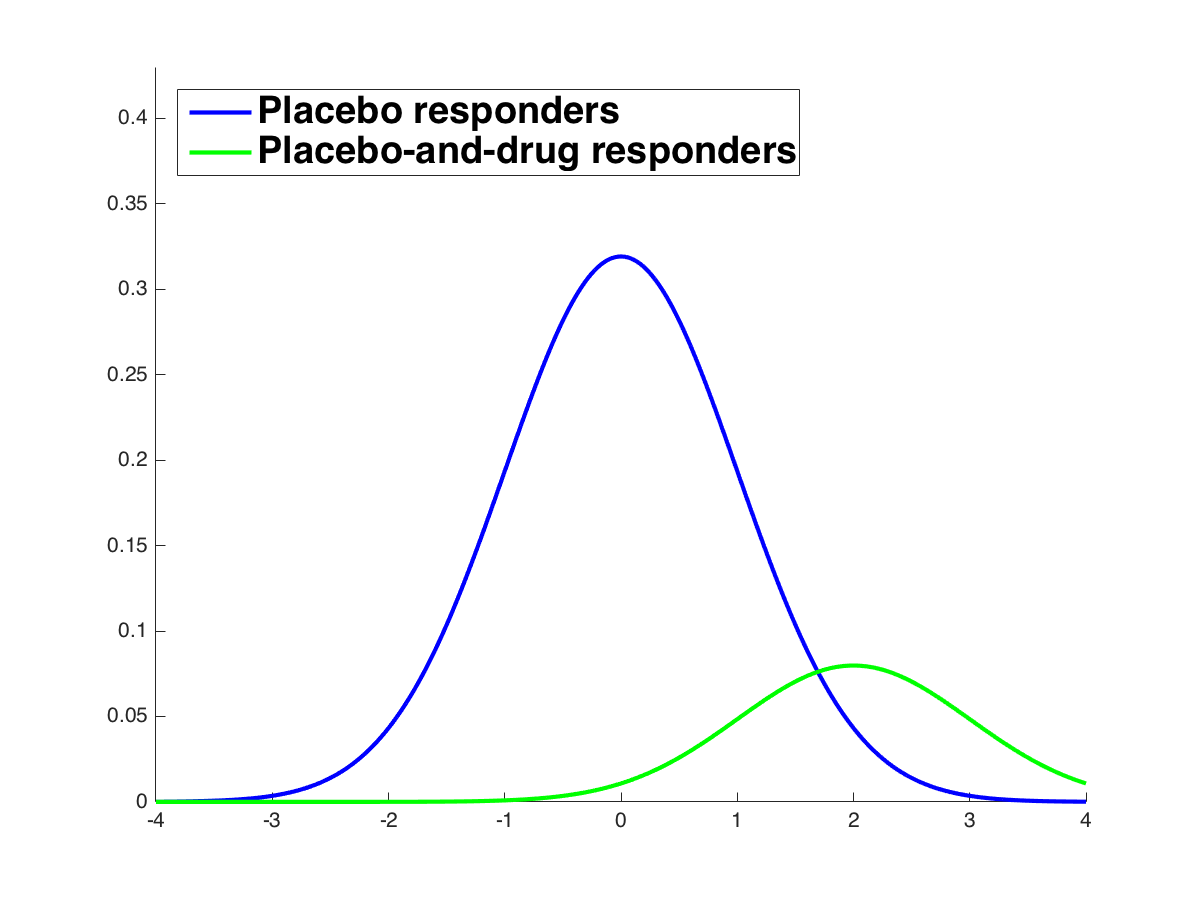}
\end{minipage} %
\caption{\hspace{0.2cm}Modelling drug response as a mixture of two continuous distributions has been proposed to be more reasonable in RCT for some types of indications. On the left, the control group is centered at zero and the mixture created in the treatment group is asymmetric with an enlarged upper tail. If we could separate the responders and non-responders in the treatment group, we would observe the graphs as in the right hand side.}
\label{diff_scheme}
\end{figure}
\section{The designs}
In all RCTs discussed in this chapter we consider any treatment group at any stage to have a corresponding control group of the same size. Let $Z^T_1,  \ldots , Z^T_n$ and $Z^C_1,  \ldots , Z^C_n$ be the responses from the treatment and the control groups, respectively. For convenience we define normalized responses as
\begin{align*}
Y^C = \frac{Z^C - \mu^C}{\sigma} \sim\mathcal N(0,1), \hspace{0.2cm} Y^T = \frac{Z^T - \mu^C}{\sigma} \sim (1-p)\mathcal N(0,1)+p\mathcal N(\mu,1),
\end{align*}
where $\mu=\frac{\mu^T - \mu^C}{\sigma}$. Since we do not know $\mu^C$ and $\sigma$, it is not possible to observe $Y^T$. Instead, we take its estimate, $X_i = \frac{Z_i^T-\hat\mu^C}{\sigma}$, where $\hat\mu^C = \frac{\sum_{i=1}^n Z_i^C}{n}$. For the design development we assume $\sigma$ to be known, but in practice we may use the approximation $\hat\sigma^C = \sqrt{\frac{\sum_{i=1}^n (Z^C_i-\hat \mu^C)^2}{n-1}}$. In simulations shown later, we show that the approximation error has a negligible effect on the results.

\par In a simple RCT with a single center, the test is based on the mean value statistic, $$\overline X = \frac{X_1+\ldots+X_n}{n} = \overline{Y^T} - \frac{\hat \mu^C-\mu^C}{\sigma}.$$ Note that $$\overline{Y^T} \sim \sum_{k=0}^n \binom {n} {k}p^k(1-p)^{n-k} \mathcal{N}\left(\frac{k}{n}\mu,\frac{1}{n}\right),$$ $\frac{\hat \mu^C-\mu^C}{\sigma} \sim \mathcal N\left(0, \frac{1}{n}\right)$, and the distribution of $\overline X$ is
\begin{align}\label{main_expr}
&\overline X \sim \sum_{k=0}^n \binom {n} {k}p^k(1-p)^{n-k} \mathcal{N}\left(\frac{k}{n}\mu,\frac{2}{n}\right)\nonumber.
\end{align}

\subsection{Hypothesis testing}
The goal of a trial is to determine if there is a sensitive subpopulation. The decision is based on a statistical test of
\begin{equation}
\label{main_hypothesis}
\begin{aligned}
&\mathcal H_0: Y^T \sim \mathcal N(0,1)\hspace{1.8cm} \text{versus}\\
&\mathcal H_1: Y^T \sim (1-p)\mathcal{N}(0,1)+p\mathcal{N}(\mu,1), \hspace{1cm} p\in(0,1], \mu>0.
\end{aligned}
\end{equation}
Of course, not all pairs $(\mu,p)$ are of equal interest for the drug developer. It is of small interest if $\mathcal H_0$ is rejected when $\mathcal H_1$ is true with some small $p$ and/or $\mu$. Therefore, it is important to define \emph{interesting} values of $(\mu, p)$, for which one wants to reject the null. 
\begin{mydef}\label{def:se}
The region $\cal E$ of \emph{strong effect} (se) is the set of \emph{interesting} pairs $(\mu, p)$ such that $\mu \ge \mu_i$ if $p\in[p_i, p_{i+1}]$, where  $\mu_s <\mu_{s-1}<\ldots<\mu_1$ and $0<p_1<\ldots<p_{s+1} = 1$ (see Fig.~\ref{fig:region_of_interest}).
\end{mydef}


\begin{figure}
\begin{tikzpicture}
\draw[thick,->] (0,0)--(5,0) node[right]{$\mu$};
\draw[thick,->] (0,0)--(0,5) node[above]{$p$};
\draw[draw = white,fill=gray!20!white]  (0.5,4)--(0.5,3)--(1.3,3)--(1.3,2.5)--(2.8,2.5)--(2.8,1)--(5,1)--(5,4)--cycle;
\draw[line width=0.3mm] (5,4)--(0.5,4);
\draw[line width=0.3mm] (0.5,4)--(0.5,3);
\draw[line width=0.3mm] (0.5,3)--(1.3,3);
\draw[line width=0.3mm] (1.3,3)--(1.3,2.5);
\draw[line width=0.3mm] (1.3,2.5)--(2.8,2.5);
\draw[line width=0.3mm] (2.8,2.5)--(2.8,1);
\draw[line width=0.3mm] (2.8,1)--(5,1);
\draw[dotted, line width=0.3mm] (0,4)--(0.5,4);
\draw[dotted, line width=0.3mm] (0,3)--(0.5,3);
\draw[dotted, line width=0.3mm] (0.5,3)--(0.5,0);
\draw[dotted, line width=0.3mm] (0,2.5)--(1.3,2.5);
\draw[dotted, line width=0.3mm] (1.3,2.5)--(1.3,0);
\draw[dotted, line width=0.3mm] (0,1)--(2.8,1);
\draw[dotted, line width=0.3mm] (2.8,1)--(2.8,0);
\draw(-0.2,4) node{1};
\draw(0, 4) node[inner sep=1pt,fill,circle]{};
\draw(-0.2,3) node{$p_s$};
\draw(0, 3) node[inner sep=1pt,fill,circle]{};
\draw(0.5,-0.2) node{$\mu_s$};
\draw(0.5, 0) node[inner sep=1pt,fill,circle]{};
\draw(-0.2,2.5) node{$p_i$};
\draw(0, 2.5) node[inner sep=1pt,fill,circle]{};
\draw(1.3,-0.2) node{$\mu_i$};
\draw(1.3, 0) node[inner sep=1pt,fill,circle]{};
\draw(-0.2,1) node{$p_1$};
\draw(0, 1) node[inner sep=1pt,fill,circle]{};
\draw(2.8,-0.2) node{$\mu_1$};
\draw(2.8, 0) node[inner sep=1pt,fill,circle]{};
\draw(3,3) node{strong effect};
\draw(3.5, 1.5) node{${\cal E}$};
\end{tikzpicture}
\caption{\hspace{0.2cm}The region of strong effect, $\mathcal E$, is a subset of the $(\mu,p)$ plane satisfying $\mu\ge\mu_i$ if $p\in[p_i,p_{i+1}]$ where $\mu_s <\mu_{s-1}<...<\mu_1$ and $0<p_1<p_2<\ldots p_{s+1}=1$.}
\label{fig:region_of_interest}
\end{figure}
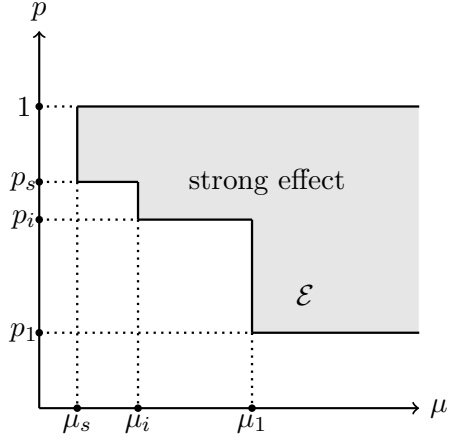

In the next two subsections, we introduce the single and the multicenter RCT designs, for which type I and II errors are defined as follows.
\begin{mydef}\label{def:errors}
For a simple study, a type I error occurs when a true null is rejected. A type II error occurs if a false null is not rejected in the presence of a strong effect. The corresponding type I error rate will be denoted as $\alpha$ and the maximum of the type II error rate over $\cal E$ is denoted as $\beta^{se}$. In the case of a multicenter study with $M$ centers a type I error occurs if at least one of the true nulls is rejected, and a type II error occurs if at least $m$ false nulls among the $M_1$ centers with strong effects are not rejected. Note that $m\le M_1\le M$. The type I error is a family-wise error, while the type II error depends on $M_1$ and $m$. The type II error rate maximized over $\cal E$ will be denoted by $\beta^{se}$ for the single stage study and $\beta^{se}_{\text{fw}}(M_1, m)$ of the multicenter study.
\end{mydef}
\subsection{Simple RCT designs - single center RCT}
We start with the simplest design: single center, one stage. For the mean value statistic the hypothesis testing problem is
\begin{equation}\label{distribution}
\begin{aligned}
&\mathcal H_0: \overline X \sim \mathcal{N}\left(0,\frac{2}{n}\right) ;\\
&\mathcal H_1: \overline X \sim \sum_{k=0}^n \binom {n} {k}p^k(1-p)^{n-k} \mathcal{N}\left(\frac{k}{n}\mu,\frac{2}{n}\right).
\end {aligned}
\end{equation}
The design of a trial is defined by the sample size $n$ and some positive threshold $\eta$, upon which the decision about a subgroup's existence is made, 
\begin{equation}\tag{D1}
\begin{aligned}
\text{Reject } \mathcal H_0 \text{ if } \overline X >\eta.
\end{aligned}
\end{equation}
The chosen test is justified by the following lemma:
\begin{lemma}\label{lemma:ump}
The tests that reject $\mathcal H_0$ for $\overline X>\eta$ are uniformly most powerful.
\end{lemma}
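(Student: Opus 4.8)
The plan is to exploit that $\mathcal H_0$ here is a \emph{simple} hypothesis, so the critical value $\eta$ calibrated to a fixed level $\alpha$ is just the $(1-\alpha)$-quantile of $\mathcal N\!\left(0,\tfrac{2}{n}\right)$ and does not depend on which alternative is in force. It then suffices, by the Neyman--Pearson lemma, to show that for \emph{every} fixed alternative $(\mu,p)$ with $\mu>0$ and $p\in(0,1]$ the most powerful level-$\alpha$ test rejects exactly on $\{\overline X>\eta\}$. Since this region is then the same for all alternatives, the test is automatically uniformly most powerful. The whole argument thus reduces to analysing the likelihood ratio for a single, arbitrary alternative.

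First I would write down that likelihood ratio. Denoting by $\phi$ the $\mathcal N\!\left(0,\tfrac{2}{n}\right)$ density and setting $x=\overline X$, the ratio of the alternative density in \eqref{distribution} to the null density is
\begin{equation*}
\Lambda(x)=\sum_{k=0}^n \binom{n}{k}p^k(1-p)^{n-k}\,\frac{\phi\!\left(x-\tfrac{k}{n}\mu\right)}{\phi(x)}.
\end{equation*}
Next I would simplify each Gaussian ratio. Using $\phi(x-a)/\phi(x)=\exp\!\big(\tfrac{n}{4}(2ax-a^2)\big)$ with $a=\tfrac{k}{n}\mu$, the expression collapses into a positive combination of exponentials in $x$,
\begin{equation*}
\Lambda(x)=\sum_{k=0}^n c_k\,e^{\lambda_k x},\qquad \lambda_k=\frac{k\mu}{2},\quad c_k=\binom{n}{k}p^k(1-p)^{n-k}\,e^{-\frac{k^2\mu^2}{4n}}.
\end{equation*}

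Because $\mu>0$, the exponents $\lambda_k=k\mu/2$ are nonnegative and strictly positive for $k\ge 1$, and because $p\in(0,1]$ the coefficients $c_k$ are nonnegative with $c_k>0$ for at least one index $k\ge 1$. Hence $\Lambda$ is a sum of nonnegative multiples of nondecreasing exponentials containing at least one strictly increasing term, so $\Lambda$ is a strictly increasing function of $x=\overline X$. Monotonicity of $\Lambda$ gives $\{\Lambda(\overline X)>c\}=\{\overline X>\eta\}$ for the corresponding threshold, so the Neyman--Pearson most powerful test is of exactly this form; as $\eta$ is fixed by the null distribution and $\alpha$ alone, $\{\overline X>\eta\}$ is most powerful against each alternative simultaneously, which is the claimed UMP property.

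The main obstacle is establishing this monotonicity despite the alternative being a finite mixture: a priori a mixture density need not yield a monotone likelihood ratio against a Gaussian null. The computation above is precisely what makes it transparent, by rewriting $\Lambda$ as a positive combination of increasing exponentials $e^{\lambda_k x}$; in effect this verifies that the family has monotone likelihood ratio in $\overline X$, so the statement may equally be viewed as an instance of the Karlin--Rubin theorem. It remains only to dispose of the harmless edge cases (the constant term $c_0$ at $k=0$, and the degenerate $p=1$ where only $c_n>0$), none of which affects the monotonicity.
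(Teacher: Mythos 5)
Your proof is correct and follows essentially the same route as the paper: both compute the likelihood ratio of the mixture alternative to the Gaussian null, rewrite it as a positive combination of increasing exponentials $e^{k\mu x/2}$ in $x=\overline X$, and conclude monotonicity, hence that the threshold test is most powerful against every alternative simultaneously (the paper phrases the final step via Karlin--Rubin, you via pointwise Neyman--Pearson plus the observation that the critical region is alternative-free, which is the same argument in substance).
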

The equivalent formulation of the design is
\begin{equation}\tag{D1}
\begin{aligned}
\text{Reject } \mathcal H_0 \text{ if the } \text{p-value}\left(\overline X\right) < \alpha, \text{ where } \text{p-value}\left(\overline X\right) = 1-\Phi\left(\overline X \sqrt{\frac{n}{2}} \right) \text{ and } \alpha<0.5.
\end{aligned}
\end{equation}
Here, $\alpha$ is a type I error rate, and $\eta=z_{1-\alpha}\sqrt{\frac{2}{n}}$, where $z_{1-\alpha }= \Phi^{-1}(1-\alpha)$ stands for the standard Gaussian quantile. The probability of a false negative for a fixed alternative is  
\begin{equation}\label{eq:beta}
\begin{aligned}
&\beta(n,\eta,\mu,p)=\sum_{k=0}^n \binom {n} {k}p^k(1-p)^{n-k}\Phi\left( \left(\eta-\frac{k}{n}\mu\right)\sqrt{\frac{n}{2}}\right).
\end{aligned}
\end{equation}
By Definition \ref{def:errors},
 \begin{equation}\label{eq:beta^se}
\beta^{se}(n,\alpha) = \underset{(\mu,p)\in \cal E}{\max}\beta\left(n,z_{1-\alpha}\sqrt{\frac{2}{n}},\mu,p\right).
\end{equation}

As the following lemma shows, one can easily compute $\beta^{se}(n,\alpha)$ for the regions $\mathcal E$ of the form we defined.
\begin{lemma}\label{lemma:beta_max}
For the region of strong effect defined as in Definition \ref{def:se}, the maximum type II error rate for the one-stage RCT is
$$ \beta^{se}(n,\alpha) =\underset{i=1, ..., s}{\max}\beta\left(n,z_{1-\alpha}\sqrt{\frac{2}{n}},\mu_i,p_i\right).$$
\end{lemma}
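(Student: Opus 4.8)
The plan is to deduce the result from two monotonicity properties of the false-negative function $\beta(n,\eta,\mu,p)$ in \eqref{eq:beta}, combined with the order structure of the region $\mathcal E$ of Definition \ref{def:se}. Concretely, with $n$ fixed and $\eta=z_{1-\alpha}\sqrt{2/n}$, I would show that $(\mu,p)\mapsto\beta$ is non-increasing in $\mu$ and non-increasing in $p$, and that the staircase shape of $\mathcal E$ lets any pair be pushed down and to the left onto one of the corners $(\mu_i,p_i)$ without leaving $\mathcal E$. Since the maximum of a coordinatewise non-increasing function over such a region is attained at one of these corners, the claimed formula follows.

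Monotonicity in $\mu$ is immediate: writing $g_k(\mu)=\Phi\big((\eta-\tfrac kn\mu)\sqrt{n/2}\big)$, each summand $\binom nk p^k(1-p)^{n-k}g_k(\mu)$ is non-increasing in $\mu$, because $\Phi$ is increasing while $\eta-\tfrac kn\mu$ decreases in $\mu$ for every $k\ge 1$ (and is constant for $k=0$), so $\beta$ inherits this. The substantive step, which I expect to be the main obstacle, is monotonicity in $p$. Here I would rewrite $\beta=\mathbb E[g(K)]$, where $K\sim\mathrm{Bin}(n,p)$ and $g(k)=\Phi\big((\eta-\tfrac kn\mu)\sqrt{n/2}\big)$ is \emph{decreasing} in $k$ for $\mu>0$. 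Because $\mathrm{Bin}(n,p)$ is stochastically increasing in $p$ — couple $K_p=\sum_{i=1}^n\mathbbm 1\{U_i\le p\}$ and $K_{p'}=\sum_{i=1}^n\mathbbm 1\{U_i\le p'\}$ on common uniforms $U_i$, so that $p\le p'$ forces $K_p\le K_{p'}$ pointwise — the decreasing map $g$ gives $g(K_p)\ge g(K_{p'})$ almost surely, and taking expectations yields $\beta(n,\eta,\mu,p)\ge\beta(n,\eta,\mu,p')$. A direct proof differentiating in $p$ and summing by parts is also possible, but the coupling argument is cleaner.

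It then remains to read off the geometry. Fix $(\mu,p)\in\mathcal E$ and let $i$ be such that $p\in[p_i,p_{i+1}]$, so that $\mu\ge\mu_i$. Lowering $\mu$ to $\mu_i$ keeps the pair in $\mathcal E$ and, by $\mu$-monotonicity, does not decrease $\beta$; lowering $p$ to $p_i$ keeps the pair in $\mathcal E$ as well (on that interval the only constraint is $\mu\ge\mu_i$, still satisfied) and, by $p$-monotonicity, does not decrease $\beta$ either. Hence $\beta(n,\eta,\mu,p)\le\beta(n,\eta,\mu_i,p_i)$, so $\max_{\mathcal E}\beta\le\max_{i}\beta(n,\eta,\mu_i,p_i)$; as each corner lies in $\mathcal E$ the reverse inequality is trivial, giving equality and, after substituting $\eta=z_{1-\alpha}\sqrt{2/n}$, the stated expression for $\beta^{se}(n,\alpha)$.
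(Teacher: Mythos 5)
Your proposal is correct and follows essentially the same route as the paper: establish that $\beta(n,\eta,\mu,p)$ is decreasing in both $\mu$ and $p$ for $\mu>0$, and conclude that the maximum over the staircase region $\mathcal E$ is attained at one of the corners $(\mu_i,p_i)$. The only divergence is in the $p$-monotonicity step, where you use a stochastic-dominance coupling of $\mathrm{Bin}(n,p)$ applied to the decreasing function $g(k)=\Phi\bigl((\eta-\tfrac{k}{n}\mu)\sqrt{n/2}\bigr)$, while the paper differentiates in $p$ and reindexes the sum to show $\frac{\partial\beta}{\partial p}<0$ --- the very alternative you mention --- and your explicit reduction of an arbitrary point of $\mathcal E$ to a corner is, if anything, spelled out more fully than in the paper.
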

\subsubsection{Two stage designs}
 By Wald's theory of sequential analysis, there are modifications of the RCT design enabling significant reduction in the required minimum number of the participants. The idea of sequential analysis is to split the testing process into several steps and update the inference based on the information accumulated from the previous steps \cite{ARMITAGE:1960aa}. In the context of RCT, the steps are the stages of the trial. Having many stages requires additional parameters, which might be difficult to choose. Therefore, in this paper, we consider only the extension to two-stage designs. We will show how the introduction of the second stage decreases the total expected number of enrolled patients in comparison with the one-stage design, while maintaining the same error rates.
 
For the first stage we propose two stopping rules. First, if the evidence in favour of the alternative is very strong after the first stage, there is no need to conduct the second stage. Second, if the evidence in favour of the null is strong enough at the first stage, one can stop the trial early, claiming the absence of an effect.

To start with, consider some fixed sample sizes with $n_1$ and $n_2$ participants at the first and second stages, respectively. We also assume that there are two different control groups at each stage with the same number of participants as in the treatment groups. Denote the mean observed values after the first and the second stages by $\overline X_1$ and $\overline X_2$, and the mean of all observations as $\overline X$. The design is described as follows:
\begin{equation}\tag{D2}
\begin{aligned}
& \bullet \text{ If } \overline X_1 < \eta_0:\, \mathcal H_0 \text{ is not rejected and the trial is stopped }\\
& \hspace{2.3cm} \text{(strong evidence of an absence of a positive treatment effect)}; \\
& \bullet \text{ If }\overline X_1 > \eta_1:\, \text{ reject }\mathcal H_0 \text{ and stop the trial (strong evidence in favour of } \mathcal H_1);\\
& \bullet \text{ If }\eta_0 \le \overline X_1 \le \eta_1: \text{ conduct the second stage and reject } \mathcal H_0 \text{ if }\overline X>\eta_2  \\
& \hspace{3.1cm}(\text{strong evidence in favour of }\mathcal H_1 \text{ after two steps}).
\end{aligned}
\end{equation}
\vspace{0.3cm}
The rejection region is depicted in Fig. \ref{rejection_region} (grey area).
For the thresholds, $\eta_0, \, \eta_1$, denote the corresponding probabilities $\mathbb P\left(\overline{X_1} <\eta_0\right)$ and  $\mathbb P\left(\overline{X_1} >\eta_1\right)$ under the null as $\alpha_0 =  \Phi\left(\eta_0\sqrt{\frac{n_1}{2}}\right)$ and $\alpha_1 = 1- \Phi\left(\eta_1\sqrt{\frac{n_1}{2}}\right)$.

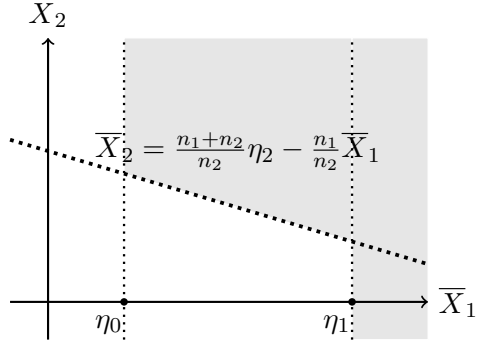
\begin{figure}
\centering
\begin{tikzpicture}
\draw[draw = white,fill=gray!20!white]  (1,3.5)--(1,1.7)--(4,0.8)--(4,3.5)--cycle;
\draw[draw = white,fill=gray!20!white]  (4,3.5)--(5,3.5)--(5,-0.5)--(4,-0.5)--cycle;
\draw[thick,->] (-0.5,0)--(5,0) node[right]{$\overline X_1$};
\draw[thick,->] (0,-0.5)--(0,3.5) node[above]{$\overline X_2$};   
\draw[domain=-0.5:5,dotted, line width=0.5mm] plot (\x,{2-0.3*\x});
\draw[dotted, line width=0.3mm] (1,-0.5)--(1,3.5);
\draw[dotted, line width=0.3mm] (4,-0.5)--(4,3.5);
\draw(2.5,2) node{$\overline X_2 = \frac{n_1+n_2}{n_2}\eta_2-\frac{n_1}{n_2}\overline X_1$};
\draw(0.8,-0.3) node{$\eta_0$};
\draw(1, 0) node[inner sep=1pt,fill,circle]{};
\draw(3.8,-0.3) node{$\eta_1$};
\draw(4, 0) node[inner sep=1pt,fill,circle]{};
\end{tikzpicture}
\caption{\hspace{0.2cm}Scheme of the trial performed in two stages. Rejection region is depicted in grey.}
\label{rejection_region}
\end{figure} 

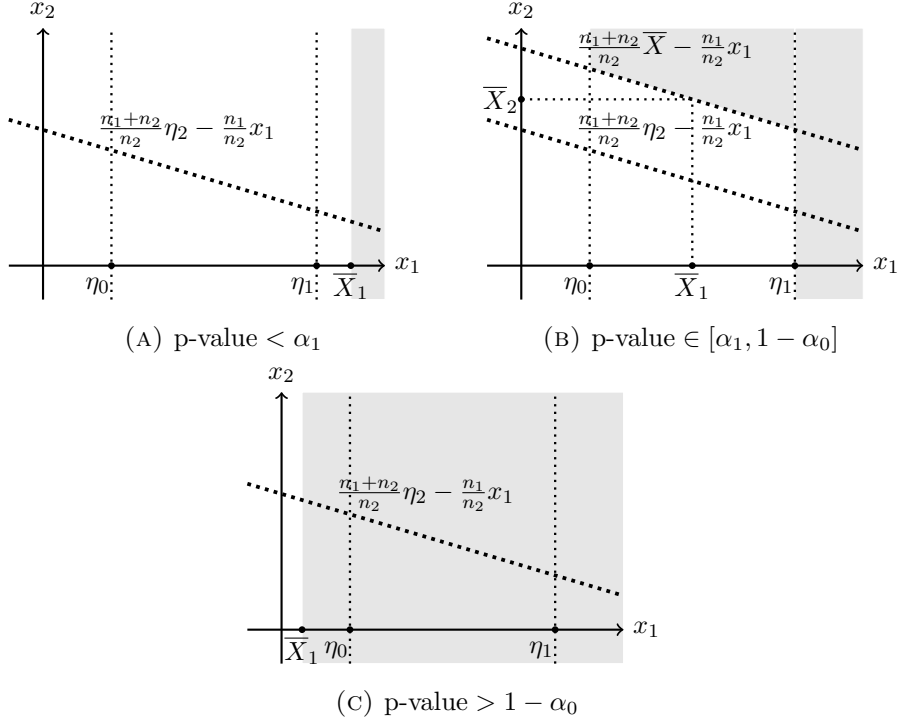
\begin{figure}
\centering
\begin{subfigure}{.45\textwidth}
\begin{tikzpicture}[scale=0.9, every node/.style={scale=0.9}]
\draw[draw = white,fill=gray!20!white]  (4.5,-0.5)--(4.5,3.5)--(5,3.5)--(5,-0.5)--cycle;
\draw[thick,->] (-0.5,0)--(5,0) node[right]{$x_1$};
\draw[thick,->] (0,-0.5)--(0,3.5) node[above]{$x_2$};   
\draw[domain=-0.5:5,dotted, line width=0.5mm] plot (\x,{2-0.3*\x});
\draw[dotted, line width=0.3mm] (1,-0.5)--(1,3.5);
\draw[dotted, line width=0.3mm] (4,-0.5)--(4,3.5);
\draw(2.1,2) node{$\frac{n_1+n_2}{n_2}\eta_2-\frac{n_1}{n_2}x_1$};
\draw(0.8,-0.3) node{$\eta_0$};
\draw(1, 0) node[inner sep=1pt,fill,circle]{};
\draw(3.8,-0.3) node{$\eta_1$};
\draw(4, 0) node[inner sep=1pt,fill,circle]{};
\draw(4.5, 0) node[inner sep=1pt,fill,circle]{};
\draw(4.5, -0.3) node{$\overline X_1$};
\end{tikzpicture}
\caption{$\text{p-value} <\alpha_1$}
\end{subfigure} \hspace{0.2cm}
\begin{subfigure}{.45\textwidth}
\centering
\begin{tikzpicture}[scale=0.9, every node/.style={scale=0.9}]
\draw[draw = white,fill=gray!20!white]  (1, 2.9)--(1, 3.5)--(4,3.5)--(4,2)--cycle;
\draw[draw = white,fill=gray!20!white]  (4,-0.5)--(4,3.5)--(5,3.5)--(5,-0.5)--cycle;
\draw[dotted, line width=0.3mm] (2.5,0)--(2.5,2.45);
\draw[dotted, line width=0.3mm] (2.5,2.45)--(0,2.45);
\draw[thick,->] (-0.5,0)--(5,0) node[right]{$x_1$};
\draw[thick,->] (0,-0.5)--(0,3.5) node[above]{$x_2$};   
\draw[domain=-0.5:5,dotted, line width=0.5mm] plot (\x,{3.2-0.3*\x});
\draw[domain=-0.5:5,dotted, line width=0.5mm] plot (\x,{2-0.3*\x});
\draw[dotted, line width=0.3mm] (1,-0.5)--(1,3.5);
\draw[dotted, line width=0.3mm] (4,-0.5)--(4,3.5);
\draw(2.1,2) node{$\frac{n_1+n_2}{n_2}\eta_2-\frac{n_1}{n_2}x_1$};
\draw(2.1,3.2) node{$\frac{n_1+n_2}{n_2}\overline X -\frac{n_1}{n_2}x_1$};
\draw(0.8,-0.3) node{$\eta_0$};
\draw(1, 0) node[inner sep=1pt,fill,circle]{};
\draw(3.8,-0.3) node{$\eta_1$};
\draw(4, 0) node[inner sep=1pt,fill,circle]{};
\draw(2.5,0) node[inner sep=1pt,fill,circle]{};
\draw(2.5,-0.3) node{$\overline X_1$};
\draw(0, 2.45) node[inner sep=1pt,fill,circle]{};
\draw(-0.3, 2.45) node{$\overline X_2$};
\end{tikzpicture}
\caption{$\text{p-value} \in [\alpha_1,1-\alpha_0]$}
\end{subfigure} 
\begin{subfigure}{.45\textwidth}
\centering
\begin{tikzpicture}[scale=0.9, every node/.style={scale=0.9}]
\draw[draw = white,fill=gray!20!white]  (0.3,-0.5)--(0.3,3.5)--(5,3.5)--(5,-0.5)--cycle;
\draw[thick,->] (-0.5,0)--(5,0) node[right]{$x_1$};
\draw[thick,->] (0,-0.5)--(0,3.5) node[above]{$x_2$};   
\draw[domain=-0.5:5,dotted, line width=0.5mm] plot (\x,{2-0.3*\x});
\draw[dotted, line width=0.3mm] (1,-0.5)--(1,3.5);
\draw[dotted, line width=0.3mm] (4,-0.5)--(4,3.5);
\draw(2.1,2) node{$\frac{n_1+n_2}{n_2}\eta_2-\frac{n_1}{n_2}x_1$};
\draw(0.8,-0.3) node{$\eta_0$};
\draw(1, 0) node[inner sep=1pt,fill,circle]{};
\draw(3.8,-0.3) node{$\eta_1$};
\draw(4, 0) node[inner sep=1pt,fill,circle]{};
\draw(0.3, 0) node[inner sep=1pt,fill,circle]{};
\draw(0.3, -0.3) node{$\overline X_1$};
\end{tikzpicture}
\caption{$\text{p-value} >1-\alpha_0$}
\end{subfigure} 
\caption{\hspace{0.2cm}Computation of the p-value for the trial performed in two stages.}
\label{rejection_region:pvalue}
\end{figure}

In terms of p-values, this design becomes:

\begin{equation}\tag{D2}
\begin{aligned}
&\text{Reject } \mathcal H_0 \text{ if } \text{p-value}\left(\overline X_1, \overline X_2\right)<\alpha, \text{ where}\\
& \text{p-value} \left(\overline X_1, \overline X_2\right)= \begin{cases}
1- \Phi\left(\overline{X_1} \sqrt{\frac{n_1}{2}}\right), \hspace{4cm} \overline X_1 \notin [\eta_0,\eta_1], \\ 
\int\limits_{\eta_0}^{\eta_1} \sqrt{\frac{n_1}{2}} \,\varphi\left( x_1 \sqrt{\frac{n_1}{2}} \right)\left(1 - \Phi\left(\frac{\sqrt{n_2}\overline X_2+\frac{n_1}{\sqrt{n_2}}\left(\overline X_1-x_1\right)}{\sqrt 2}\right)\right) dx_1 +\alpha_1 , \\ \hspace{7cm}\overline X_1 \in [\eta_0,\eta_1].
\end{cases}
\end{aligned}
\end{equation}

\par The computation of the p-value is complicated by the fact that the grey region in Fig.\ref{rejection_region:pvalue}(b) is not simple. The p-value is an integral of the joint density function $f_{\left(\overline{X_1},\overline{X_2}\right)}(x_1, x_2)$ under $\mathcal H_0$ over the grey area in Fig.\ref{rejection_region:pvalue}. If $\overline X_1 \in [\eta_0,\eta_1]$, we integrate above the line $\frac{n_1+n_2}{n_2}\overline{X}-\frac{n_1}{n_2}x_1 = \overline{X_2} + \frac{n_1}{n_2}\left(\overline{X_1}-x_1\right)$ and for the fixed $x_1$ the probability that $x_2$ is above this line equals $1 - \Phi\left(\frac{\sqrt{n_2}\overline X_2+\frac{n_1}{\sqrt{n_2}}\left(\overline X_1-x_1\right)}{\sqrt 2}\right)$.
Under $\mathcal H_0$ this p-value is uniformly distributed on the interval $(0,1)$, because for any two different realizations of $\left(\overline X_1, \overline X_2\right)$, one of the grey areas (see Fig.\ref{rejection_region:pvalue}) contains the other one.

\par The design $D$ is fully determined by the sample sizes in the first and second stages, $n_1$ and $n_2$, and the probabilities $\alpha_0$ and $\alpha_1$; the overall level of the procedure is $\alpha$. These values often satisfy $1-\alpha_0>\alpha>\alpha_1$. If this does not hold, the second stage is not needed. The thresholds $\eta_0, \, \eta_1, \,\eta_2$ can be computed as 
\begin{align}
&\eta_0 = z_{\alpha_0} \sqrt{\frac{2}{n_1}}, \hspace{0.5cm} \eta_1 = z_{1-\alpha_1}\sqrt{\frac{2}{n_1}},\nonumber\\
&\eta_2 \text{ such that } \nonumber \\
& \int_{\eta_0}^{\eta_1}\sqrt{\frac{n_1}{2}}\,\varphi\left(x_1\,\sqrt{\frac{n_1}{2}}\right)\left(1 - \Phi\left(\left(\frac{n_1+n_2}{n_2}\eta_2-\frac{n_1}{n_2}x_1\right)\sqrt{\frac{n_2}{2}}\right)\right) dx_1 +\alpha_1= \alpha.
\end{align}

Denote the maximum type II error rate for the two-stage design as $\beta_2^{se}(D, \alpha)$. To compute it, we use an expression for the probability of the false negative:

\begin{align}
\label{eq:true_beta}
&\beta_2(D,\alpha,\mu,p)=\mathbb P\left(\text{p-value}\left(\overline X_1,\overline X_2\right) \ge \alpha|\mu, p\right)\nonumber\\
&= \begin{cases}
\beta\left(n_1,z_{1-\alpha_1}\sqrt{\frac{2}{n_1}},\mu,p\right), \hspace{5cm}\alpha\notin(\alpha_1,1-\alpha_0), \\
\beta(n_1,\eta_0,\mu,p)+\int^{\eta_1}_{\eta_0} \beta_\eta(n_1,x_1,\mu,p)\beta\left(n_2,\frac{n_1+n_2}{n_2}\eta_2(D, \alpha)-\frac{n_1}{n_2}x_1,\mu,p\right)dx_1,\\
\hspace{9cm} \alpha\in(\alpha_1,1-\alpha_0).
\end{cases}
\end{align}
We introduce the case $\alpha\notin(\alpha_1,1-\alpha_0)$ in \eqref{eq:true_beta}, because we will need it later for the multicenter two-stage design.
Similar to the one-stage trial, the following holds.
\begin{lemma} For the region of strong effect defined as in Definition \ref{def:se}, the maximum type II error rate for the two-stage trial is
$$\beta_2^{se}(D,\alpha)=\underset{i=1, ..., s}{\max}\beta_2(D,\alpha,\mu_i,p_i).$$
\end{lemma}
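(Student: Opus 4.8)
The plan is to reduce the statement to two coordinatewise monotonicity facts — that $\beta_2(D,\alpha,\mu,p)$ is non-increasing in $\mu$ for fixed $p$, and non-increasing in $p$ for fixed $\mu$ — after which the maximization over $\cal E$ collapses to the corner points exactly as in Lemma~\ref{lemma:beta_max}. Indeed, on the horizontal strip $p\in[p_i,p_{i+1}]$ the region $\cal E$ imposes only $\mu\ge\mu_i$, so monotonicity in $\mu$ pushes the supremum to $\mu=\mu_i$, and monotonicity in $p$ then pushes it to $p=p_i$; taking the maximum over $i=1,\dots,s$ yields the stated formula. Thus the entire content of the lemma is the joint monotonicity of $\beta_2$, and the combinatorial reduction is identical to the one-stage case.

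Rather than differentiate the integral in \eqref{eq:true_beta} — whose first factor $\beta_\eta(n_1,x_1,\mu,p)$ is the density of $\overline{X_1}$ and is \emph{not} monotone in $\mu$ or $p$, so a term-by-term argument on the integrand fails — I would argue probabilistically. Recall $\beta_2(D,\alpha,\mu,p)=\mathbb P(\text{accept }\mathcal H_0\mid\mu,p)$, and work in the product space of the two independent stage means $(\overline{X_1},\overline{X_2})$, treating $\overline{X_2}$ as a fresh independent draw that simply does not enter the decision on the early-stopping event $\{\overline{X_1}<\eta_0\}$. The first step is to verify that the acceptance region is a \emph{lower set} in $(\overline{X_1},\overline{X_2})$: it is $\{\overline{X_1}<\eta_0\}$ together with $\{\eta_0\le\overline{X_1}\le\eta_1,\ \overline{X_2}\le g(\overline{X_1})\}$, where $g(x_1)=\tfrac{n_1+n_2}{n_2}\eta_2-\tfrac{n_1}{n_2}x_1$ is decreasing. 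Lowering either coordinate keeps a point in this set — for the middle branch one uses precisely that $g$ is decreasing, so shrinking $\overline{X_1}$ only relaxes the constraint on $\overline{X_2}$. Equivalently the rejection region is an upper set.

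The second step is to show that the family of laws of $(\overline{X_1},\overline{X_2})$ is stochastically increasing in $(\mu,p)$. Writing $\overline{X_j}=\tfrac{K_j}{n_j}\mu+\sqrt{2/n_j}\,W_j$ with $K_j\sim\mathrm{Bin}(n_j,p)$ and $W_j\sim\mathcal N(0,1)$ independent, each marginal increases stochastically in $\mu$ (the component means $\tfrac{k}{n_j}\mu$ are non-decreasing) and in $p$ (a monotone coupling of the binomials gives $K_j(p)\le K_j(p')$ a.s.\ for $p\le p'$, and $\mu>0$, which holds throughout $\cal E$, makes $\overline{X_j}$ increase with $K_j$). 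Because the two stages are independent, the joint law is then larger under $(\mu',p')\succeq(\mu,p)$ in the usual multivariate stochastic order. Combining the two steps via the defining property of that order — $\mathbb E[f(\overline{X_1},\overline{X_2})]$ is monotone for every coordinatewise non-decreasing $f$, applied to the indicator of the (upper) rejection region — shows the rejection probability increases, hence $\beta_2$ decreases, in each of $\mu$ and $p$. The degenerate branch $\alpha\notin(\alpha_1,1-\alpha_0)$ of \eqref{eq:true_beta} is a single-threshold event in $\overline{X_1}$ alone and is covered by the same argument (equivalently by the monotonicity already invoked for Lemma~\ref{lemma:beta_max}).

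The one point that needs genuine care — and the only real obstacle — is the interaction of the futility boundary with the lower-set property: the early-stopping region $\{\overline{X_1}<\eta_0\}$ accepts $\mathcal H_0$ for \emph{all} values of the notional $\overline{X_2}$, so one must confirm that gluing it to the middle branch still yields a true lower set and that the continuation boundary $g$ has the correct (decreasing) slope. Once this is checked, the joint monotonicity and hence the reduction to the finitely many corner points $(\mu_i,p_i)$ follow, mirroring the one-stage Lemma~\ref{lemma:beta_max}.
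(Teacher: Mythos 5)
Your argument is correct, but it establishes the key monotonicity by a genuinely different mechanism than the paper. The paper's Appendix proof differentiates the expression \eqref{eq:true_beta} for $\beta_2$ directly in $\mu$ and in $p$ and then integrates by parts to move the $\partial/\partial\eta$ off the density factor $\beta_\eta(n_1,x_1,\mu,p)$ --- exactly the non-monotone factor you flag as the obstacle --- after which every boundary term and every remaining integrand is a product of quantities whose signs are known from Lemma~\ref{lemma:beta_max} together with $\partial\beta/\partial\eta>0$. Your route replaces this calculus with a coupling and stochastic-order argument: the acceptance region is a lower set in $(\overline X_1,\overline X_2)$ because the continuation boundary $g$ is decreasing (your boundary check between the futility branch and the middle branch is the right one and does go through); the representation $\overline X_j=\tfrac{K_j}{n_j}\mu+\sqrt{2/n_j}\,W_j$ with a monotone coupling of the binomials gives stochastic monotonicity of each marginal in $(\mu,p)$, using $\mu>0$ on $\cal E$; and independence of the two stages (guaranteed in the paper by the separate control groups per stage) upgrades this to the multivariate usual stochastic order, whose defining property applied to the indicator of the upper rejection region yields the monotonicity of $\beta_2$. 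The reduction to the corner points $(\mu_i,p_i)$ is then identical in both treatments, as is the handling of the degenerate branch $\alpha\notin(\alpha_1,1-\alpha_0)$. What your approach buys: it avoids justifying differentiation under the integral sign and the integration-by-parts bookkeeping, and it extends verbatim to designs with more than two stages and to non-normal responses for which the location mixture remains stochastically monotone. What the paper's approach buys: explicit closed-form expressions for $\partial\beta_2/\partial\mu$ and $\partial\beta_2/\partial p$, which could be reused for sensitivity or optimization computations.
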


\subsection{Multicenter RCT designs}\label{many_groups_algorithm}
\par In this section, we generalize the designs proposed above to multicenter RCTs. We suggest first to conduct single center RCTs with identical designs in all centers in parallel, with each center having its own control group. Following this initial stage, the results are analyzed together in order to make conclusions about the existence of the subpopulation in each center. \par Combining the evidence across multiple centers requires p-values. For this reason, in the previous subsection, we introduced an equivalent designs formulation using p-values. We assume patient enrollment in all centres is done independently, and that all the p-values are independent. 
\par In the single center design one rejects $\mathcal H_0$ if the $\text{p-value}$ is below $\alpha$, but to control the family wise type I error rate in the multi-center context at the level $\alpha$, we need a multiple testing procedure, such as Hochberg's step-up procedure ~\cite{HOCHBERG:1988aa}. The whole design is then described as follows:
\begin{align*}
\bullet & \text{ Let } \text{p-value}_1,\text{p-value}_2,\ldots,\text{p-value}_M \text{ be the set of p-values}\\ &\text{ at the end of each study (see Eq.D2). Put the p-values in an } \\
& \text{ ascending order: }\text{ p-value}(1) \le \text{p-value}(2)\le\cdots\le\text{p-value}(M). \\
\bullet & \text{ Denote the index of the center corresponding to } \text{p-value}(k) \text{ as } c_{(k)}. \tag{D3} \\ 
\bullet & \text{ Given the thresholds, } \alpha(1)\le\alpha(2)\le\cdots\le\alpha(M), \text{ find the index } \\ & \text{K corresponding to the largest } k \text{ such that } \text{ p-value}(k)\le \alpha(k). \\ & \text{ In Hochberg's step-up procedure } \alpha(k) = \frac{\alpha}{M+1-k}. \\
\bullet &\text{ The centers } c_{(1)} \text{ through } c_{(K)} \text{ are then judged to have sensitive} \\ & \text{ subpopulations, and the null hypothesis is rejected for them.}
\end{align*}
There is a complication for a two-stage multicenter design.
\begin{figure}%
\begin{subfigure}{0.45\textwidth}
 \centering
 \resizebox{\textwidth}{0.9\height}{
\begin{tikzpicture}
\draw[thick,-] (0,0)--(7,0);
\draw(0, -0.4) node{$0$};
\draw(0.6, 1.8) node{$\alpha(1)$};
\draw(1.6, 1.8) node{$\alpha(2)$};
\draw(2.3, 2.8) node{$\alpha_1$};
\draw(2.6, 1.8) node{$\alpha(3)$};
\draw(3.6, 1.8) node{$\alpha(4)$};
\draw(3.8, 2.8) node{$\alpha$};
\draw(5.8, 2.8) node{$1-\alpha_0$};
\draw(7, -0.4) node{$1$};
\draw (0, 2pt) -- (0, -2pt);
\draw (7, 2pt) -- (7, -2pt);
\draw (1, 3.5) -- (1, -0.2);
\draw (2, 3.5) -- (2, -0.2);
\draw (3, 3.5) -- (3, -0.2);
\draw (4, 3.5) -- (4, -0.2);
\draw [dotted] (2.5,3.5)--(2.5, -0.2);
 \draw [dotted] (6,3.5)--(6, -0.2);
\draw(1, 1.5) node[inner sep=2pt,fill,circle]{};
\draw(2, 1.5) node[inner sep=2pt,fill,circle]{};
\draw(4, 2.5) node[inner sep=2pt,fill,diamond]{};
\draw(2.5, 2.5) node[inner sep=2pt,fill,diamond]{};
\draw(3, 1.5) node[inner sep=2pt,fill,circle]{};
\draw(4, 1.5) node[inner sep=2pt,fill,circle]{};
\draw(6, 2.5) node[inner sep=2pt,fill,diamond]{};
\draw(0.5, 0) node[inner sep=3pt,fill = black!30!green]{};
\draw(2.2, 0) node[inner sep=3pt,fill = black!30!green]{};
\draw(2.8, 0) node[inner sep=3pt,fill= black!30!green]{};
\draw(4.5, 0) node[inner sep=3pt,fill= black!30!red]{};
\draw(0.5, -0.4) node{$c_{(1)}$};
\draw(2.2, -0.4) node{$c_{(2)}$};
\draw(2.8, -0.4) node{$c_{(3)}$};
\draw(4.5, -0.4) node{$c_{(4)}$};
\end{tikzpicture}}
\end{subfigure}\hspace{0.5cm}
\begin{subfigure}{0.45\textwidth}
 \centering
 \resizebox{\textwidth}{0.9\height}{
\begin{tikzpicture}
\draw[thick,-] (0,0)--(7,0);
\draw(0, -0.4) node{$0$};
\draw(0.6, 1.8) node{$\alpha(1)$};
\draw(1.6, 1.8) node{$\alpha(2)$};
\draw(2.3, 2.8) node{$\alpha_1$};
\draw(2.6, 1.8) node{$\alpha(3)$};
\draw(3.6, 1.8) node{$\alpha(4)$};
\draw(3.8, 2.8) node{$\alpha$};
\draw(5.8, 2.8) node{$1-\alpha_0$};
\draw(7, -0.4) node{$1$};
\draw (0, 2pt) -- (0, -2pt);
\draw (7, 2pt) -- (7, -2pt);
\draw (1, 3.5) -- (1, -0.2);
\draw (2, 3.5) -- (2, -0.2);
\draw (3, 3.5) -- (3, -0.2);
\draw (4, 3.5) -- (4, -0.2);
\draw [dotted] (2.5,3.5)--(2.5, -0.2);
 \draw [dotted] (6,3.5)--(6, -0.2);
\draw(1, 1.5) node[inner sep=2pt,fill,circle]{};
\draw(2, 1.5) node[inner sep=2pt,fill,circle]{};
\draw(4, 2.5) node[inner sep=2pt,fill,diamond]{};
\draw(2.5, 2.5) node[inner sep=2pt,fill,diamond]{};
\draw(3, 1.5) node[inner sep=2pt,fill,circle]{};
\draw(4, 1.5) node[inner sep=2pt,fill,circle]{};
\draw(6, 2.5) node[inner sep=2pt,fill,diamond]{};
\draw(0.5, 0) node[inner sep=3pt,fill = black!30!green]{};
\draw(2.2, 0) node[inner sep=3pt,fill = black!30!red]{};
\draw(3.5, 0) node[inner sep=3pt,fill= black!30!red]{};
\draw(4.5, 0) node[inner sep=3pt,fill= black!30!red]{};
\draw(0.5, -0.4) node{$c_{(1)}$};
\draw(2.2, -0.4) node{$c_{(2)}$};
\draw(3.5, -0.4) node{$c_{(3)}$};
\draw(4.5, -0.4) node{$c_{(4)}$};
\end{tikzpicture}}
\end{subfigure}\vspace{0.5cm}
\caption{\hspace{0.2cm}The figure shows two examples. The green squares corresponds to the centers for which $\mathcal H_0$ was rejected and the red ones to the centers where it was not rejected. In both $c_{(1)}$ and $c_{(2)}$ the study was terminated after the first stage (their p-values are less than $\alpha_1$). However, the decision to reject the null for $c_{(2)}$ depends on the results in the centers $c_{(3)}$ and $c_{(4)}$. On the left-hand side, the null is rejected at $c_{(2)}$ and $c_{(3)}$. On the right-hand side, decisions about $c_{(2)}, c_{(3)}, c_{(4)}$ are all non-rejections.}
\label{allocation_multicenters}
\end{figure}
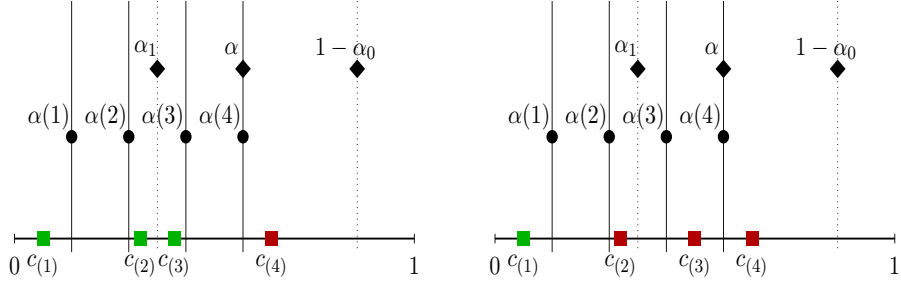
In the case of a single center, if the $\text{p-value}<\alpha_1$, then $\mathcal H_0$ is rejected after the first stage. When more than one center is involved, the condition $\text{p-value}_i<\alpha_1$, which only occurs when we stop after the first stage, does not necessarily mean that for this center the null is rejected. If the design parameter $\alpha_1$ is such that $\alpha(1)>\alpha_1$, the rejection always occurs. Otherwise, the rejection depends on the other p-values or, more precisely, on their arrangement with regard to $\alpha(k)$ which is only known at the completion of the second stage.

Consider two p-value realisations shown in Fig.\ref{allocation_multicenters}. There are two centers, $c_{(1)}, c_{(2)}$ where the trial was stopped after the first stage, and their p-values are less than $\alpha_1$. In two other centers the second stage was conducted and their p-values are in the interval $(\alpha_1, 1-\alpha_0)$. For $c_{(1)}$ the null is rejected after the first stage in both situations, because $\text{p-value}_1<\alpha(1)$, but the decision about $c_{(2)}$ depends on the results in the centers where the second stage is conducted. To summarize, if for some center after the first stage $\alpha(k)<\text{p-value}(k)<\alpha_1$, the final decision about this center might depend on the results from the centers where the trial continues to the second stage.  
 
Recall that the type II error is defined as "at least $m$ false nulls among $M_1$ centers with strong effect are not rejected, $m\le M_1\le M$". Denote the maximum type II error rates as $\beta_{\text{fw}}^{se}(M_1, m)$. It is complicated to calculate $\beta_{\text{fw}}^{se}(M_1, m)$ exactly for arbitrary couples $(M_1, \, m)$, but the upper bound can be obtained rather straightforwardly. To do this, we introduce auxiliary values
\begin{align}
\beta^{se}_j = \begin{cases} \beta^{se}(n, \alpha(j)), &\text{one-stage design},\\
\beta^{se}_2(D, \alpha(j)), &\text{two-stage design}.\end{cases}
\label{ind:errors}
\end{align}

\begin{lemma}\label{lemma:multi_beta_max}The following inequality always holds:
\begin{equation}
1-\beta^{se}_{\text{fw}}(M_1, m) \ge \left(1-\beta_{M_1+1-m}^{se}\right)^{M_1+1-m}.
\end{equation}
If $M_1=M$ and $m=1$, equality is achieved.
\end{lemma}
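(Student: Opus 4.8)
The plan is to recast the inequality as a lower bound on power. Writing $r := M_1 + 1 - m$, a type~II error (at least $m$ of the $M_1$ strong-effect centers left unrejected) fails to occur precisely when at least $r$ of those centers are rejected, so $1 - \beta^{se}_{\text{fw}}(M_1,m)$ is the worst-case probability, minimized over the parameters of the $M_1$ strong centers with each ranging in $\mathcal E$, that Hochberg's procedure rejects at least $r$ of them. I would lower-bound this probability by exhibiting a convenient sufficient event that factorizes across centers.

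The crux is a sufficient condition for rejection under the step-up rule, and this is where the main obstacle lies: the rank of a strong center's p-value, and hence whether it is rejected, depends on \emph{all} $M$ p-values, including those of the null centers. I would isolate a set $S$ of $r$ strong centers whose p-values all fall below the threshold $\alpha(r) = \frac{\alpha}{M+1-r}$ and argue that all of $S$ is then rejected, irrespective of the remaining centers. Let $p^\ast$ be the largest p-value in $S$ and $R^\ast$ its rank among all $M$ centers; since the $r$ members of $S$ all lie at or below $p^\ast$, we have $R^\ast \ge r$. Because the Hochberg thresholds $\alpha(k)$ are increasing in $k$, we get $p^\ast \le \alpha(r) \le \alpha(R^\ast)$, i.e. $p_{(R^\ast)} \le \alpha(R^\ast)$, so the step-up index $K = \max\{k : p_{(k)} \le \alpha(k)\}$ satisfies $K \ge R^\ast$. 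As a step-up procedure rejects every center of rank at most $K$, and each member of $S$ has rank at most $R^\ast \le K$, all of $S$ is rejected. In particular, ``at least $r$ strong centers have p-value $\le \alpha(r)$'' implies ``at least $r$ strong centers are rejected.''

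It then remains to bound the probability of this sufficient event from below by the claimed product. Fix any $r$ of the $M_1$ strong centers; by independence of the centers' p-values the probability that all $r$ of them have p-value $\le \alpha(r)$ equals $\prod q_j$, where $q_j = 1 - \beta(n, z_{1-\alpha(r)}\sqrt{2/n}, \mu_j, p_j)$ is the single-center power at level $\alpha(r)$. Each factor is at least $1 - \beta^{se}(n,\alpha(r)) = 1 - \beta^{se}_r$ for every $(\mu_j,p_j) \in \mathcal E$, so the product is at least $(1-\beta^{se}_r)^r$ uniformly over the strong centers' parameters; since ``all $r$ chosen centers reject'' implies ``at least $r$ reject'' and this bound survives the minimization over $\mathcal E$, the inequality $1 - \beta^{se}_{\text{fw}}(M_1,m) \ge (1-\beta^{se}_{M_1+1-m})^{M_1+1-m}$ follows. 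The identical argument applies verbatim to the two-stage design, using the two-stage p-value and $\beta^{se}_2(D,\alpha(j))$ in place of their one-stage analogues, since only the uniformity of the p-values under $\mathcal H_0$ and the step-up rule are used.

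For the equality case $M_1 = M$, $m = 1$ (so $r = M$ and $\alpha(M) = \alpha$), there are no null centers, and all $M$ centers are rejected if and only if $K = M$, i.e. $p_{(M)} \le \alpha(M) = \alpha$, which holds exactly when every p-value is $\le \alpha$. Thus the sufficient event of the previous step --- all $M$ strong centers rejecting at level $\alpha$ --- coincides with the full rejection event, so the inequality is tight; and because this event factorizes over independent centers sharing the common worst case in $\mathcal E$, the worst-case probability equals $(1-\beta^{se}(n,\alpha))^M = (1-\beta^{se}_M)^M$, giving equality.
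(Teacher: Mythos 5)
Your proof is correct and follows essentially the same route as the paper's: you lower-bound the power by the probability of the sufficient event that at least $r = M_1+1-m$ strong-effect centers have p-value below $\alpha(r)$ (the paper phrases this via the order statistic $\text{p-value}(\tilde k)$ of the $m$-th largest strong-center p-value, which is the same event), then bound that probability from below by fixing a subset of size $r$ and using independence, and finally identify the equality case $M_1=M$, $m=1$ in the same way. Your explicit rank argument for why the step-up rule must reject all of $S$ fills in a step the paper only asserts, but the decomposition and key ideas are identical.
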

From this lemma, we conclude that $\beta^{se}_{\text{fw}}(M_1, m)\le 1-\left(1-\beta_{M_1+1-m}^{se}\right)^{M_1+1-m}$ and $\beta^{se}_{\text{fw}}(M, 1)=1-\left(1-\beta_{M}^{se}\right)^{M}$.

\section{Planning the trial}\label{Sec:Planning}
In this section we present a framework which helps the user to choose the design parameters in order to control the type II error rate below some given level $\beta_{\max}$ or the power above $(1-\beta_{\max})$. For the multi-centre trial we suggest to control $\beta^{se}_{\text{fw}}(M,1)$. Here we assume $\beta_{\max} < 0.5$ for the single center RCT and $\beta_{\max} < 1- 0.5^M$ for the multicenter RCT. At the beginning the user should carefully define the region of strong effect as in Definition \ref{def:se}. Next, the user chooses $\beta_{\max}$ and the design parameters. We also provide additional information that may influence the user's choice of $\beta_{\max}$. In the case of changing $\beta_{\max}$, one should recalculate the design parameters. To illustrate the planning procedure, we use the region of strong effect as depicted in Fig.~\ref{fig:n_1stage} and $\beta_{\max} = 0.2$.  

\subsection{Planning a single center RCT}\label{App1:approximation_clt}
\subsubsection{One-stage design: choosing the sample size}
The planning of a single center trial proceeds by choosing $\alpha$ and computing the sample size that ensures the desired minimal power. After $\alpha$ is chosen (here and elsewhere $\alpha = 0.05$), $n$ is computed as
\begin{align}
n =  \min\left( n \,:\, \beta^{se}(n, \alpha) \le \beta_{\max}\right).
\label{eq:true_n}
\end{align}
To compute $\beta^{se}(n,\alpha)$, we use an asymptotic approximation of $\beta(n, \eta, \mu,p)$ based on the central limit theorem. Notice that $Y^T_1, \ldots, Y^T_n$ are i.i.d. random variables with $\mathbb E\left(Y_i^T\right) = \mu p$ and $\Var\left(Y^T_i\right) =  1+(1-p)p\mu^2$. Hence, 
$$\overline{Y^T} \sim \mathcal N\left(\mu p, \frac{1+(1-p)p\mu^2}{n}\right), \hspace{0.2cm}\overline X \sim \mathcal N\left(\mu p, \frac{1+(1-p)p\mu^2}{n}+\frac{1}{n}\right).$$
With Barry--Essen's upper bound (~\cite{Berry:1941aa, essen}) on the absolute deviation of the probability of a false negative from its normal approximation, we have
\begin{equation}\label{eq:beta_approx}
\begin{aligned}
& \Biggr\rvert \beta(n,\eta,\mu,p) - \Phi\left(\frac{ \sqrt{n}(\eta-\mu p)}{\sqrt{2+(1-p)p \mu^2}}\right)\Biggr\rvert \le \frac{C}{\sqrt n}.
\end{aligned}
\end{equation}
We use this normal approximation in the following computations, because the real difference in \eqref{eq:beta_approx} is negligible even for small sample sizes. From Lemma \ref{lemma:beta_max}, 
$$\beta^{se}(n,\alpha) \approx \underset{i=1,\ldots,s}{\max}\Phi\left(\frac{\sqrt 2z_{1-\alpha} -\sqrt{n}\mu_i p_i}{\sqrt{2+(1-p_i)p_i \mu_i^2}}\right)$$
and 
\begin{align}\label{eq:n_min_app1}
n \approx \underset{i = 1, \ldots, s}{\max}\left(\frac{\sqrt 2 z_{1-\alpha}+z_{1-\beta_{\max}}\sqrt{2+(1-p_i)p_i \mu^2_i}}{\mu_i p_i}\right)^2.
\end{align}
\subsubsection*{Example} Let the region $\cal E$ be as in Fig.~\ref{fig:n_1stage} and let the maximum type II error rate be $0.2$. The minimum sample size for the trial with these parameters from \eqref{eq:n_min_app1} is $n \approx 85.3$, whereas the exact value from \eqref{eq:true_n} is $86$. For $n=86$ and $\alpha = 0.05$, the rejection threshold $\eta = 0.251$ and it is straightforward to conduct the trial (D1).


\begin{figure}
\centering
\begin{tikzpicture}[scale = 4]
\draw[thick,->] (0,0)--(2.2,0) node[right]{$\mu$};
\draw[thick,->] (0,0)--(0,1.1) node[above]{$p$};
\draw[draw = white,fill=gray!20!white] (0.7,1)--(0.7, 0.6) -- (1,0.6) -- (1, 0.4)--(2, 0.4)--(2,0.2) -- (2.2, 0.2)--(2.2, 1) --cycle;
\draw[line width=0.3mm] (0.7,1)--(0.7, 0.6) -- (1,0.6) -- (1, 0.4)--(2, 0.4)--(2,0.2) -- (2.2, 0.2);
\draw(1.5, 0.7) node{${\cal E}$};
 \foreach \x/\xtext in {0, 0.2, 0.4, 0.6, 0.8, 1, 1.2, 1.4, 1.6, 1.8, 2} 
   \draw (\x cm,0.5pt) -- (\x cm,-0.5pt) node[anchor=north,fill=white] {$\xtext$};
 \foreach \y/\ytext in {0, 0.2, 0.4, 0.6, 0.8, 1} 
   \draw (0.5pt,\y cm) -- (-0.5pt,\y cm) node[anchor=east,fill=white] {$\ytext$};
\end{tikzpicture}
\caption{\hspace{0.2cm}Region of strong effect, characterised by $\vec \mu = (2, 1, 0.7),\,  \vec p = (0.2,0.4,0.6)$.}
\label{fig:n_1stage}
\end{figure}
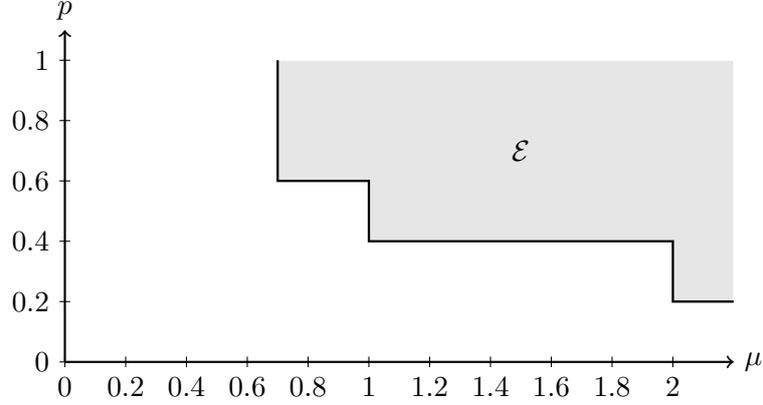

\subsubsection{Two-stage design}
The design is determined by $D = \{n_1, \, \alpha_0, \, \alpha_1, \, n_2\}$. After $\alpha$ is chosen, the sample size at the second stage is computed as
\begin{align}\label{eq:n2_min^app}
n_2 =  \min\left(n\, : \, \beta^{se}_2(D, \alpha) \le \beta_{\max}\right).
\end{align}
To approximate $n_2$, we use the following approximation for the type two error for the two-stage design:
\begin{align*}
&\beta_2(D, \alpha, \mu, p)\\
& \approx \Phi\left(\frac{\sqrt{n_1}(\eta_0 - \mu p)}{\sqrt{2+(1-p)p\mu^2}}\right) + \int_{\eta_0}^{\eta_1}\frac{\varphi\left(\frac{\sqrt n_1(x_1-\mu p)}{\sqrt{2+(1-p)p\mu^2}}\right)}{\sqrt{\frac{2+(1-p)p\mu^2}{n_1}}}\Phi\left(\frac{\frac{n_1+n_2}{n_2}\eta_2(D, \alpha) - \frac{n_1}{n_2}x_1-\mu p}{\sqrt{\frac{2+(1-p)p\mu^2}{n_2}}}\right)dx_1.
\end{align*}
We next compute $n_2$ numerically. Note that it is possible that no design fulfills the condition on $ \beta_{\max}$. In the following we will address this problem.

For planning purposes we next consider the expected sample size under the null,
\begin{equation}\label{q0}
q_0(D)=n_1+(1-\alpha_0-\alpha_1)n_2
\end{equation}
and the analogous formula for the maximum expected sample size under the alternative is
\begin{equation}\label{q1max}
q_1(D)=n_1+\underset{\mu>0, \, p>0}{\max}\left(\beta\left(n_1,\frac{\sqrt 2\, z_{1-\alpha_1}}{\sqrt{n_1}},\mu,p\right)-\beta\left(n_1,\frac{\sqrt 2\,z_{\alpha_0}}{\sqrt{n_1}},\mu,p\right)\right)n_2.
\end{equation}
In this paper we put $\alpha_0>0.5$ in order to stop the trial early in the case of the absence of the effect. With this choice, both $z_{1-\alpha_1}$ and $z_{\alpha_0}$ are positive.
\begin{lemma}\label{maxq1}
If $0<z_{\alpha_0}<z_{1-\alpha_1}$, the maximum probability of conducting the second stage in \eqref{q1max} approximately satisfies
\begin{equation}
\underset{\mu>0, \, p>0}{\max}\left(\beta\left(n_1,\frac{\sqrt 2\, z_{1-\alpha_1}}{\sqrt{n_1}},\mu,p\right)-\beta\left(n_1,\frac{\sqrt 2\,z_{\alpha_0}}{\sqrt{n_1}},\mu,p\right)\right) \approx  2\Phi\left(\frac{z_{1-\alpha_1}-z_{\alpha_0}}{2}\right)-1.
\end{equation}
The maximum is achieved for $\mu = \frac{z_{1-\alpha_1}+z_{\alpha_0}}{\sqrt{2 n_1}},\,p=1$.
\end{lemma}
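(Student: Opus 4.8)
The plan is to work throughout with the central-limit approximation \eqref{eq:beta_approx}, under which, for a threshold having normal quantile $z$,
$$\beta\!\left(n_1,\tfrac{\sqrt2\,z}{\sqrt{n_1}},\mu,p\right)\approx\Phi\!\left(\frac{\sqrt2\,z-\sqrt{n_1}\,\mu p}{\sqrt{2+(1-p)p\mu^2}}\right),$$
so that the objective in \eqref{q1max} becomes, to the same order of approximation, a difference of two such $\Phi$-terms. The natural first move is to decouple the two ways in which $(\mu,p)$ enters this expression. I would introduce the location $t=\sqrt{n_1}\,\mu p$ and the scale $D=\sqrt{2+(1-p)p\mu^2}$, together with the shorthand $a=\sqrt2\,z_{1-\alpha_1}$ and $b=\sqrt2\,z_{\alpha_0}$ (so $a>b>0$ by hypothesis). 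With this notation the quantity to be maximised is
$$h(t,D)=\Phi\!\left(\frac{a-t}{D}\right)-\Phi\!\left(\frac{b-t}{D}\right),$$
which is strictly positive because $a>b$.

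First I would carry out the inner maximisation over $t$ for a fixed scale $D$. Differentiating gives $\partial_t h = D^{-1}\big[\varphi((b-t)/D)-\varphi((a-t)/D)\big]$, and since $\varphi$ is even and $a\neq b$ the only stationary point solves $(b-t)/D=-(a-t)/D$, i.e. $t=\tfrac12(a+b)$. That this is the global maximiser, not merely a critical point, follows from the boundary behaviour: $h\to0$ as $t\to\pm\infty$ while $h>0$ throughout, so the unique interior critical point must be the maximum. Substituting $t=\tfrac12(a+b)$ makes the two arguments antisymmetric, namely $\pm\tfrac{a-b}{2D}$, and using $\Phi(-x)=1-\Phi(x)$ collapses the expression to $2\Phi\!\big(\tfrac{a-b}{2D}\big)-1$.

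It then remains to optimise over the scale. Because $a-b>0$, the map $D\mapsto 2\Phi\!\big(\tfrac{a-b}{2D}\big)-1$ is strictly decreasing, so the maximum is attained at the smallest admissible $D$. Since $(1-p)p\mu^2\ge0$ for $p\in(0,1]$, we have $D\ge\sqrt2$, with equality forced by $\mu>0$ only when $p=1$. Chaining the two bounds yields, for every $(\mu,p)$,
$$h\le 2\Phi\!\left(\frac{a-b}{2D}\right)-1\le 2\Phi\!\left(\frac{a-b}{2\sqrt2}\right)-1=2\Phi\!\left(\frac{z_{1-\alpha_1}-z_{\alpha_0}}{2}\right)-1,$$
which is exactly the claimed value. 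The final step is to confirm attainability: at $p=1$ we have $D=\sqrt2$, and the constraint $t=\tfrac12(a+b)$ reads $\sqrt{n_1}\,\mu=\tfrac{\sqrt2}{2}(z_{1-\alpha_1}+z_{\alpha_0})$, i.e. $\mu=\frac{z_{1-\alpha_1}+z_{\alpha_0}}{\sqrt{2n_1}}>0$, recovering the stated maximiser.

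The one point that deserves care, and which I would flag as the main obstacle, is the legitimacy of treating $t$ and $D$ as independent variables: they are coupled through $(\mu,p)$, so the two successive optimisations are valid only because the bound $2\Phi(\tfrac{a-b}{2D})-1$ holds \emph{pointwise} in $(\mu,p)$ and the further bound $D\ge\sqrt2$ also holds pointwise. The displayed chain is therefore a genuine uniform upper bound rather than an optimisation over a spuriously enlarged domain. Attainability at $p=1$, where the scale decouples from the location and $\mu$ remains free to hit $t=\tfrac12(a+b)$, closes the gap and shows the bound is sharp.
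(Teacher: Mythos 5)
Your proof is correct and follows essentially the same route as the paper's: approximate $\beta$ by the normal CDF, bound the difference pointwise by replacing the location $\sqrt{n_1}\mu p$ with the midpoint $\tfrac{\sqrt2}{2}(z_{1-\alpha_1}+z_{\alpha_0})$, then bound the scale below by $\sqrt2$, and check attainability at $p=1$. Your write-up is somewhat more careful — it justifies the midpoint step by calculus and correctly carries the factor $2$ in the denominator that the paper's intermediate display drops — but the argument is the same.
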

From Lemma \ref{maxq1}, it follows that $q_1(D) \approx n_1 + \left(2\Phi\left(\frac{z_{1-\alpha_1}-z_{\alpha_0}}{2}\right)-1\right)n_2$. 
\par We let the user choose $n_1$ and $\alpha_0$, but we suggest setting $\alpha_1$ such that it minimizes $q_1(D)$ for given $n_1,\,\alpha_0$. This leads to the choice $\alpha_1  =  \underset{\alpha_1}{\argmin\,} q_1(D)$.
Once $n_1,\, \alpha_0,\, \alpha_1$ have been fixed, all the other quantities including $n_2$, $q_0$, $q_1$ can be computed. 

In the following we discuss guidance for the user on the choice of $n_1$ and $\alpha_0$.
We suppose that the choice of $n_1, \,\alpha_0$ is based on $q_{0}, \, q_{1}$ and  $n_2$. We note that the design does not exist for every pair $(n_1, \alpha_0)$. With $0.5<\alpha_0<1-\alpha$ and  \eqref{eq:true_beta}, we have $\beta^{se}(n_1, 1-\alpha_0) < \beta^{se}_2(D, \alpha) \le \beta_{\max}$, and with an approximation for $\beta^{se}(n_1, 1-\alpha_0)$, mentioned above,
\begin{equation}\label{eq:alpha0_const}
\alpha_0 < \underset{i=1, ..., s}{\min}{\Phi\left(\sqrt{\frac{n_1}{2}}\, \mu_i p_i-z_{1-\beta_{\max}}\sqrt{1+\frac{(1-p_i)p_i}{2}\mu^2_i}\right)}.
\end{equation}
This in turn implies that
\begin{align}\label{eq:n1_const}
n_1>\left( \underset{i=1, ..., s}{\max}\frac{z_{1-\beta_{\max}}\sqrt{2+(1-p_i)p_i \mu^2_i}}{\mu_i p_i}\right)^2.
\end{align}

\begin{figure}
\centering
\includegraphics[width = 10cm]{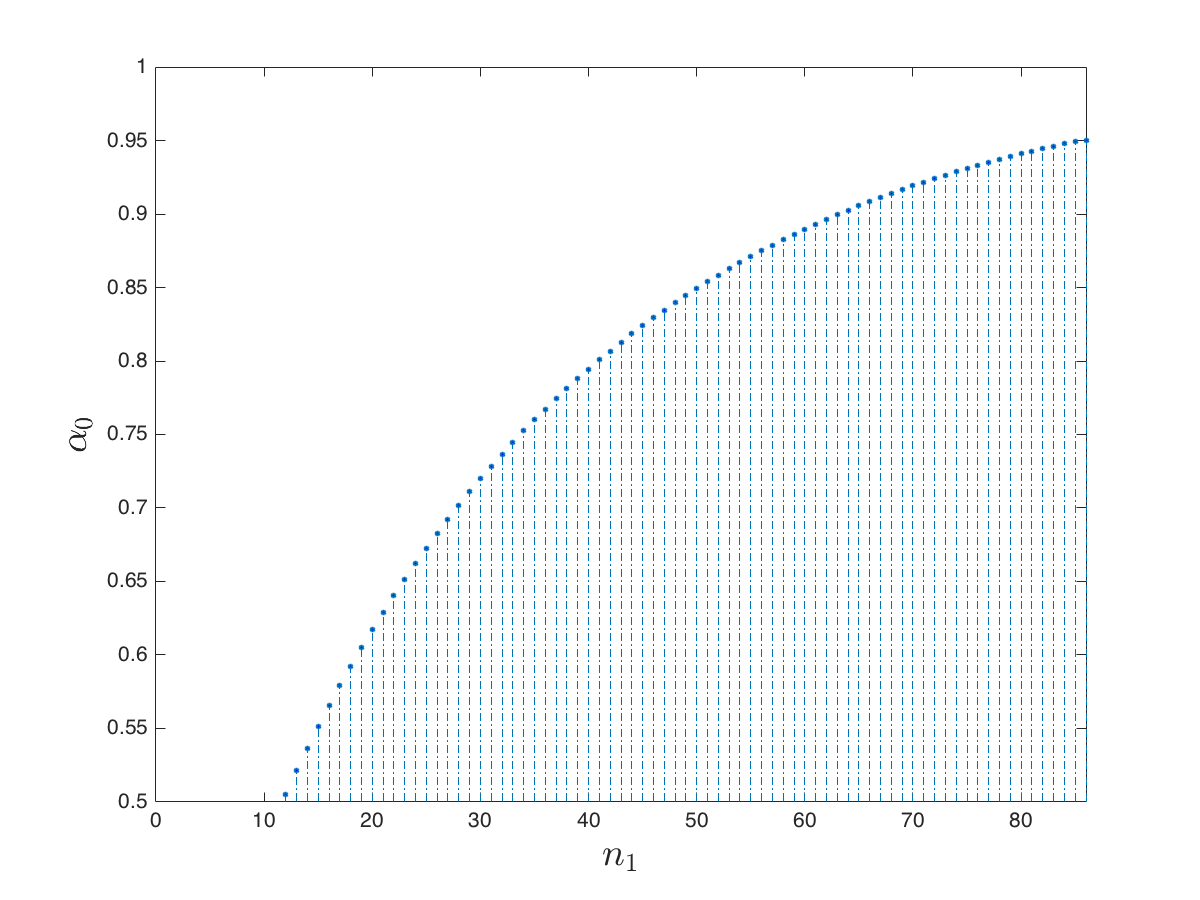}
\caption{\hspace{0.2cm}Possible pairs of $(n_1, \alpha_0)$ for the two-stage design. The parameters are $\alpha = 0.05, \, \beta_{\max} = 0.2, \, n = 86$; \hspace{0.1cm} $12\le n_1\le 86$.}
\label{fig:choice_alpha_n1}
\end{figure}


\begin{figure} 
\begin{center}
\begin{subfigure}{.6\textwidth}
\includegraphics[width = \textwidth]{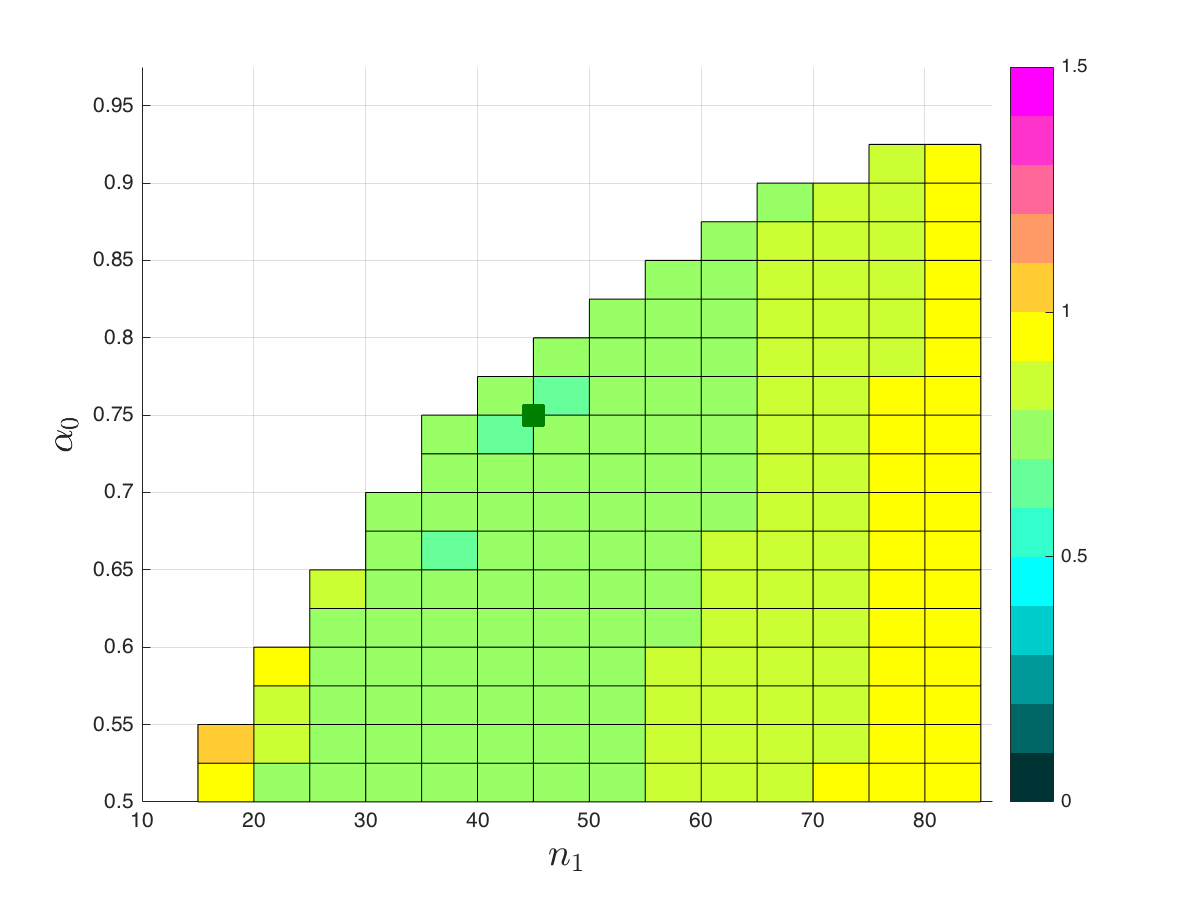}
\caption{$\frac{q_{0}}{n}$. $\min q_{0}  = 60$ for $n_1 = 45, \, \alpha_0 = 0.75$.}
\label{char1}
\end{subfigure}
\begin{subfigure}{.6\textwidth}
\includegraphics[width = \textwidth]{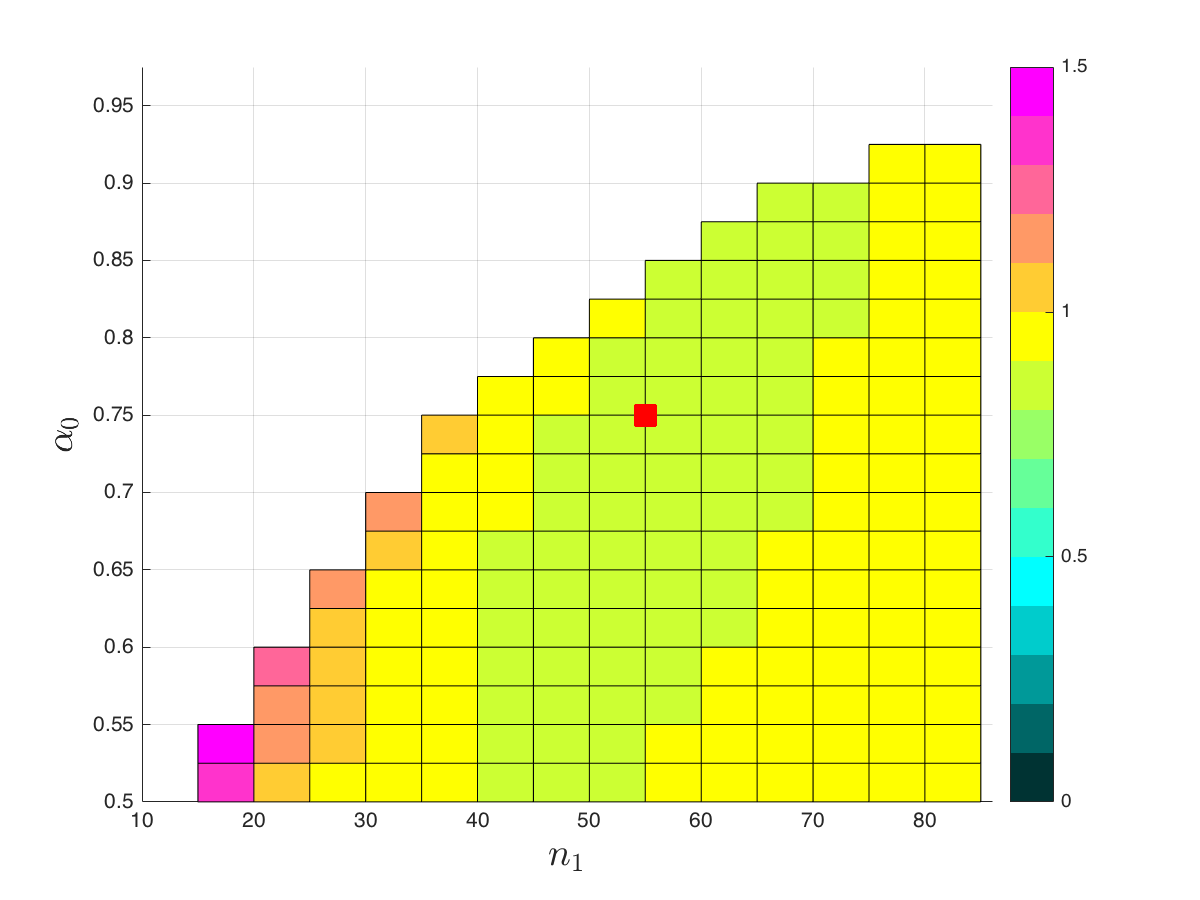}
\caption{$\frac{q_{1}}{n}$. $\min q_{1}  =75$ for $n_1 = 55, \, \alpha_0 = 0.75$}
\label{char2}
\end{subfigure}
\end{center}
\begin{center}
\begin{subfigure}{.6\textwidth}
\includegraphics[width = \textwidth]{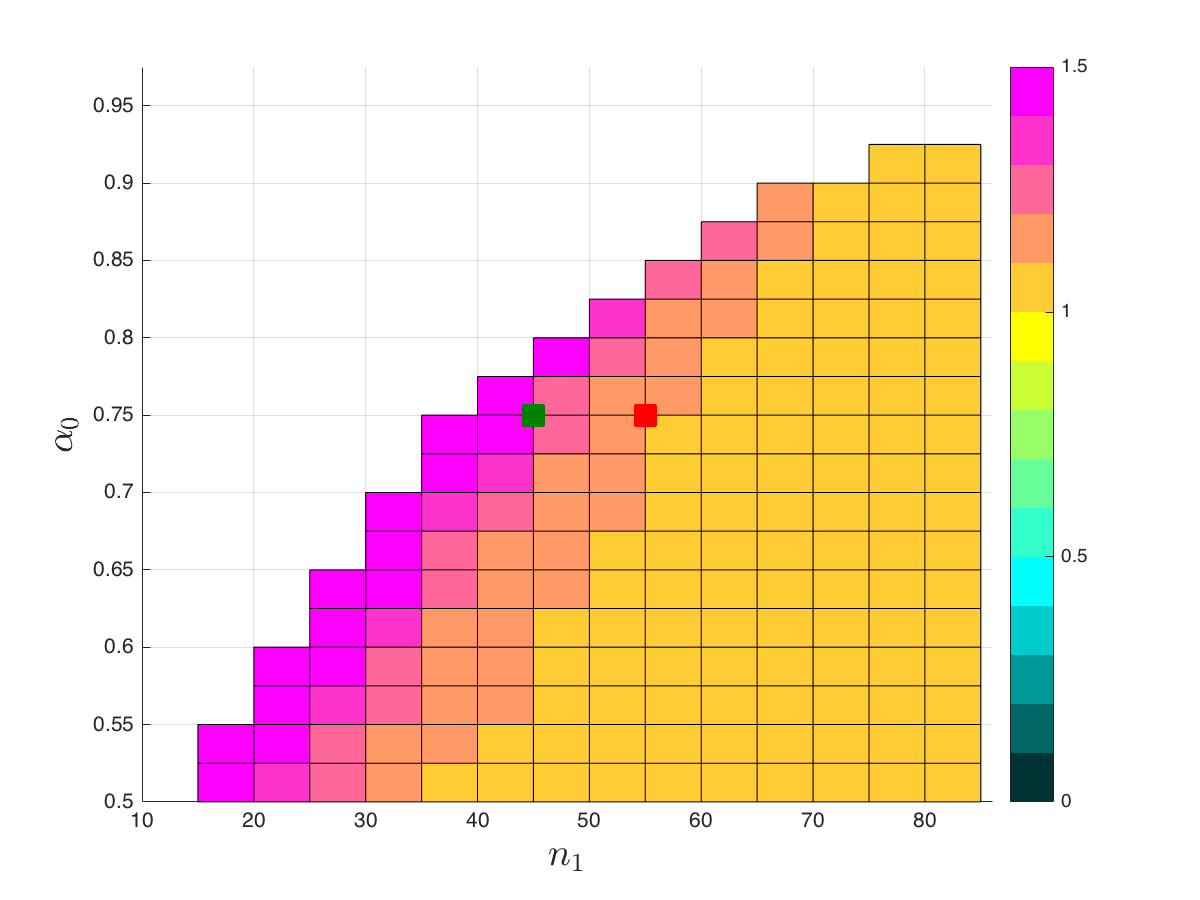}
\caption{$\frac{n_1+n_2}{n}$}
\label{char3}
\end{subfigure}
\end{center}
\caption{\hspace{0.2cm}The diagnostic plot for choosing the design. The parameters are $\alpha = 0.05, \, \beta_{\max} = 0.2, \, \vec \mu= (2, 1, 0.7), \, \vec p = (0.2,0.4,0.6), \, n = 86$. 
The color coding indicates the percentage gain in sample size compared to the one stage design. Small values below one indicate a gain, and values above 1 indicate a loss. The green square in (a) is a pair of parameters that attains the minimum of $q_0$, whereas the red square in (b) attains the minimum for $q_1$. These points are also shown in (c). For the choice $n_1 = 55, \, \alpha_0 = 0.7, \, \alpha_1 = 0.026, \, n_2 = 38, \, \eta_0 = 0.10, \, \eta_1 = 0.37, \, \eta_2 = 0.26$, we have $q_{0} = 66,\, q_{1} = 75, \, n_1+n_2 = 93$.}
\label{choice_n1_alpha0}
\end{figure}

\begin{figure}
\centering
\includegraphics[width =  9cm]{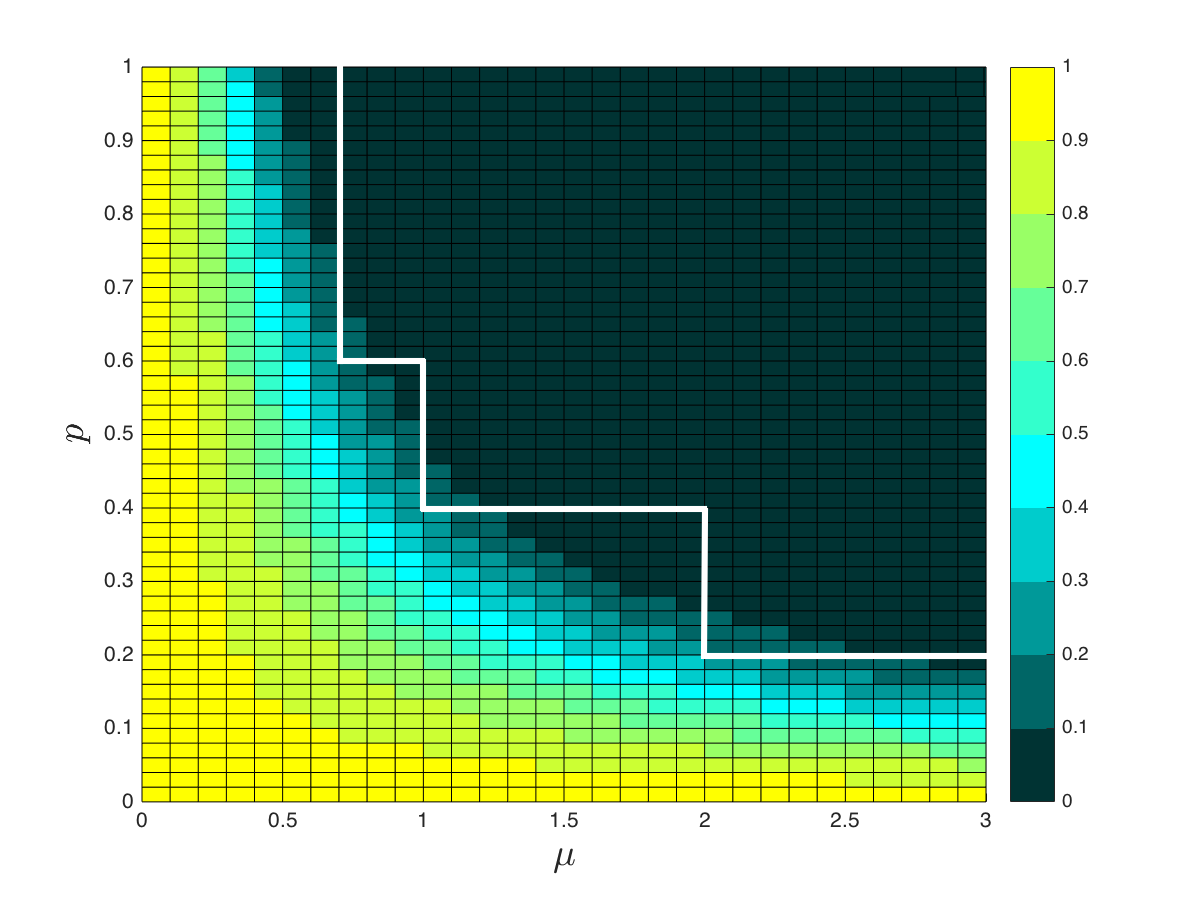}
\caption{\hspace{0.2cm}Probability of a false negative computed for $n_1 = 55, \, \alpha_0 = 0.7, \, \alpha_1 = 0.026, \, n_2 = 38$.}
\label{char5}
\end{figure}
\begin{figure}
\begin{center}
\begin{subfigure}{.7\textwidth}
\includegraphics[width = \textwidth]{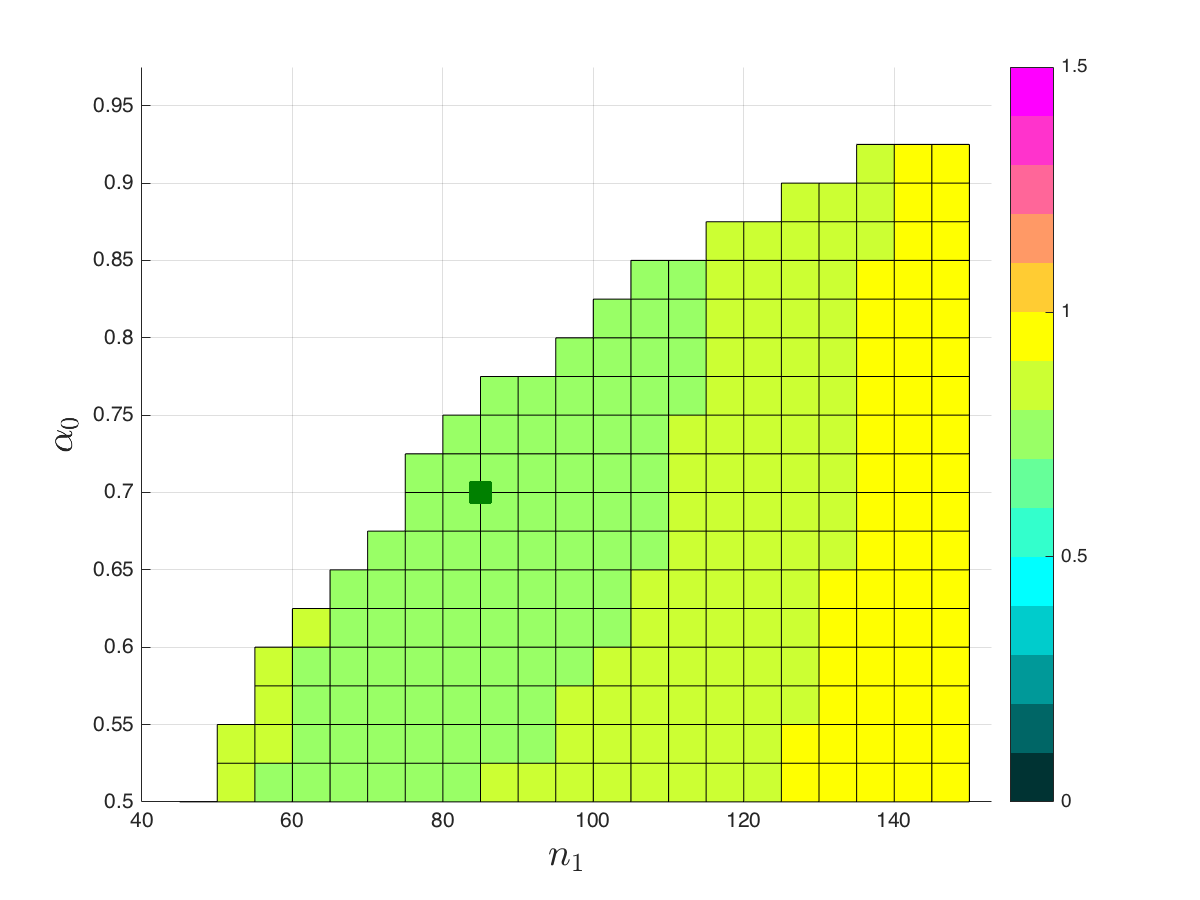}
\caption{$\frac{q_{0}}{n}$. $\min q_{0}  = 113$ for $n_1 = 85, \, \alpha_0 = 0.7$.}
\end{subfigure}
\begin{subfigure}{.7\textwidth}
\includegraphics[width = \textwidth]{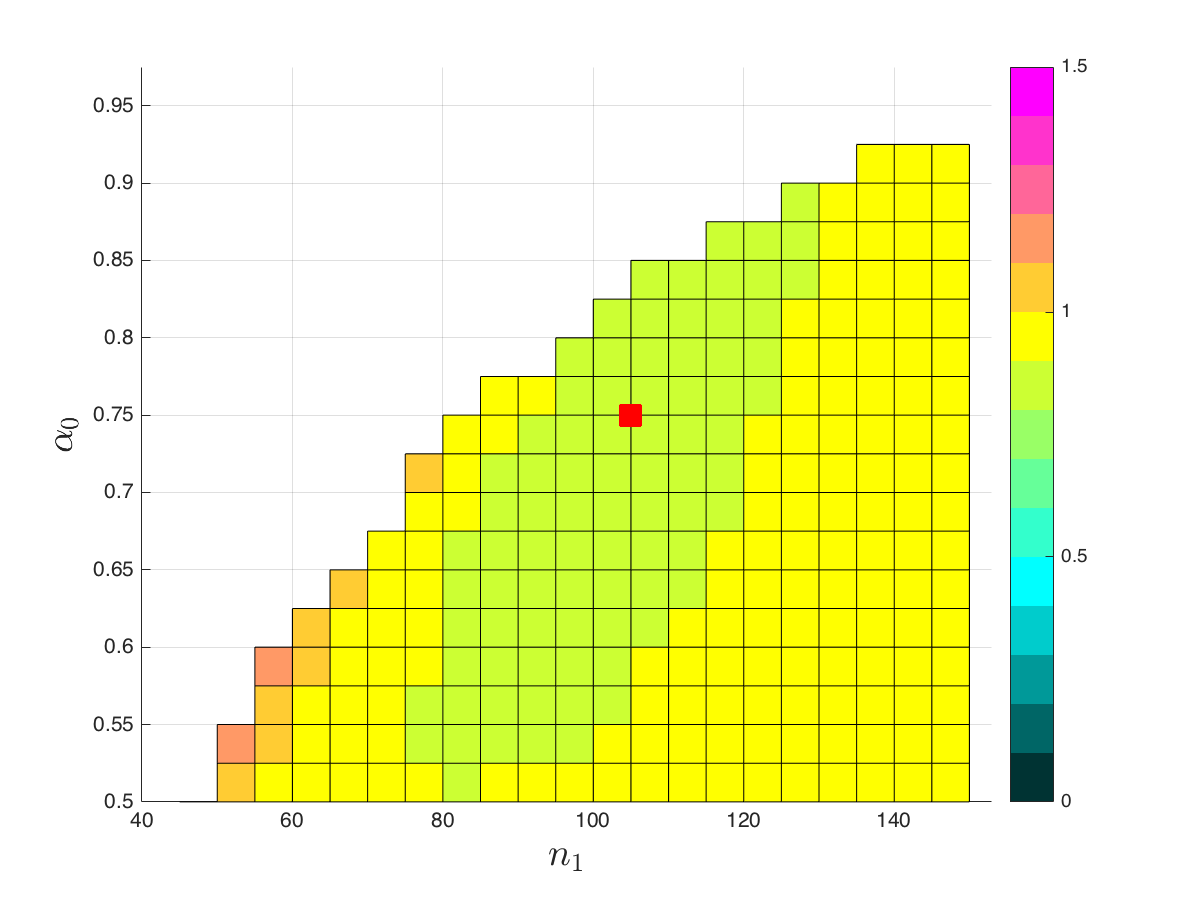}
\caption{$\frac{q_{1}}{n}$. $\min q_{1}  = 134$ for $n_1 = 105, \, \alpha_0 = 0.75$}
\end{subfigure}
\begin{subfigure}{.7\textwidth}
\includegraphics[width = \textwidth]{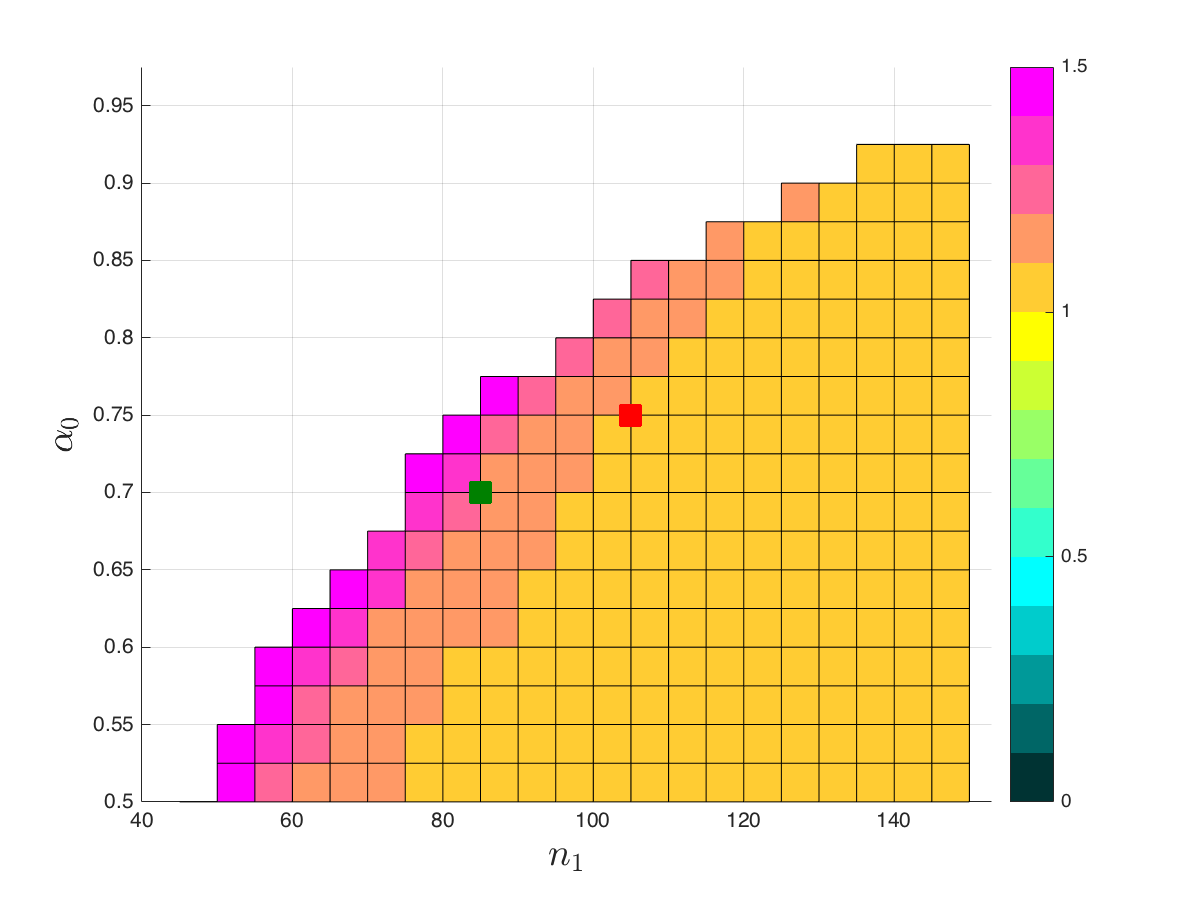}
\caption{$\frac{n_1+n_2}{n}$}
\end{subfigure}
\end{center}
\caption{\hspace{0.2cm}The diagnostic plot for choosing the design. The parameters are $M=4, \,\alpha = 0.05, \,\beta_{\max} = 0.2,\, \vec \mu = (2, 1, 0.7),\,  \vec p = (0.2,0.4,0.6), \, n = 153$. For the parameters $n_1 = 100, \, \alpha_0 = 0.7, \, \alpha_1 = 0.026, \, n_2 = 65 , \, \eta_0 = 0.07, \, \eta_1 = 0.28, \, \eta_2 = 0.19$, we have $q_{0} = 118,\, q_{1} = 134, \, n_1+n_2 = 165$.}
\label{n1alpha0_mult}
\end{figure}

The first stage sample size also should satisfy $n_1<n$, otherwise the two-stage scheme is less efficient than the one-stage design. Based on \eqref{eq:alpha0_const} and \eqref{eq:n1_const}, the region of possible $(n_1, \alpha_0)$ provided to the user is shown in Fig.\ref{fig:choice_alpha_n1}. 
\par Among all possible pairs $(n_1, \alpha_0)$ one would choose ones that make the values of $q_0$, $q_1$ and $n_1+n_{2}$ small. However, these criteria cannot be achieved simultaneously. This trade-off is left for the user to resolve, and in Fig.\ref{choice_n1_alpha0} we provide an example of a graphical help for the choices. Recall that the design is built to ensure the type II error rate to be not greater than $\beta_{\max}$ in all the corners of $\mathcal E$. However, this probability can be very sensitive to $\mu$ and/or $p$. We assume that given the probability of a false negative for all $(\mu,p)$, as shown in Fig.\ref{char5}, the user may decide to change $\beta_{\max}$. If these rates are small, the trial size may be reduced by using a larger $\beta_{\max}$, and if $\beta_{\max}$ is considered to be reduced, then the larger trial is necessary. If he decides to change $\beta_{\max}$, the design needs to elaborated from the beginning.


\subsection{Planning a multicenter RCT}

\par The design is determined by the set $\{M, n_1, \, \alpha_0, \, \alpha_1, \, n_2\}$. For the multicenter trial we build the design to control $\beta^{se}_{\text{fw}}(M,1)$. The type II error rates $\beta^{se}_{\text{fw}}(M_1,m)$ where $M_1 < M$ or $m>1$ are not controlled. From Lemma \ref{lemma:multi_beta_max}, the condition $\beta^{se}_{\text{fw}}(M,1)\le \beta_{\max}$ is equivalent to $\beta^{se}_M \le 1 - (1-\beta_{\max})^{\frac{1}{M}}$. This means that if we consider $M$ centers separately, the design in each of the centers will correspond to the single center RCT with the type I and type II errors controlled at levels $\alpha(M)$, the largest threshold of the multiple rejection procedure, and $\beta^{se}_M = 1 - (1-\beta_{\max})^{\frac{1}{M}}$, respectively. 
\par To illustrate the planning for the RCT with four centers, consider that FWER is controlled at $0.05$ and the family-wise type II error rate is controlled at $\beta_{\max} = 0.2$.  With Hochberg's multiple procedure the design is then determined by $\alpha(4)  = 0.05$ and $\beta^{se}_M = 1 - (1-0.2)^{\frac{1}{4}} \approx 0.054$. As in the single center RCT, the choice of $n_1$ and $\alpha_0$ is made by the user based on the information provided in Fig.\ref{n1alpha0_mult}. 

\par Nonetheless, maximum type II error rates for $M_1< M$ or $m>1$ can be also of interest for the study. Even if an effect exists, it is not necessarily strong in all centers. Furthermore, a trial missing a few centers with strong effect can still be useful if in the other centers it is detected. Hence, given the information about $\beta^{se}_{\text{fw}}(M_1,m)$, the user might change $\beta_{\max}$ and recalculate the design. In Lemma \ref{lemma:multi_beta_max}, we show an upper bound on these errors,
$$ \beta^{se}_{\text{fw}}(M_1,m) \le 1 - \left(1-\beta^{se}_{M_1+1-m}\right)^{M_1+1-m}.$$
Once the parameters $n_1, \,\alpha_0$ have been chosen, the user is supplied with Table \ref{tab:comparison_diffpoints}(a).

\par We do not know an exact value of $\beta^{se}_{\text{fw}}(M_1,m)$, but we can compute the type II error rate $\beta_{\text{fw}}(M_1, m)$ for the given responses (the details can be found in Appendix \ref{Computation of beta fw}). the user has a better intuition about $\beta^{se}_{\text{fw}}(M_1,m)$ if we provide him with $\beta_{\text{fw}}(M_1, m)$ computed for different sets of responses. We take $M_1$ centres with the same $(\mu^*, \, p^*) \in \mathcal E$, while in the remaining $M - M_1$ centres $\mathcal H_0$ is true (see Table \ref{tab:comparison_diffpoints} (b),(c)). We suggest that the user checks more than one pair $(\mu^*, \, p^*)$ to have a better understanding of possible type II error rates. Based on these tables, the user decides whether to change $\beta_{\max}$ or not.

As illustrated, the values of $\beta_{\text{fw}}(M_1, m)$ are very sensitive to the true parameters $(\mu^*, \, p^*)$. In Tab.\ref{tab:comparison_diffpoints}(b), we took $\mu^* = 2, \, p^* = 0.2$, which is the lowest corner of $\mathcal E$, while in Tab.\ref{tab:comparison_diffpoints}(c), $\mu^* = 1.2, \, p^* = 0.5$ is inside $\mathcal E$ but not far away from the boundary.These numbers show that $\beta_{fw}(M_1, m)$ decreases rapidly when $\mu^*$ and/or $p^*$ increases. 
 
\begin{table}[ht]
\begin{center}
\begin{subtable}{.43\linewidth}
\centering
\begin{tabular}{c c| c c c c}
 & \multicolumn{1}{c}{}& \multicolumn{4}{c}{$M_1$} \\ \cline{3-6}
 & & 1 & 2 & 3 & 4\\ \cline{2-6}
\multicolumn{1}{c|}{\multirow{4}{*}{$m$}} & 1 & 0.305 &  0.469  & 0.534  & 0.200 \\
\multicolumn{1}{c|}{} & 2 &  & 0.305 &  0.469  & 0.534 \\ 
\multicolumn{1}{c|}{} & 3 &  &  & 0.305 &  0.469 \\
\multicolumn{1}{c|}{} & 4 &  &  &  & 0.305  \\ \cline{3-6}
\end{tabular}\vspace{0.1cm}
\caption{Upper bound on $\beta^{se}_{\text{fw}}(M_1,m)$.}
\end{subtable}\\
\begin{subtable}{.43\linewidth}
\centering
\begin{tabular}{c c| c c c c}
 & \multicolumn{1}{c}{}& \multicolumn{4}{c}{$M_1$} \\ \cline{3-6}
 & & 1 & 2 & 3 & 4\\ \cline{2-6}
\multicolumn{1}{c|}{\multirow{4}{*}{$m$}} & 1 & 0.303 & 0.464 & 0.515 & 0.200 \\
\multicolumn{1}{c|}{} & 2 &  & 0.091 & 0.170 & 0.099 \\ 
\multicolumn{1}{c|}{} & 3 &  &  & 0.026 & 0.030\\
\multicolumn{1}{c|}{} & 4 &  &  &  & 0.004  \\ \cline{3-6}
\end{tabular}\vspace{0.1cm}
\caption{$\beta_{\text{fw}}(M_1,m)$ for $(\mu^*, p^*) = (2, 0.2)$.}
\end{subtable}
\hspace{0.8cm}
\begin{subtable}{.43\linewidth}
\centering
\begin{tabular}{c c| c c c c}
 & \multicolumn{1}{c}{}& \multicolumn{4}{c}{$M_1$} \\ \cline{3-6}
 & & 1 & 2 & 3 & 4\\ \cline{2-6}
\multicolumn{1}{c|}{\multirow{4}{*}{$m$}} & 1 &  0.032 & 0.050 & 0.050 & 0.002\\ 
\multicolumn{1}{c|}{} & 2 &  & 0.001 & 0.002 & $<0.001$\\
\multicolumn{1}{c|}{} & 3 &  &  & $<0.001$ & $<0.001$\\
\multicolumn{1}{c|}{} & 4 &  &  &  & $<0.001$ \\ \cline{3-6}
\end{tabular}\vspace{0.1cm}
\caption{$\beta_{\text{fw}}(M_1,m)$ for $(\mu^*, p^*) = (1.2, 0.5)$.}
\end{subtable}
\end{center}
\caption{$\beta_{\text{fw}}(M_1, m)$. The design parameters are $\alpha = 0.05, \, \beta_{\max} = 0.2,\,n_1 = 100,\,\alpha_0 = 0.7, \,\alpha_1 = 0.026,\,n_2 = 65$. Real parameters in the centers with strong effect are $(\mu^*,\, p^*)$, for the other centers $\mathcal H_0$ is true.}
\label{tab:comparison_diffpoints}
\end{table}

\section{Examples}\label{Simulations}


To validate the model, we take the parameters from the previous section: $M = 4, \, \alpha = 0.05, \,\beta_{\max} = 0.2,\,\, n_1 = 100,\,\alpha_0 = 0.7, \,\alpha_1 = 0.026,\,n_2 = 65$. For different $M_1$ and $m$ we simulate the trial 1000 times. The responses in the control and treatment groups are, $Z^C \sim \mathcal N(0, 1)$, $Z^T\sim (1-p^*)\mathcal N(0,1) + p^*\mathcal N(\mu^*, 1)$. We estimate the standard deviation as $\hat\sigma = \sqrt{\frac{\sum_{i=1}^n (Z^C_i - \hat\mu^C)^2}{n-1}}$. The results are summarized in Tab.\ref{tab:sim_Mgroups}.
\begin{table}[ht]
\begin{subtable}{.42\linewidth}
\centering
\begin{tabular}{cc|c|c|c|c|} 
 & \multicolumn{1}{c}{}& \multicolumn{4}{c}{$M_1$} \\ \cline{3-6}
  & & 1 & 2 & 3 & 4 \\ \cline{2-6}
\multicolumn{1}{c|}{\multirow{8}{*}{$m$}} & \multicolumn{1}{c|}{\multirow{2}{*}{1}} &  0.303 & 0.464 & 0.515 & 0.200\\
\multicolumn{1}{c|}{} & \multicolumn{1}{c|}{} & 0.294 & 0.498 & 0.523 & 0.185 \\ \cline{2-6}
\multicolumn{1}{c|}{} & \multicolumn{1}{c|}{\multirow{2}{*}{2}} &  & 0.091 & 0.170 & 0.100 \\
\multicolumn{1}{c|}{} & \multicolumn{1}{c|}{} & & 0.099 & 0.179 & 0.079 \\ \cline{2-6}
\multicolumn{1}{c|}{} & \multicolumn{1}{c|}{\multirow{2}{*}{3}} &  &  & 0.026 & 0.030\\
\multicolumn{1}{c|}{} & \multicolumn{1}{c|}{} & & & 0.032 & 0.032\\ \cline{2-6}
\multicolumn{1}{c|}{} & \multicolumn{1}{c|}{\multirow{2}{*}{4}} &  &  &  & 0.004 \\
\multicolumn{1}{c|}{} & \multicolumn{1}{c|}{} & & & & 0.007\\ \cline{2-6}
\end{tabular}
\caption{For the centres with strong effect $(\mu^*, p^*) = (2, 0.2)$.}
\end{subtable}\hspace{0.8cm}
\begin{subtable}{.42\linewidth}
\centering
\begin{tabular}{cc|c|c|c|c|} 
 & \multicolumn{1}{c}{}& \multicolumn{4}{c}{$M_1$} \\ \cline{3-6}
  & & 1 & 2 & 3 & 4 \\ \cline{2-6}
\multicolumn{1}{c|}{\multirow{8}{*}{$m$}} & \multicolumn{1}{c|}{\multirow{2}{*}{1}} & 0.032 & 0.050 & 0.050 & 0.002\\
\multicolumn{1}{c|}{} & \multicolumn{1}{c|}{} & 0.041 & 0.044 & 0.043 & 0.005\\ \cline{2-6}
\multicolumn{1}{c|}{} & \multicolumn{1}{c|}{\multirow{2}{*}{2}} &  & 0.001 & 0.002 & $<0.001$ \\
\multicolumn{1}{c|}{} & \multicolumn{1}{c|}{} &  & 0.001 & 0 & 0 \\ \cline{2-6}
\multicolumn{1}{c|}{} & \multicolumn{1}{c|}{\multirow{2}{*}{3}} &  &  & $<0.001$ & $<0.001$\\
\multicolumn{1}{c|}{} & \multicolumn{1}{c|}{} &  & & 0 & 0\\ \cline{2-6}
\multicolumn{1}{c|}{} & \multicolumn{1}{c|}{\multirow{2}{*}{4}} &  &  &  & $<0.001$ \\
\multicolumn{1}{c|}{} & \multicolumn{1}{c|}{} & & &  & 0\\ \cline{2-6}
\end{tabular}
\caption{For the centres with strong effect $(\mu^*, p^*) = (1.2, 0.5)$.}
\end{subtable}
 \caption{Theoretical (top) and empirical (bottom) $\beta_{\text{fw}}(M_1, m)$ for 1000 simulations. The parameters are $M=4, \, \alpha = 0.05, \, \beta_{\max} = 0.2,\,n_1 = 100,\,\alpha_0 = 0.7, \,\alpha_1 = 0.026,\,n_2 = 65$.}
\label{tab:sim_Mgroups}
\end{table}

\section{Discussion}\label{disc}
\subsection*{Region of strong effect}
The region of strong effect for the mixture response model was introduced as an instrument for decision making. Further standardization for regulatory reasons might be of interest. From our point of view, it may also assist in making the research on trial designs more coherent. In our framework we suggest to first determine the region of strong effect. while $\beta_{\max}$ and other parameters may be tuned according to the needs of the study.
\subsection*{Mean value statistic}
The choice of the mean value statistic is not only motivated by its simplicity and its role in the classical RCT design. The mean value statistic also has some advantages for the mixture models. If responses in the control and treatment groups, $Y^C$ and $Y^T$, have densities $f(y)$ and $(1-p)f(y) + pf(y-\mu)$, respectively, and $\Var\left(Y^C\right) = 1$, then by the central limit theorem the following approximations hold
$$\overline{Y^T} \sim \mathcal N\left(\mu p, \frac{1+(1-p)p\mu^2}{n}\right),\hspace{0.2cm} \overline X \sim \mathcal N\left(\mu p, \frac{1+(1-p)p\mu^2}{n}+\frac{1}{n}\right).$$
This means that all the designs presented in the paper may be used for non-normal responses.
\par In the two-stage design we use the criterion $\overline X>\eta_2$ to decide on rejection of the null after the second stage. It is not obvious why this test works well and why $\overline X_1$ and $\overline X_2$ are not combined in some other way, but the proposed test is close to the UMP. The likelihood ratio based on the joint distribution function of $\left(\overline X_1, \overline X_2\right)$ is
\begin{equation}
\begin{aligned}
& \frac{g_{\left(\overline{X_1},\overline{X_2}\right)}(x_1,x_2)}{f_{\left(\overline{X_1},\overline{X_2}\right)}(x_1,x_2)}=\frac{\varphi\left(\sqrt{\frac{n_1}{2}}\frac{x_1 - \mu p}{\sqrt{1+\frac{(1-p)p}{2}\mu^2}}\right)\varphi\left(\sqrt{\frac{n_2}{2}} \frac{x_2 - \mu p}{\sqrt{1+\frac{(1-p)p}{2}\mu^2}}\right)}{\varphi\left(\sqrt{\frac{n_1}{2}}\,x_1\right)\varphi\left(\sqrt{\frac{n_2}{2}}\,x_2\right)}\\
& \approx \frac{\varphi\left(\sqrt{\frac{n_1}{2}}(x_1 - \mu p)\right)\varphi\left(\sqrt{\frac{n_2}{2}}(x_2 - \mu p)\right)}{\varphi\left(\sqrt{\frac{n_1}{2}}\,x_1\right)\varphi\left(\sqrt{\frac{n_2}{2}}\,x_2\right)} = e^{\frac{(x_1 n_1+x_2 n_2)\mu p-\frac{1}{2}(\mu p)^2}{2}},
\end{aligned}
\end{equation}
where $f(x_1, x_2)$ and $g(x_1, x_2)$ are pdfs under the null and alternative, respectively. The normal approximation in the likelihood works if the variance $1+\frac{(1-p)p\mu^2}{2}$ is close to one. In this case likelihood ratio is the monotone non-decreasing function of $\frac{x_1 n_1+x_2 n_2}{n_1+n_2}$. The test becomes UMP when $p=1$.

\subsection*{Expected sample size}
As we have seen in Sec.\ref{Sec:Planning}, the minimum value of the optimized $q_{1}$ is quite close to $n$. This value is reached for the 'worst' alternative $\mu = \frac{z_{1-\alpha_1}+z_{\alpha_0}}{\sqrt {2 n_1}}, \,p=1$ and for other values it can be substantially lower. And in many situations we have much smaller expected sample size. To illustrate this, we plot the probability to conduct the second stage for the chosen in Sec.\ref{Sec:Planning} single- and multicenter designs (see Fig.\ref{fig:prob_2stage}).
\begin{figure}
\begin{center}
\begin{subfigure}{.45\textwidth}
\includegraphics[width =  \textwidth]{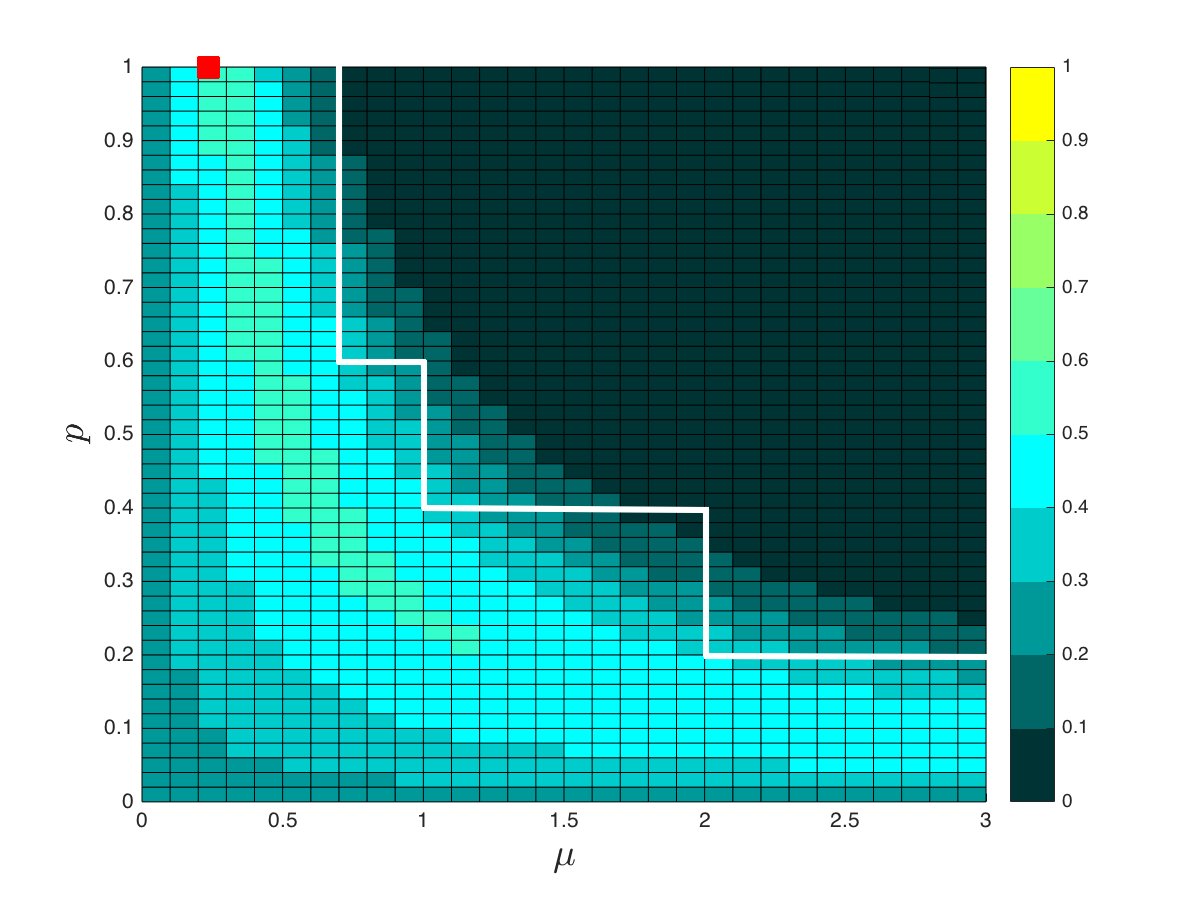}
\caption{Single center trial. The maximum probability is $0.52$.}
\end{subfigure}
\begin{subfigure}{.45\textwidth}
\includegraphics[width = \textwidth]{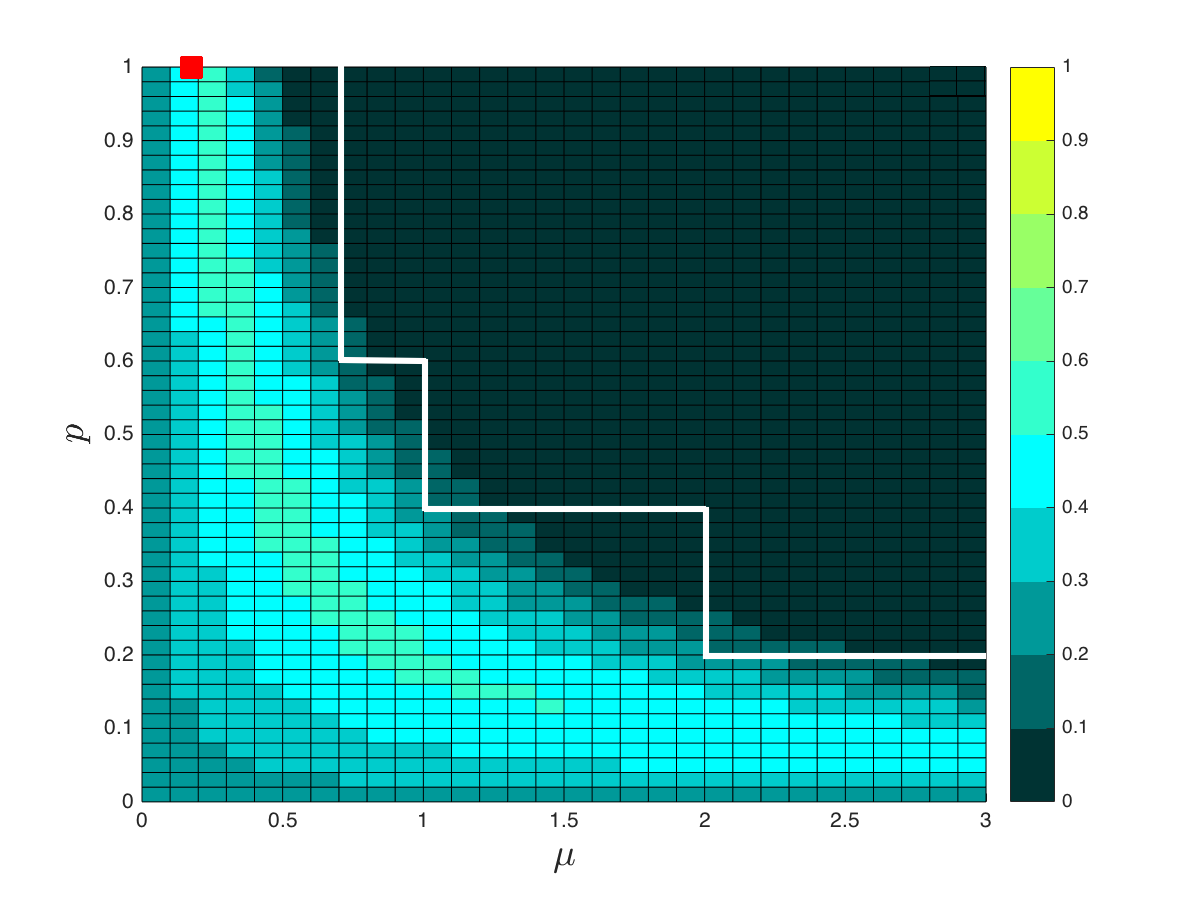}
\caption{Multicenter trial. The maximum probability is $0.52$.}
\end{subfigure}
\caption{\hspace{0.2cm}The plot is in the $(\mu,\,p)$ space and shows the region of strong effect together with the probability of conducting the second stage. On the left is the case of a single center RCT and on the right is the case of 4 centers RCT. In each case the red dot, $(0.24, 1)$ on the left  and $(0.17, 1)$ on the right, shows the worst combination of $\mu$ and $p$ in terms of the expected sample size under the alternative. The design on the left is $n_1 = 55, \, \alpha_0 = 0.7, \alpha_1 = 0.026, \, n_2 = 38$, whereas on the right it has $M = 4, \, n_1 = 100, \, \alpha_0 = 0.7, \alpha_1 = 0.026, \, n_2 = 65$.}
\label{fig:prob_2stage}
\end{center}
\end{figure}
This graph might be also of interest for the user during the design planning.

\subsection*{Rejection procedures for multicenter trials}
For multicenter RCTs, we construct the design using Hochberg's step-up rejection procedure. However, one can use any step-up procedure with thresholds $\alpha(1), ..., \alpha(M)$. For example, it could be Benjamini-Hochberg's procedure (~\cite{Benjamini:1995aa}) with $\alpha(k)=\frac{k\alpha}{M}$ which controls the false discovery rate (FDR). There will be no difference in parameters $\{n_1, \alpha_0, \alpha_1, n_2\}$ for the designs using Hochberg's or Benjamini-Hochberg's procedures to control $\beta^{se}_{\text{fw}}(M,1)$, because in each center the design corresponds to the single center RCT with the maximum type I error rate controlled at the same level $\alpha(M) = \alpha$ (see \eqref{ind:errors} and Lemma \ref{lemma:multi_beta_max}). If $M=2$, these methods are fully equivalent, including the thresholds $\alpha(1) = \alpha/2, \, \alpha(2) = \alpha$. Nonetheless, if $M>2$, $\beta^{se}_{\text{fw}}(M_1, m)$ is generally smaller for the Benjamini-Hochberg procedure (see Tab.\ref{tab:comparison_BH_H}).


\begin{table}[ht]
\begin{subtable}{.42\linewidth}
\centering
\begin{tabular}{cc|c|c|c|c|} 
 & \multicolumn{1}{c}{}& \multicolumn{4}{c}{$M_1$} \\ \cline{3-6}
  & & 1 & 2 & 3 & 4 \\ \cline{2-6}
\multicolumn{1}{c|}{\multirow{8}{*}{$m$}} & \multicolumn{1}{c|}{\multirow{2}{*}{1}} &  0.303 & 0.464 & 0.515 & 0.200\\
\multicolumn{1}{c|}{} & \multicolumn{1}{c|}{} & 0.298 & 0.380 & 0.200 & 0.200 \\ \cline{2-6}
\multicolumn{1}{c|}{} & \multicolumn{1}{c|}{\multirow{2}{*}{2}} &  & 0.091 & 0.170 & 0.099 \\
\multicolumn{1}{c|}{} & \multicolumn{1}{c|}{} & & 0.082 & 0.069 & 0.027 \\ \cline{2-6}
\multicolumn{1}{c|}{} & \multicolumn{1}{c|}{\multirow{2}{*}{3}} &  &  & 0.026 & 0.030\\
\multicolumn{1}{c|}{} & \multicolumn{1}{c|}{} & & & 0.014 & 0.009\\ \cline{2-6}
\multicolumn{1}{c|}{} & \multicolumn{1}{c|}{\multirow{2}{*}{4}} &  &  &  & 0.004 \\
\multicolumn{1}{c|}{} & \multicolumn{1}{c|}{} & & & & 0.002\\ \cline{2-6}
\end{tabular}
\caption{For the centres with strong effect $(\mu^*, p^*) = (2, 0.2)$.}
\end{subtable}\hspace{0.8cm}
\begin{subtable}{.42\linewidth}
\centering
\begin{tabular}{cc|c|c|c|c|} 
 & \multicolumn{1}{c}{}& \multicolumn{4}{c}{$M_1$} \\ \cline{3-6}
  & & 1 & 2 & 3 & 4 \\ \cline{2-6}
\multicolumn{1}{c|}{\multirow{8}{*}{$m$}} & \multicolumn{1}{c|}{\multirow{2}{*}{1}} & 0.032 & 0.050 & 0.050 & 0.002\\
\multicolumn{1}{c|}{} & \multicolumn{1}{c|}{} & 0.032 & 0.033 & 0.003 & 0.002\\ \cline{2-6}
\multicolumn{1}{c|}{} & \multicolumn{1}{c|}{\multirow{2}{*}{2}} &  & 0.001 & 0.002 & $<0.001$ \\
\multicolumn{1}{c|}{} & \multicolumn{1}{c|}{} &  & $<0.001$ & $<0.001$ & $<0.001$ \\ \cline{2-6}
\multicolumn{1}{c|}{} & \multicolumn{1}{c|}{\multirow{2}{*}{3}} &  &  & $<0.001$ & $<0.001$ \\
\multicolumn{1}{c|}{} & \multicolumn{1}{c|}{} &  & & $<0.001$ & $<0.001$ \\ \cline{2-6}
\multicolumn{1}{c|}{} & \multicolumn{1}{c|}{\multirow{2}{*}{4}} &  &  &  & $<0.001$ \\
\multicolumn{1}{c|}{} & \multicolumn{1}{c|}{} & & &  & $<0.001$ \\ \cline{2-6}
\end{tabular}
\caption{For the centres with strong effect $(\mu^*, p^*) = (1.2, 0.5)$.}
\end{subtable}
 \caption{$\beta_{\text{fw}}(M_1, m)$ for Hochberg's (top) and Benjamini-Hochberg's (bottom) rejection rules. The parameters are $M = 4, \, \alpha = 0.05, \, \beta_{\max} = 0.2,\,n_1 = 100,\,\alpha_0 = 0.7, \,\alpha_1 = 0.026,\,n_2 = 65$.}
\label{tab:comparison_BH_H}
\end{table}
To emphasize the importance of step-up procedures, consider the classical Bonferroni correction. For this procedure the decision is made for each center individually. However, the method may be reformulated as a step-up procedure with $\alpha(k)=\frac{\alpha}{M}$ for $k = 1, ..., M$. An additional constraint for the two-stage design will be $\alpha_1<\frac{\alpha}{M}$. If this is not satisfied, the second stage is not needed. In Fig.\ref{tab:multipl_comp_designs}, we give a diagnostic plot similar to those in Fig.\ref{n1alpha0_mult} for the Bonferroni method. First, the sample size for the one-stage design is $n = 209$, while for Hochberg's and Benjamini-Hochberg's procedures $n = 153<209$.

\begin{figure} 
\begin{center}
\begin{subfigure}{.6\textwidth}
\includegraphics[width = \textwidth]{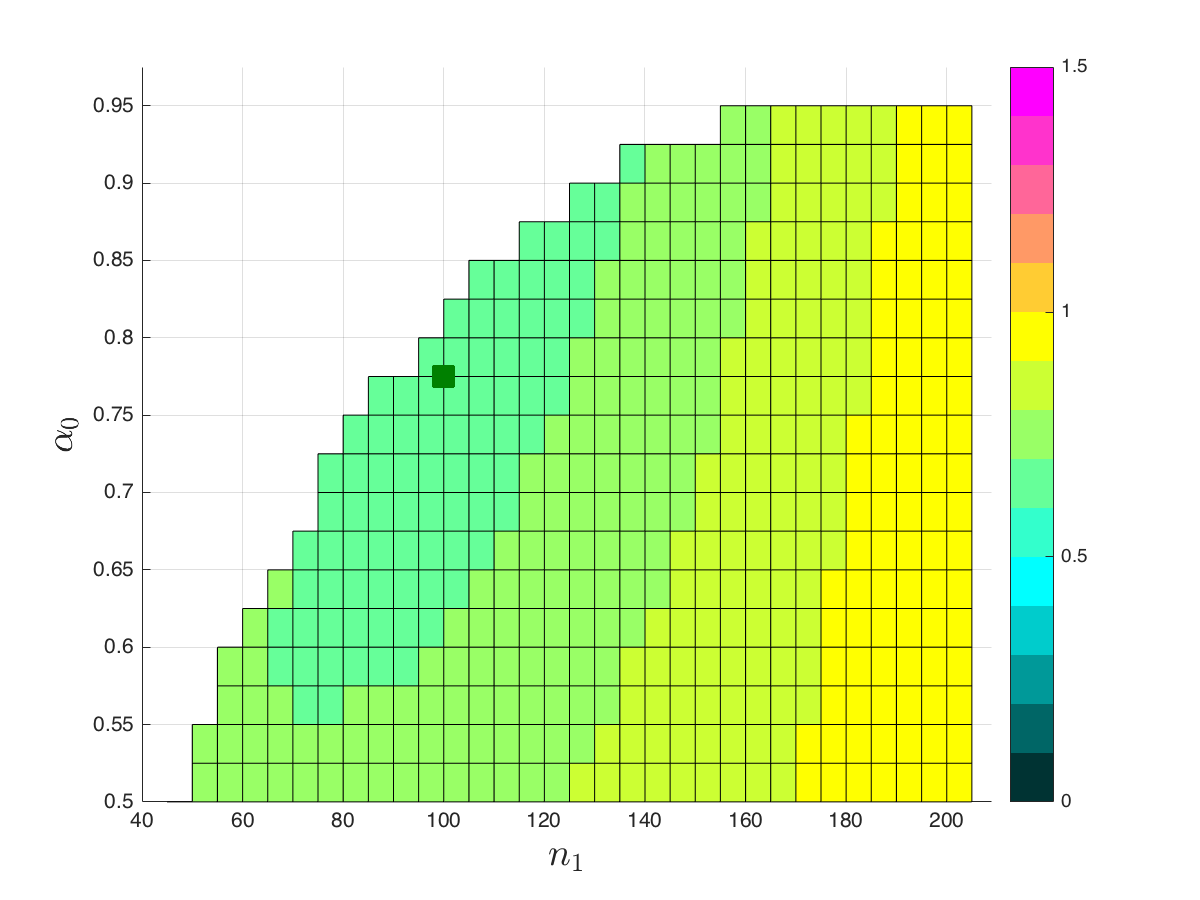}
\caption{$\frac{q_{0}}{n}$. \\ \vspace{0.2cm} $\min q_{0}  = 134$ for $n_1 = 100, \, \alpha_0 = 0.775$.}
\end{subfigure}
\begin{subfigure}{.6\textwidth}
\includegraphics[width = \textwidth]{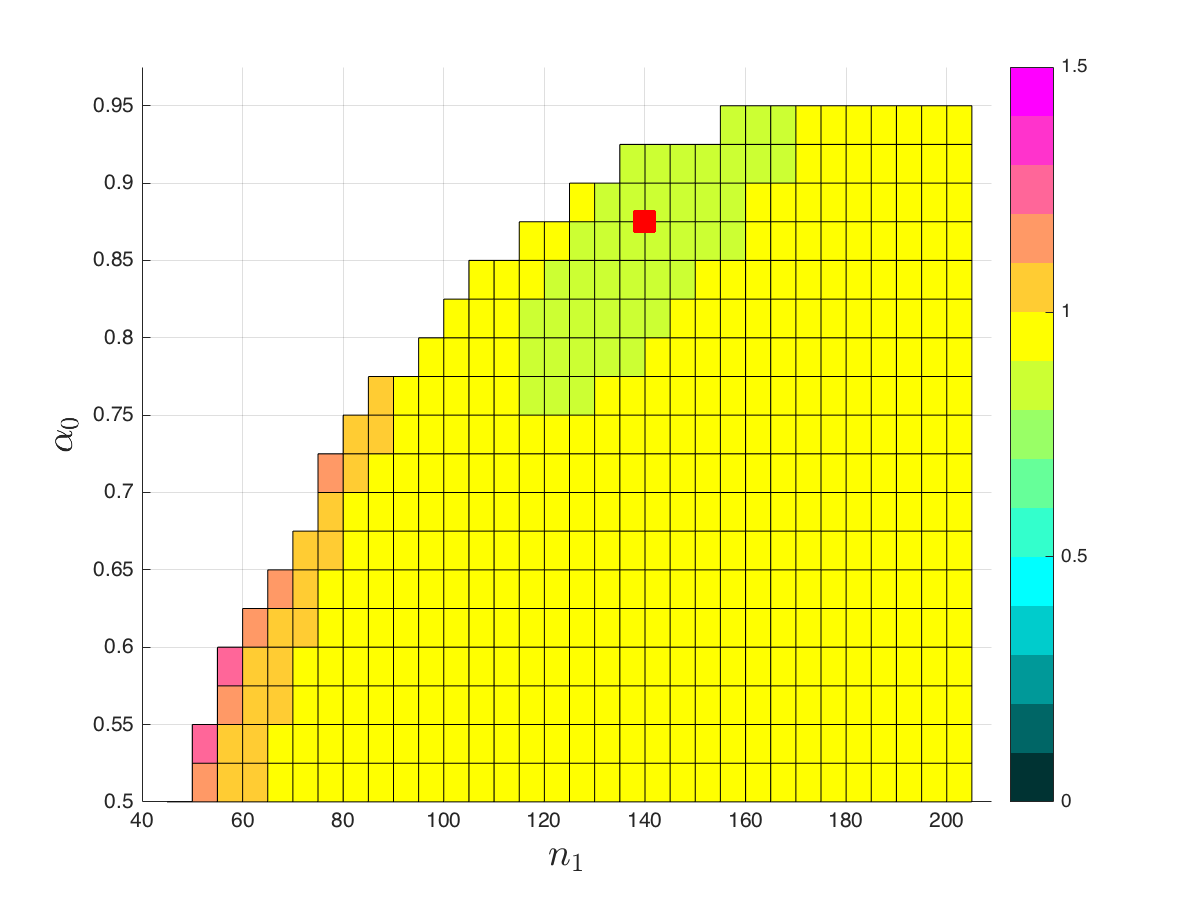}
\caption{$\frac{q_{1}}{n}$. \\ \vspace{0.2cm}$ \min q_{1} = 185$ for $n_1 = 140, \, \alpha_0 = 0.875$.}
\end{subfigure}\\
\begin{subfigure}{.6\textwidth}
\includegraphics[width = \textwidth]{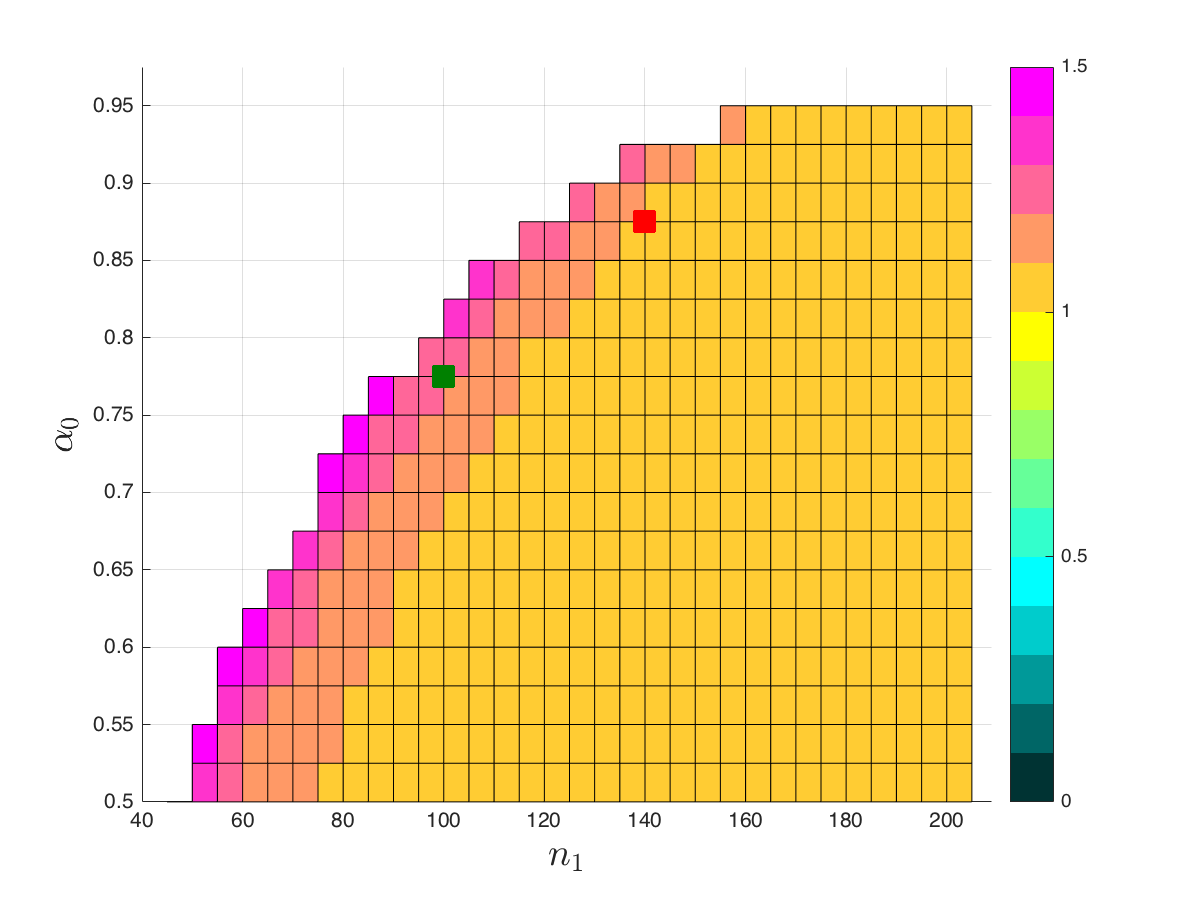}
\caption{$\frac{n_1+n_2}{n}$.\\ \vspace{0.2cm} $\min (n_1+n_2) = 209$ for $n_1 = 205, \, \alpha_0 = 0.55$.}
\end{subfigure}
\caption{\hspace{0.2cm}The diagnostic plot for choosing the design. The parameters are $M=4, \, \alpha = 0.05, \, \beta_{\max} = 0.2, \, \vec \mu = (2, 1, 0.7), \, \vec p = (0.2,0.4,0.6)$. For the Bonferroni procedure $n = 209$.}
\label{tab:multipl_comp_designs}
\end{center}
\end{figure}
The minimum values of $q_{0},\, q_{1},\,n_1+n_2$ for the Bonferroni method are also considerably higher than for the step-up procedures ($q_{0} = 118,\, q_{1} = 134, \, n_1+n_2 = 165$). Therefore,  the designs using Hochberg's step-up procedure controls the same FWER as the Bonferroni and is clearly superior in terms of effort.

\subsection*{Random treatment effect}
In our model the treatment-specific effect for the drug responders is a fixed value $\mu^T-\mu^C$. However, a more realistic model might consider it to be random. To address this issue, we present type II error rates for the chosen designs if the treatment-specific effect is random variable from $\mathcal N(\mu^T-\mu^C, \delta^2)$. There, $Z^T\sim(1-p)\mathcal N(\mu^C, \sigma^2) + p\mathcal N(\mu^T, \sigma^2+\delta^2)$. In Tab.\ref{tab:random_effect}, average empirical type II error rates for the fixed and random treatment-specific effects are given. We observe bigger error rates when an effect is random. This might be taken into account for the development of the design.
\begin{table}[ht]
\begin{subtable}{.42\linewidth}
\centering
\begin{tabular}{cc|c|c|c|c|} 
 & \multicolumn{1}{c}{}& \multicolumn{4}{c}{$M_1$} \\ \cline{3-6}
  & & 1 & 2 & 3 & 4 \\ \cline{2-6}
\multicolumn{1}{c|}{\multirow{8}{*}{$m$}} & \multicolumn{1}{c|}{\multirow{2}{*}{1}} &  0.294 & 0.498 & 0.523 & 0.185\\
\multicolumn{1}{c|}{} & \multicolumn{1}{c|}{} & 0.321 & 0.519 & 0.571 & 0.355 \\ \cline{2-6}
\multicolumn{1}{c|}{} & \multicolumn{1}{c|}{\multirow{2}{*}{2}} &  & 0.099 & 0.179 & 0.079 \\
\multicolumn{1}{c|}{} & \multicolumn{1}{c|}{} & & 0.111 & 0.22 & 0.193 \\ \cline{2-6}
\multicolumn{1}{c|}{} & \multicolumn{1}{c|}{\multirow{2}{*}{3}} &  &  & 0.032 & 0.032\\
\multicolumn{1}{c|}{} & \multicolumn{1}{c|}{} & & & 0.038 & 0.068\\ \cline{2-6}
\multicolumn{1}{c|}{} & \multicolumn{1}{c|}{\multirow{2}{*}{4}} &  &  &  & 0.007 \\
\multicolumn{1}{c|}{} & \multicolumn{1}{c|}{} & & & & 0.009\\ \cline{2-6}
\end{tabular}
\caption{For the centres with strong effect $(\mu^*, p^*) = (2, 0.2)$.}
\end{subtable}\hspace{0.8cm}
\begin{subtable}{.42\linewidth}
\centering
\begin{tabular}{cc|c|c|c|c|} 
 & \multicolumn{1}{c}{}& \multicolumn{4}{c}{$M_1$} \\ \cline{3-6}
  & & 1 & 2 & 3 & 4 \\ \cline{2-6}
\multicolumn{1}{c|}{\multirow{8}{*}{$m$}} & \multicolumn{1}{c|}{\multirow{2}{*}{1}} & 0.041 & 0.044 & 0.043 & 0.005\\
\multicolumn{1}{c|}{} & \multicolumn{1}{c|}{} & 0.150 & 0.279 & 0.353 & 0.253 \\ \cline{2-6}
\multicolumn{1}{c|}{} & \multicolumn{1}{c|}{\multirow{2}{*}{2}} &  & 0 & 0 & 0 \\
\multicolumn{1}{c|}{} & \multicolumn{1}{c|}{} &  & 0.038 & 0.075 & 0.072 \\ \cline{2-6}
\multicolumn{1}{c|}{} & \multicolumn{1}{c|}{\multirow{2}{*}{3}} &  &  & 0 & 0\\
\multicolumn{1}{c|}{} & \multicolumn{1}{c|}{} &  & & 0.008 & 0.012\\ \cline{2-6}
\multicolumn{1}{c|}{} & \multicolumn{1}{c|}{\multirow{2}{*}{4}} &  &  &  & 0 \\
\multicolumn{1}{c|}{} & \multicolumn{1}{c|}{} & & &  & 0 \\ \cline{2-6}
\end{tabular}
\caption{For the centres with strong effect $(\mu^*, p^*) = (1.2, 0.5)$.}
\end{subtable}
 \caption{Empirical type II error rates based on 1000 simulations. $Z^C\sim\mathcal N(0, 1)$, $Z^T\sim (1-p^*)\mathcal N(0,1) + p^*\mathcal N(\mu^*, 1)$ for fixed effect (top) and $Z^T\sim (1-p^*)\mathcal N(0,1) + p^*\mathcal N\left(\mu^*, 1+0.5^2\right)$ for random effect (bottom) effects. The parameters are $M=4, \, \alpha = 0.05, \, \beta_{\max} = 0.2,\,n_1 = 100,\,\alpha_0 = 0.7, \,\alpha_1 = 0.026,\,n_2 = 65$.}
\label{tab:random_effect}
\end{table}

\subsection*{Estimation of $\mu^T, \, \mu^C, \, \sigma$ and $p$}
For the classical sequential RCT design (no mixture response) the estimation of the parameters is discussed in \cite{Armitage:1969aa, Siegmund:1978aa, OBrien:1979aa}, and \cite{Tsiatis:1984aa}. In the mixture framework, to characterize the sensitive subgroup, one would like to have estimators of $\mu^T, \, \mu^C, \, \sigma$ and $p$. We do not further investigate this question, but there are some estimation procedures that can be applied to the data. A large literature covers various aspects of estimation of the mixture model parameters. In the case of a one-stage design, one can use the likelihood-based estimators of $\mu^T, \, \mu^C, \, \sigma$ and $p$ (e.g., the EM algorithm or the method proposed in \cite{Pavlic:2001aa}). Regarding a two-stage trial, one can apply these algorithms to either the first or the second stage results.

\section{Conclusion}

\par In this paper we extend the classical continuous response RCT design to the case when a fraction of the treated patients responds to the tested treatment. The response in the treatment group is modelled as a two-component mixture, representing placebo and drug responders. 
\par We modified the conventional procedure that decides on the existence of a treatment-specific effect to a test that decides on the existence of a drug-responders subgroup. We characterize the latter by the fraction $p>0$ and the treatment-specific effect $\mu>0$. We assume that pairs $(\mu, p)$ of potential interest belong to the set $\cal E$, the region of strong effect. The trial protocol is then determined by the testing procedure that ensures a certain power of detection for any subgroup whose parameter is in $\cal E$. We developed one- and  two-stage RCT designs along with the exact and approximate expressions for the type II error rates. We also generalized these designs to the multicenter framework where we control the family-wise error rate. To decrease the sample size, we use Hochberg's step-up multiple test. If one wants to control FDR, one may simply use Benjamini--Hochberg's test or any other suitable thresholds. We provide the graphical help to guide the user in choosing design parameters, where a few are determined by minimizing the required effort.

\appendix

\section{Mathematical proofs}\label{App:Mathematical proofs}
\setcounter{theorem}{0}

\begin{lemma}
The tests that reject $\mathcal H_0$ for $\overline X>\eta$ are uniformly most powerful.
\end{lemma}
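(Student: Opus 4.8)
The plan is to reduce the composite alternative to a family of simple-versus-simple problems and to apply the Neyman–Pearson lemma pointwise. Fix an arbitrary alternative $(\mu_1,p_1)$ with $\mu_1>0$ and $p_1\in(0,1]$. By \eqref{distribution}, under $\mathcal H_0$ the statistic $\overline X$ has density proportional to $\varphi(\sqrt{n/2}\,x)$, while under this fixed alternative its density is proportional to the mixture $\sum_{k=0}^n\binom{n}{k}p_1^k(1-p_1)^{n-k}\varphi(\sqrt{n/2}(x-\tfrac{k}{n}\mu_1))$. The Neyman–Pearson lemma states that the most powerful size-$\alpha$ test rejects for large values of the likelihood ratio $\Lambda(x)$, so it suffices to show that $\Lambda$ is increasing in $\overline X$ and that the threshold it induces does not depend on $(\mu_1,p_1)$.

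First I would compute the ratio explicitly. Using $\varphi(z)=\tfrac{1}{\sqrt{2\pi}}e^{-z^2/2}$, each Gaussian term simplifies via $\varphi(\sqrt{n/2}(x-\tfrac{k}{n}\mu_1))/\varphi(\sqrt{n/2}\,x)=\exp(\tfrac{k\mu_1}{2}x-\tfrac{k^2\mu_1^2}{4n})$, so that
\[
\Lambda(x)=\sum_{k=0}^n a_k\,e^{\frac{k\mu_1}{2}x},\qquad a_k=\binom{n}{k}p_1^k(1-p_1)^{n-k}e^{-k^2\mu_1^2/(4n)}\ge 0.
\]
This exhibits $\Lambda$ as a nonnegative linear combination of the functions $e^{(k\mu_1/2)x}$. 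Since $\mu_1>0$, every such function with $k\ge1$ is strictly increasing in $x$ while the $k=0$ term is constant, and at least one coefficient with $k\ge1$ is strictly positive (indeed all $a_k>0$ when $p_1<1$). Hence $\Lambda$ is strictly increasing in $\overline X$; equivalently, the family has monotone likelihood ratio in $\overline X$.

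By this monotonicity the event $\{\Lambda(\overline X)>c\}$ coincides with $\{\overline X>\eta\}$ for a suitable $\eta$, so the most powerful test against $(\mu_1,p_1)$ rejects exactly when $\overline X>\eta$. The value of $\eta$ needed to attain exact size $\alpha$ is fixed by the null law $\mathcal N(0,2/n)$ alone, giving $\eta=z_{1-\alpha}\sqrt{2/n}$ irrespective of $(\mu_1,p_1)$. Thus one and the same test is most powerful against every point of the alternative, which is precisely what it means to be uniformly most powerful, establishing Lemma \ref{lemma:ump}. The only genuinely non-routine step is verifying the monotonicity of the mixture likelihood ratio; the Gaussian algebra above reduces it to the elementary fact that a nonnegative combination of increasing exponentials is increasing, so I expect no serious obstacle beyond keeping that computation clean.
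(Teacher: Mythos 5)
Your proposal is correct and follows essentially the same route as the paper: both hinge on writing the likelihood ratio of the mixture against the null as $\sum_k \binom{n}{k}p^k(1-p)^{n-k}e^{\frac{k\mu}{2}x-\frac{k^2\mu^2}{4n}}$, a nonnegative combination of increasing exponentials, hence monotone in $\overline X$. The only difference is presentational — the paper invokes the Karlin--Rubin theorem directly, while you unpack it via pointwise Neyman--Pearson plus the observation that the size-$\alpha$ threshold depends only on the null law — so nothing further is needed.
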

\begin{proof}
We denote the pdf of $\overline X$ under $\mathcal H_0$ as $f(x)$ and the pdf of $\overline X$ under an alternative as $g(x)$.
The function $\frac{g(x)}{f(x)}$ is non-decreasing in $x$ if $\mu>0$, since
\begin{equation*}
\begin{aligned}
& \frac{g(x)}{f(x)} = \sum_{k=0}^n \binom {n} {k}p^k(1-p)^{n-k}\frac{f\left(x-\frac{k}{n}\mu\right)}{f(x)} = \sum_{k=0}^n \binom {n} {k}p^k(1-p)^{n-k} e^{\frac{n}{2}\left(x \frac{k}{n}\mu -\frac{1}{2}\left(\frac{k}{n}\mu\right)^2\right)}.
\end{aligned}
\end{equation*}
From the Karlin--Rubin theorem it follows that the test $\overline X > \eta$ is UMP.
\end{proof}


\begin{lemma}
For a region of strong effect as defined in Fig.\ref{fig:region_of_interest}, the maximum type II error rate in the region of the strong effect satisfies 
\begin{equation}
\beta^{se}(n,\alpha) =\underset{i=1, ..., s}{\max}\beta(n,z_{1-\alpha}\sqrt{\frac{2}{n}},\mu_i,p_i)
\end{equation}
\end{lemma}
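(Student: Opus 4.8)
The plan is to show that the false-negative probability $\beta(n,\eta,\mu,p)$ of \eqref{eq:beta}, with the threshold $\eta=z_{1-\alpha}\sqrt{2/n}$ held fixed, is monotone \emph{decreasing} in each of $\mu$ and $p$ separately on the quadrant $\mu>0$, $p\in(0,1)$. Once both monotonicities are in hand, the maximum over $\mathcal E$ localizes at the corners automatically. Indeed, by Definition~\ref{def:se} the region is the union of horizontal strips $S_i=\{(\mu,p):p\in[p_i,p_{i+1}],\ \mu\ge\mu_i\}$ for $i=1,\dots,s$; on each $S_i$ decreasingness in $\mu$ pushes any maximizer to the left edge $\mu=\mu_i$, and decreasingness in $p$ pushes it to the bottom edge $p=p_i$, so $\max_{S_i}\beta=\beta(n,\eta,\mu_i,p_i)$. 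Taking the maximum over $i$ then yields the claim. Note that although each strip is unbounded in $\mu$, this causes no difficulty: a decreasing function attains its strip-maximum on the finite boundary $\mu=\mu_i$.

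For the monotonicity in $\mu$ I would differentiate \eqref{eq:beta} term by term, giving
\[
\frac{\partial \beta}{\partial \mu}
= -\sqrt{\frac{n}{2}}\sum_{k=0}^{n}\binom{n}{k}p^{k}(1-p)^{n-k}\,\frac{k}{n}\,
\varphi\!\left(\left(\eta-\tfrac{k}{n}\mu\right)\sqrt{\tfrac{n}{2}}\right).
\]
Every summand is nonnegative, so the derivative is $\le 0$, and it is strictly negative for $p\in(0,1)$ since the terms with $k\ge 1$ contribute. This step is routine.

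The substantive step is the monotonicity in $p$, because $p$ enters only through the binomial weights, not through the Gaussian arguments, and a naive termwise differentiation in $p$ gives a signed combination of adjacent weights whose sign is not manifest. I would instead write $\beta(n,\eta,\mu,p)=\mathbb E[h(K)]$ with $K\sim\mathrm{Bin}(n,p)$ and $h(k)=\Phi((\eta-\tfrac{k}{n}\mu)\sqrt{n/2})$, observing that $h$ is non-increasing in $k$ because $\mu>0$. The key fact is that $\mathrm{Bin}(n,p)$ is stochastically increasing in $p$. Using the monotone coupling $K(p)=\sum_{j=1}^{n}\mathbf 1\{U_j\le p\}$ with $U_1,\dots,U_n$ i.i.d.\ uniform on $[0,1]$, one has $K(p)\le K(p')$ almost surely whenever $p\le p'$; applying the non-increasing map $h$ and taking expectations gives $\beta(n,\eta,\mu,p)\ge\beta(n,\eta,\mu,p')$, i.e.\ $\beta$ is decreasing in $p$. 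The main obstacle is precisely isolating this $p$-monotonicity cleanly, and routing it through stochastic dominance (equivalently, Abel summation against the binomial cdf) is what avoids the messy direct computation.

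Finally I would assemble the pieces: the two monotonicities reduce the maximization over each strip $S_i$ to its corner $(\mu_i,p_i)$, and since $\mathcal E=\bigcup_{i=1}^{s}S_i$ the overall maximum is $\max_{i=1,\dots,s}\beta(n,z_{1-\alpha}\sqrt{2/n},\mu_i,p_i)$, as asserted. The only bookkeeping point is that consecutive strips meet along the line $p=p_{i+1}$; this loses no maximizer, since the corner of $S_{i+1}$ is $(\mu_{i+1},p_{i+1})$ with $\mu_{i+1}<\mu_i$ and is already among the corners enumerated.
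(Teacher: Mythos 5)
Your proposal is correct and follows the same overall strategy as the paper: establish that $\beta(n,\eta,\mu,p)$ is decreasing in $\mu$ and in $p$ on $\mu>0$, $p\in(0,1)$, and conclude that the maximum over each strip of $\mathcal E$ sits at its lower-left corner $(\mu_i,p_i)$. The $\mu$-monotonicity step is identical. The one place you diverge is the $p$-monotonicity: the paper differentiates the binomial weights directly, reindexes the two resulting sums, and obtains
\[
\frac{\partial \beta}{\partial p}
= n\sum_{k=0}^{n-1}\binom{n-1}{k}p^{k}(1-p)^{n-k-1}
\left(\Phi\!\left(\sqrt{\tfrac{n}{2}}\left(\eta-\tfrac{k+1}{n}\mu\right)\right)-\Phi\!\left(\sqrt{\tfrac{n}{2}}\left(\eta-\tfrac{k}{n}\mu\right)\right)\right)<0,
\]
so the sign is in fact manifest after that reindexing --- your remark that the direct computation leaves the sign unclear slightly undersells it. Your alternative, writing $\beta=\mathbb E[h(K)]$ with $h$ non-increasing and using the monotone coupling $K(p)=\sum_j\mathbf 1\{U_j\le p\}$ to get stochastic monotonicity of $\mathrm{Bin}(n,p)$, is a clean and fully rigorous substitute; it buys you independence from differentiability and generalizes immediately to any non-increasing $h$, at the cost of invoking (or re-deriving) the coupling fact rather than staying with an elementary calculation. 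Either route closes the argument; your handling of the unbounded strips and of the shared boundaries $p=p_{i+1}$ is also fine.
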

\begin{proof}
The probability of a false negative is
$$\beta(n,\eta,\mu,p)=\sum_{k=0}^n \binom {n} {k}p^k(1-p)^{n-k}\Phi\left(\sqrt{\frac{n}{2}}\left(\eta-\frac{k}{n}\mu\right)\right).$$
The maximum value is attained in one of the corners of $\cal E$. Indeed, this follows from $$\frac{\partial \beta}{\partial \mu}(n,\eta,\mu,p)<0 \hspace{0.2cm} \text{and} \hspace{0.2cm}\frac{\partial \beta}{\partial p}(n,\eta,\mu,p)<0.$$
\begin{equation}\label{eq:beta_mu}
\begin{aligned}
&\frac{\partial \beta}{\partial \mu}(n,\eta,\mu,p) = -\sum_{k=0}^n \binom {n} {k}p^k(1-p)^{n-k}\frac{k}{\sqrt{2n}}\varphi\left(\sqrt{\frac{n}{2}}\left(\eta-\frac{k}{n}\mu\right)\right) < 0;\\
\end{aligned}
\end{equation}
\begin{equation}\label{eq:beta_p}
\begin{aligned}
&\frac{\partial \beta}{\partial p}(n,\eta,\mu,p) = n\sum_{k=1}^n \binom {n-1} {k-1}p^{k-1}(1-p)^{n-k}\Phi\left(\sqrt{\frac{n}{2}}\left(\eta-\frac{k}{n}\mu\right)\right) \\
&- n\sum_{k=0}^{n-1} \binom {n-1} {k}p^{k}(1-p)^{n-k-1}\Phi\left(\sqrt{\frac{n}{2}}\left(\eta-\frac{k}{n}\mu\right)\right) \\
&=n\sum_{k=0}^{n-1} \binom {n-1} {k}p^{k}(1-p)^{n-k-1}\left(\Phi\left(\sqrt{\frac{n}{2}}\left(\eta-\frac{k+1}{n}\mu\right)\right) - \Phi\left(\sqrt{\frac{n}{2}}\left(\eta-\frac{k}{n}\mu\right)\right)\right) < 0.
\end{aligned}
\end{equation}
\end{proof}


\begin{lemma}
For the region of strong effect defined as in Definition \ref{def:se}, the maximum type II error rate for the two-stage trial is
$$\beta_2^{se}(D,\alpha)=\underset{i=1, \ldots, s}{\max}\beta_2(D,\alpha,\mu_i,p_i).$$
\end{lemma}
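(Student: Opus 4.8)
The plan is to imitate the proof of Lemma~\ref{lemma:beta_max}: it suffices to show that $\beta_2(D,\alpha,\mu,p)$ is decreasing in each of $\mu$ and $p$. Granting this, the staircase shape of $\cal E$ (Definition~\ref{def:se}) forces the maximum to a re-entrant corner. Indeed, any $(\mu,p)\in\cal E$ lies in a strip $p\in[p_i,p_{i+1}]$ with $\mu\ge\mu_i$, so by monotonicity $\beta_2(D,\alpha,\mu,p)\le\beta_2(D,\alpha,\mu_i,p)\le\beta_2(D,\alpha,\mu_i,p_i)$; taking the supremum over $\cal E$ and noting that every $(\mu_i,p_i)$ itself belongs to $\cal E$ yields the claimed identity.

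Rather than differentiate the two-part formula \eqref{eq:true_beta} directly, I would argue probabilistically. Write $\beta_2(D,\alpha,\mu,p)=\mathbb P\bigl((\overline X_1,\overline X_2)\in A\mid\mu,p\bigr)$, where $A$ is the acceptance region of design (D2), i.e. the complement of the grey rejection region $R$ in Fig.~\ref{rejection_region}. Crucially, $A$ does not depend on $(\mu,p)$. The first step is to check that $R$ is an \emph{upper set} for the coordinatewise order on $(\overline X_1,\overline X_2)$: rejection occurs either when $\overline X_1>\eta_1$ (for any $\overline X_2$) or when $\eta_0\le\overline X_1\le\eta_1$ and $\overline X=\frac{n_1\overline X_1+n_2\overline X_2}{n_1+n_2}>\eta_2$; increasing either coordinate keeps a rejection a rejection (the separating line is decreasing in $\overline X_1$, and crossing $\eta_1$ only reinforces rejection), so $R$ is upward closed and $A$ is downward closed.

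The second step is stochastic monotonicity of the marginals. Since the two stages use disjoint, independent treatment and control groups, $\overline X_1$ and $\overline X_2$ are independent given $(\mu,p)$, with cumulative distribution functions $\beta(n_1,\cdot,\mu,p)$ and $\beta(n_2,\cdot,\mu,p)$ as in \eqref{eq:beta}. The derivative computations \eqref{eq:beta_mu} and \eqref{eq:beta_p} (applied with $n_1$ and with $n_2$) show that for every fixed threshold these CDFs are strictly decreasing in $\mu$ and in $p$; equivalently, each of $\overline X_1$ and $\overline X_2$ is stochastically increasing in $\mu$ and in $p$. By independence the joint law of $(\overline X_1,\overline X_2)$ is then stochastically increasing for the coordinatewise order, so its mass on the upper set $R$ increases with $\mu$ and with $p$. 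Hence $\beta_2=1-\mathbb P(R\mid\mu,p)$ decreases in both arguments, which is exactly what the corner argument requires.

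I expect the only delicate point to be the verification that $R$ is upward closed, because of the futility boundary at $\overline X_1=\eta_0$: a point with $\overline X_1<\eta_0$ is always accepted regardless of $\overline X_2$, so one must confirm that raising a coordinate never converts a rejection into an acceptance. A direct differentiation of the integral term in \eqref{eq:true_beta} is also feasible but less clean, since it involves $\partial_\mu$ and $\partial_p$ of the stage-one density $\beta_\eta(n_1,x_1,\mu,p)$, whose sign is not fixed; the stochastic-dominance argument sidesteps this entirely.
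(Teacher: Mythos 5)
Your proof is correct, but it takes a genuinely different route from the paper's. The paper differentiates the two-case expression \eqref{eq:true_beta} directly: for $\alpha\in(\alpha_1,1-\alpha_0)$ it computes $\partial\beta_2/\partial\mu$ and $\partial\beta_2/\partial p$, integrates the mixed-partial term by parts to eliminate $\partial^2\beta/\partial\eta\,\partial\mu$ (whose companion factor has no fixed sign), and then checks that each of the four surviving terms is negative using $\partial\beta/\partial\mu<0$, $\partial\beta/\partial p<0$ and $\partial\beta/\partial\eta>0$. Your stochastic-dominance argument reaches the same monotonicity without any of this calculus: the rejection region of (D2) is indeed upward closed (your check at the futility boundary is the right one --- a rejection already has $\overline X_1\ge\eta_0$, and raising a coordinate cannot push it below $\eta_0$, while the separating line $\overline X_2=\frac{n_1+n_2}{n_2}\eta_2-\frac{n_1}{n_2}\overline X_1$ is decreasing), the marginal CDFs are exactly $\beta(n_1,\cdot,\mu,p)$ and $\beta(n_2,\cdot,\mu,p)$, and \eqref{eq:beta_mu}--\eqref{eq:beta_p} give their pointwise monotonicity in $(\mu,p)$, so a monotone coupling of the two independent stages finishes the job. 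What your route buys is robustness and brevity: it reuses only facts already proved in Lemma~\ref{lemma:beta_max} and would extend verbatim to designs with more stages or any other upward-closed rejection region, whereas the paper's computation is tied to the specific integral form. What it costs is an appeal to the (standard) fact that independent, stochastically increasing marginals yield multivariate stochastic dominance, and it delivers weak rather than strict monotonicity --- which is all the corner argument needs. One small omission: you should note the degenerate case $\alpha\notin(\alpha_1,1-\alpha_0)$ in \eqref{eq:true_beta}, where the design collapses to a one-stage test on $\overline X_1$; there the rejection region is a half-plane, still an upper set, so your argument (or Lemma~\ref{lemma:beta_max} directly) covers it. Your corner argument from monotonicity to the staircase region is also spelled out more carefully than in the paper.
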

\begin{proof}
This can be shown by proving that
$$\frac{\partial \beta_2}{\partial \mu}(n_1,\eta_0,\eta_1,n_2,\alpha,\mu,p)<0 \text{ and } \frac{\partial \beta_2}{\partial p}(n_1,\eta_0,\eta_1,n_2,\alpha,\mu,p)<0.$$
We will use inequalities $\frac{\partial \beta}{\partial \mu}(n, \eta, \mu, p)<0$ and $\frac{\partial \beta}{\partial p}(n, \eta, \mu, p)<0$ from Lemma~\ref{lemma:beta_max} and inequality $\frac{\partial \beta}{\partial \eta}(n, \eta, \mu, p)>0$, which is a trivial consequence from \eqref{eq:beta}. There are two different cases:
\par 1. $\alpha\notin(\alpha_1,1-\alpha_0)$:
\begin{align*}
&\frac{\partial \beta_2}{\partial \mu}(n_1,\eta_0,\eta_1,n_2,\alpha,\mu,p)=\frac{\partial \beta}{\partial \mu}(n_1, z_{1-\alpha}\sqrt{\frac{2}{n_1}}, \mu, p)<0,\\
&\frac{\partial \beta_2}{\partial p}(n_1,\eta_0,\eta_1,n_2,\alpha,\mu,p)=\frac{\partial \beta}{\partial p}(n_1, z_{1-\alpha}\sqrt{\frac{2}{n_1}}, \mu, p)<0.
\end{align*}
\par 2. $\alpha\in(\alpha_1,1-\alpha_0)$:
\begin{align*}
& \frac{\partial \beta_2}{\partial \mu}(n_1,\eta_0,\eta_1,n_2,\alpha,\mu,p) = \frac{\partial \beta}{\partial \mu}(n_1,\eta_0,\mu,p)+\int^{\eta_1}_{\eta_0}\frac{\partial \beta}{\partial \eta \partial \mu}(n_1,x_1,\mu,p)\beta\left(n_2,x_2(x_1),\mu,p\right)dx_1 \\
& + \int^{\eta_1}_{\eta_0}\frac{\partial \beta}{\partial \eta}(n_1,x_1,\mu,p)\frac{\partial \beta}{\partial \mu}\left(n_2,x_2(x_1),\mu,p\right)dx_1  = \frac{\partial \beta}{\partial \mu}(n_1,\eta_0,\mu,p)\\
&+\frac{\partial \beta}{\partial \mu}(n_1,x_1,\mu,p)\beta\left(n_2,x_2(x_1),\mu,p\right)\biggr\rvert^{\eta_1}_{\eta_0}+ \frac{n_1}{n_2}\int^{\eta_1}_{\eta_0}\frac{\partial \beta}{\partial \mu}(n_1,x_1,\mu,p)\frac{\partial \beta}{\partial \eta}\left(n_2,x_2(x_1),\mu,p\right)dx_1\\
&+ \int^{\eta_1}_{\eta_0}\frac{\partial \beta}{\partial \eta}(n_1,x_1,\mu,p)\frac{\partial \beta}{\partial \mu}\left(n_2,x_2(x_1),\mu,p\right)dx_1 = \frac{\partial \beta}{\partial \mu}(n_1,\eta_0,\mu,p)\left(1-\beta\left(n_2,x_2(\eta_0),\mu,p\right)\right) \\
&+ \frac{\partial \beta}{\partial \mu}(n_1,\eta_1,\mu,p)\beta\left(n_2,x_2(\eta_1),\mu,p\right) + \frac{n_1}{n_2}\int^{\eta_1}_{\eta_0}\frac{\partial \beta}{\partial \mu}(n_1,x_1,\mu,p)\frac{\partial \beta}{\partial \eta}\left(n_2,x_2(x_1),\mu,p\right)dx_1\\
& + \int^{\eta_1}_{\eta_0}\frac{\partial \beta}{\partial \eta}(n_1,x_1,\mu,p)\frac{\partial \beta}{\partial \mu}\left(n_2,x_2(x_1),\mu,p\right)dx_1 < 0;
\end{align*}
\begin{align*}
& \frac{\partial \beta_2}{\partial p}(n_1,\eta_0,\eta_1,n_2,\alpha,\mu,p) =  \frac{\partial \beta}{\partial p}(n_1,\eta_0,\mu,p)+\int^{\eta_1}_{\eta_0}\frac{\partial \beta}{\partial p\partial \eta}(n_1,x_1,\mu,p)\beta\left(n_2,x_2(x_1),\mu,p\right)dx_1 \\
&+\int^{\eta_1}_{\eta_0}\frac{\partial \beta}{\partial \eta}(n_1,x_1,\mu,p)\frac{\partial \beta}{\partial p}\left(n_2,x_2(x_1),\mu,p\right)dx_1 = \frac{\partial \beta}{\partial p}(n_1,\eta_0,\mu,p)\\
&+\frac{\partial \beta}{\partial p}(n_1,x_1,\mu,p)\beta\left(n_2,x_2(x_1),\mu,p\right)\biggr\rvert^{\eta_1}_{\eta_0} + \frac{n_1}{n_2}\int^{\eta_1}_{\eta_0}\frac{\partial \beta}{\partial p}(n_1,x_1,\mu,p)\frac{\partial \beta}{\partial \eta}\left(n_2,x_2(x_1),\mu,p\right)dx_1\\ 
&+ \int^{\eta_1}_{\eta_0}\frac{\partial \beta}{\partial \eta}(n_1,x_1,\mu,p)\frac{\partial \beta}{\partial p}\left(n_2,x_2(x_1),\mu,p\right)dx_1 =  \frac{\partial \beta}{\partial p}(n_1,\eta_0,\mu,p)\left(1-\beta\left(n_2,x_2(\eta_0),\mu,p\right)\right)\\
&+\frac{\partial \beta}{\partial p}(n_1,\eta_1,\mu,p)\beta\left(n_2,x_2(\eta_1),\mu,p\right) + \frac{n_1}{n_2}\int^{\eta_1}_{\eta_0}\frac{\partial \beta}{\partial p}(n_1,x_1,\mu,p)\frac{\partial \beta}{\partial \eta}\left(n_2,x_2(x_1),\mu,p\right)dx_1 \\
&+ \int^{\eta_1}_{\eta_0}\frac{\partial \beta}{\partial \eta}(n_1,x_1,\mu,p)\frac{\partial \beta}{\partial p}\left(n_2,x_2(x_1),\mu,p\right)dx_1 < 0,
\end{align*}
where $x_2(x_1) = \frac{n_1+n_2}{n_2}\eta_2(n_1, \eta_0, \eta_1, n_2, \alpha)-\frac{n_1}{n_2}x_1$.
\end{proof}

\begin{lemma}The following inequality always holds:
\begin{equation}\label{eq:M1}
1-\beta^{se}_{\text{fw}}(M_1, m) \ge \left(1-\beta_{M_1+1-m}^{se}\right)^{M_1+1-m}.
\end{equation}
If $M_1=M$ and $m=1$, equality is achieved.
\end{lemma}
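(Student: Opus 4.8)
The plan is to translate the family-wise type II error into an event about the number of rejected strong-effect centers, and then to exploit the structure of the step-up procedure together with the independence of the p-values. Write $r = M_1+1-m$; since $1\le m\le M_1$, we have $1\le r\le M_1$. By Definition \ref{def:errors} a type II error occurs when at least $m$ of the $M_1$ strong-effect centers are not rejected, so the complementary event, \emph{no type II error}, is exactly that at least $r$ of the strong-effect centers \emph{are} rejected. Thus $1-\beta_{\text{fw}}(M_1,m)$ equals the probability that at least $r$ strong-effect centers are rejected, and $1-\beta^{se}_{\text{fw}}(M_1,m)$ is the infimum of this probability over all configurations in which the $M_1$ strong centers carry parameters $(\mu,p)\in\mathcal E$.

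The central step, and the one I expect to be the main obstacle, is a one-sided sufficient condition for rejection in Hochberg's step-up rule: if at least $r$ strong-effect centers have p-value at most $\alpha(r)$, then at least $r$ strong-effect centers are rejected. I would prove this in two sub-steps using the monotonicity $\alpha(1)\le\cdots\le\alpha(M)$. First, if at least $r$ of the p-values are $\le\alpha(r)$, then the $r$-th smallest satisfies $\text{p-value}(r)\le\alpha(r)$, so the largest accepted index satisfies $K\ge r$. Second, once $K\ge r$, every non-rejected center has $\text{p-value}(K+1)>\alpha(K+1)\ge\alpha(r)$; contrapositively, every center with p-value $\le\alpha(r)$ is rejected. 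Applying this to the $\ge r$ strong-effect centers whose p-values lie below $\alpha(r)$ shows they are all rejected. This argument is delicate precisely because, as noted around Fig.~\ref{allocation_multicenters}, the fate of an individual center can depend on the others; the point is that we need not resolve those dependencies, only this sufficient condition.

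With the sufficient condition in hand, I would bound
\begin{align*}
1-\beta_{\text{fw}}(M_1,m) &\ge \mathbb P\left(\text{at least } r \text{ strong centers have p-value} \le \alpha(r)\right)\\
&\ge \mathbb P\left(r \text{ fixed strong centers all have p-value} \le \alpha(r)\right).
\end{align*}
Choosing any $r$ of the $M_1$ strong centers and using the independence of the p-values, the last probability factorizes into $\prod_{i=1}^{r}(1-\beta_i)$, where $\beta_i$ is the single-center type II error rate of center $i$ at level $\alpha(r)$ (the one- or two-stage quantity entering \eqref{ind:errors}). Since each $(\mu_i,p_i)\in\mathcal E$, Lemma \ref{lemma:beta_max} and its two-stage analogue give $\beta_i\le\beta^{se}_r$, so the product is $\ge(1-\beta^{se}_r)^{r}$. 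Because this lower bound is uniform over all admissible configurations, it survives taking the worst case over $\mathcal E$, yielding $1-\beta^{se}_{\text{fw}}(M_1,m)\ge(1-\beta^{se}_r)^{r}$.

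Finally, for the equality case $M_1=M$, $m=1$ we have $r=M$, and \emph{no type II error} means all $M$ centers are rejected. Here I would argue that all centers are rejected if and only if $\text{p-value}(M)\le\alpha(M)=\alpha$, i.e. every p-value is at most $\alpha(M)$, so the sufficient condition becomes necessary as well. By independence the probability factorizes exactly into $\prod_{i=1}^{M}(1-\beta_i)$, and the worst case over $\mathcal E$ places every center at the corner attaining $\beta^{se}_M$, so the inequality is saturated: $1-\beta^{se}_{\text{fw}}(M,1)=(1-\beta^{se}_M)^{M}$.
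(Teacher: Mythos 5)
Your proof is correct and follows essentially the same route as the paper's: the event ``at least $r=M_1+1-m$ strong-effect centers have p-value at most $\alpha(r)$'' is exactly the paper's event $\{\text{p-value}(\tilde k)\le\alpha(M_1+1-m)\}$, and both arguments then restrict to a fixed subset of $r$ strong centers and factorize by independence, treating the equality case identically. The only difference is cosmetic: you spell out the sufficiency of this condition under the step-up rule (via $K\ge r$ and the contrapositive on non-rejected centers) in more detail than the paper, which simply asserts it.
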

\begin{proof} Consider an arbitrary set of p-values: $\text{p-value}_1$, \ldots,  $\text{p-value}_M$ of which $M_1$ correspond to the alternatives with the strong effect. Let $\beta_{\text{fw}}(M_1, m)$ be the type II error rate for these p-values. We will show that $$1-\beta_{\text{fw}}(M_1, m) \ge \left(1-\beta_{M_1+1-m}^{se}\right)^{M_1+1-m}.$$ Then, for the set of p-values where the maximum type II error rate is achieved, the statement in the lemma holds.
Let $S_1$ be the set of indexes for the centres with the strong effect. Define $\tilde k$ as the rank of the $m$-th highest p-value among $S_1$,
\begin{equation}\label{eq:def_tildek}
\sum_{c(k)\in S_1}\mathbbm{1}(k\ge \tilde k)=m.
\end{equation}
First, we will prove that
\begin{equation}\label{eq:1}
1-\beta_{\text{fw}}(M_1, m) \ge \mathbb P\left(\text{p-value}(\tilde k)\le\alpha\left(M_1+1-m\right)\right).
\end{equation}
By definition, $\tilde k \ge M_1+1-m$. The condition $\text{p-value}(\tilde k)\le\alpha\left(M_1+1-m\right)$ is sufficient to reject at least $M_1+1-m$ null hypotheses for the centers with strong effect and therefore to avoid type II error. The second inequality we have to prove is
\begin{equation}\label{eq:2}
\mathbb P\left(\text{p-value}(\tilde k)\le\alpha\left(M_1+1-m\right)\right) \ge \left(1-\beta_{M_1+1-m}^{se}\right)^{M_1+1-m}.
\end{equation}
For the proof we will use $S_2$, an arbitrary subset of centers with strong effects such that $|S_2|=M_1+1-m$:
\begin{equation}
\begin{aligned}
& \mathbb P\left(\text{p-value}(\tilde k)\le\alpha\left(M_1+1-m\right)\right) \\
& =  \mathbb P\left(\sum_{c_k \in S_1}\mathbbm 1\left(\text{p-value}(k)\le\alpha\left(M_1+1-m\right)\right)\ge M_1+1-m\right) \\ &\ge \mathbb P\left(\sum_{i \in S_2}\mathbbm 1\left(\text{p-value}_i\le\alpha\left(M_1+1-m\right)\right) \ge M_1+1-m\right) \\& \ge \left(1-\beta_{M_1+1-m}^{se}\right)^{M_1+1-m}.
\end{aligned}
\end{equation}
If $M_1=M$ and $m=1$ then the inequality \eqref{eq:1} becomes an equality, because in this case $\tilde k = M$ and condition $\text{p-value}(M)\le\alpha(M)$ is necessary and sufficient to reject all null hypotheses. Also if $\mathbb P\left(\text{p-value}_i\le\alpha(M)\right)=1-\beta_M^{se}$ for all $i$ then the inequality \eqref{eq:2} becomes an equality:
\begin{equation}
\mathbb P\left(\text{p-value}(M)\le\alpha(M)\right)=\mathbb P\left(\text{p-value}_i\le\alpha(M), \hspace{0.1cm}\forall i=1,\ldots,M\right)=\left(1-\beta_M^{se}\right)^M.
\end{equation} 
\end{proof}
\emph{Comment}: If $M_1<M$ or $m>1$, equality in \eqref{eq:1} cannot be attained, because there is a non-zero probability of having $$\tilde k=M_1+1-m, \hspace{0.2cm} \text{p-value}(\tilde k), \text{p-value}(\tilde k+1) \in\left(\alpha(M_1+1-m),\alpha(M_1+2-m)\right],$$ in this case there is no type II error.

From this lemma, we conclude that $$\beta^{se}_{\text{fw}}(M_1, m)\le 1-\left(1-\beta_{M_1+1-m}^{se}\right)^{M_1+1-m},$$ and $$\beta^{se}_{\text{fw}}(M, 1)=1-\left(1-\beta_{M}^{se}\right)^{M}.$$

\begin{lemma}
If $0<z_{\alpha_0}<z_{1-\alpha_1}$, the maximum probability of conducting the second stage in \eqref{q1max} approximately satisfies
\begin{equation}
\underset{\mu>0, \, p>0}{\max}\left(\beta\left(n_1,\frac{\sqrt 2\, z_{1-\alpha_1}}{\sqrt{n_1}},\mu,p\right)-\beta\left(n_1,\frac{\sqrt 2\,z_{\alpha_0}}{\sqrt{n_1}},\mu,p\right)\right) \approx  2\Phi\left(\frac{z_{1-\alpha_1}-z_{\alpha_0}}{2}\right)-1.
\end{equation}
The maximum is achieved for $\mu = \frac{z_{1-\alpha_1}+z_{\alpha_0}}{\sqrt{2 n_1}},\,p=1$.
\end{lemma}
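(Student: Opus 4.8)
The plan is to substitute the central-limit approximation \eqref{eq:beta_approx}, namely $\beta(n,\eta,\mu,p)\approx\Phi\!\left(\frac{\sqrt n(\eta-\mu p)}{\sqrt{2+(1-p)p\mu^2}}\right)$, into both terms and thereby reduce the two-dimensional optimization over $(\mu,p)$ to a transparent geometric problem. Writing $c_1=\sqrt2\,z_{1-\alpha_1}$, $c_0=\sqrt2\,z_{\alpha_0}$, together with $a=a(\mu,p)=\sqrt{2+(1-p)p\mu^2}$ and $b=b(\mu,p)=\sqrt{n_1}\,\mu p$, the quantity to be maximized becomes
$$G(a,b)=\Phi\!\left(\frac{c_1-b}{a}\right)-\Phi\!\left(\frac{c_0-b}{a}\right)=\int_{(c_0-b)/a}^{(c_1-b)/a}\varphi(t)\,dt,$$
the Gaussian mass of an interval whose length $(c_1-c_0)/a$ depends only on $a$ (and is positive by the hypothesis $z_{\alpha_0}<z_{1-\alpha_1}$).

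First I would identify the feasible set in the $(a,b)$ coordinates. Since $(1-p)p\ge0$ on $(0,1]$ with equality exactly at $p=1$, one has $a\ge\sqrt2$, the value $a=\sqrt2$ forcing $p=1$. Solving the two relations $(1-p)p\mu^2=a^2-2$ and $\mu p=b/\sqrt{n_1}$ for $(\mu,p)$ yields $p=\frac{b^2}{b^2+n_1(a^2-2)}\in(0,1]$ and $\mu=\frac{b}{\sqrt{n_1}\,p}>0$, so the map $(\mu,p)\mapsto(a,b)$ is onto the product region $\{a\ge\sqrt2\}\times\{b>0\}$. This surjectivity is precisely what licenses optimizing over $a$ and $b$ separately, and I expect it to be the main point requiring care.

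Next I would perform the iterated maximization. For fixed $a$ the interval $[(c_0-b)/a,(c_1-b)/a]$ keeps a constant length, so $G$ is maximized in $b$ when it is centred at the origin; differentiating, $\frac{d}{db}G=\tfrac1a\big(\varphi(\tfrac{c_0-b}{a})-\varphi(\tfrac{c_1-b}{a})\big)=0$ together with the unimodality of $\varphi$ forces $b=\tfrac{c_1+c_0}{2}>0$, which is feasible, and gives value $2\Phi\!\left(\frac{c_1-c_0}{2a}\right)-1$. Because $c_1>c_0$ this is strictly decreasing in $a$, so the outer maximum is attained at the smallest admissible value $a=\sqrt2$, that is, at $p=1$.

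Finally I would translate back to the original variables. At $a=\sqrt2$, $p=1$ the centring condition $b=\tfrac{c_1+c_0}{2}$ reads $\sqrt{n_1}\,\mu=\frac{\sqrt2(z_{1-\alpha_1}+z_{\alpha_0})}{2}$, whence $\mu=\frac{z_{1-\alpha_1}+z_{\alpha_0}}{\sqrt{2n_1}}$, and the optimal value is $2\Phi\!\left(\frac{c_1-c_0}{2\sqrt2}\right)-1=2\Phi\!\left(\frac{z_{1-\alpha_1}-z_{\alpha_0}}{2}\right)-1$, exactly as claimed. The only genuine subtlety beyond routine bookkeeping is ensuring that the optimizing value of $b$ can be realized \emph{simultaneously} with the optimizing value of $a$; the product structure of the feasible region established in the first step guarantees this, since the centring value of $b$ is attainable when $p=1$.
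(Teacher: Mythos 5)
Your proposal is correct and follows essentially the same route as the paper's proof: apply the normal approximation, observe that for fixed variance the Gaussian mass of a fixed-length interval is maximized when the interval is centred (giving $\mu p=\frac{z_{1-\alpha_1}+z_{\alpha_0}}{\sqrt{2n_1}}$), and then minimize the variance term by taking $p=1$. Your explicit verification that the $(a,b)$ feasible region is a product set, which justifies the iterated maximization, is a point the paper glosses over but does not change the argument.
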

\begin{proof} To show this, we bound the probability of conduction the second stage as follows:
\begin{align*}
&\underset{\mu>0, \, p>0}{\max}\left(\beta\left(n_1,\frac{\sqrt 2\, z_{1-\alpha_1}}{\sqrt{n_1}},\mu,p\right)-\beta\left(n_1,\frac{\sqrt 2\,z_{\alpha_0}}{\sqrt{n_1}},\mu,p\right)\right) \\
&\approx \Phi\left(\frac{ \sqrt 2 z_{1-\alpha_1}-\sqrt {n_1}\mu p}{\sqrt{2+(1-p)p\mu^2}}\right) - \Phi\left(\frac{ \sqrt 2 z_{\alpha_0}-\sqrt {n_1} \mu p}{\sqrt{2+(1-p)p\mu^2}}\right)\\
& \le \Phi\left(\frac{\sqrt 2 z_{1-\alpha_1} - (\sqrt 2 z_{1-\alpha_1}+\sqrt 2 z_{\alpha_0})/2}{\sqrt{2+(1-p)p\mu^2}}\right) - \Phi\left(\frac{\sqrt 2 z_{\alpha_0} - (\sqrt 2 z_{1-\alpha_1}+\sqrt 2 z_{\alpha_0})/2}{\sqrt{2+(1-p)p\mu^2}}\right) \\&= 2\Phi\left( \frac{\sqrt 2 z_{1-\alpha_1} - \sqrt 2 z_{\alpha_0}}{\sqrt{2+(1-p)p\mu^2}}\right)-1 \le 2\Phi\left(\frac{z_{1-\alpha_1}-z_{\alpha_0}}{2}\right)-1.
\end{align*}
Notice, that an equality is achieved for $\mu = \frac{z_{1-\alpha_1}+z_{\alpha_0}}{\sqrt{2 n_1}}>0$ and $p=1$. This ends the proof.
\end{proof}

%
\section{Type II error rate for the multi-center design computation}\label{Computation of beta fw}
We do not know an exact value of $\beta^{se}_{\text{fw}}(M_1,m)$, but we can compute the type II error rate $\beta_{\text{fw}}(M_1, m)$ for the given responses. Let $\alpha(1)<\alpha(2) < \dots <\alpha(M)$ be the p-value thresholds for the rejection procedure. For convenience put $\alpha(0) = 0,\, \alpha(M+1) = 1$. To compute $\beta_{\text{fw}}(M_1,m)$, one should look at the disposition of the p-values inside the intervals $(\alpha(j-1), \alpha(j)]$, $j=1,2,\ldots,M+1$.

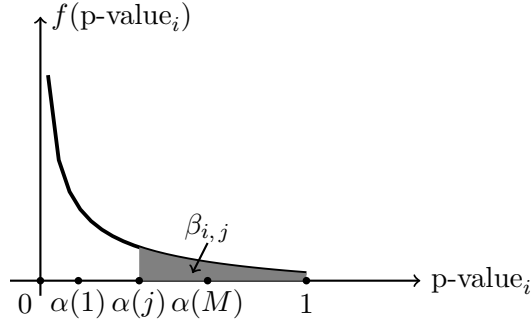
\begin{figure}
\centering
\begin{tikzpicture}
\draw[thick,->] (-0.4,0)--(5,0) node[right]{$\text{p-value}_i$};
\draw[thick,->] (0,-0.2)--(0,3.5) node[right]{$f(\text{p-value}_i)$};
\draw[domain=0.1:3.5,line width=0.5mm] plot (\x,{1/(\x)^(1/2)-0.4345});
\fill [gray, domain=1.3:3.5, variable=\x]
      (1.3, 0)
      -- plot ({\x}, {1/(\x)^(1/2)-0.4345})
      -- (3.5, 0)
      -- cycle;
\draw(-0.2, -0.3) node{$0$};
\draw(0,0) node[inner sep=1pt,fill,circle]{};
\draw(0.5, -0.3) node{$\alpha(1)$};
\draw(0.5,0) node[inner sep=1pt,fill,circle]{};
\draw(1.3, -0.3) node{$\alpha(j)$};
\draw(1.3,0) node[inner sep=1pt,fill,circle]{};
\draw(2.2, -0.3) node{$\alpha(M)$};
\draw(2.2,0) node[inner sep=1pt,fill,circle]{};
\draw(3.5, -0.3) node{$1$};
\draw(3.5,0) node[inner sep=1pt,fill,circle]{};
\draw(2.2, 0.7) node{$\beta_{i,\,j}$};
\draw[thick,->] (2.2, 0.5)--(2,0.1);
\end{tikzpicture}
\caption{$\beta_{i,j}$ for the multiple testing procedure.}\label{Illustration for $i-$th center}
\end{figure}

Define $\beta_{i,j} = \mathbb P(\text{p-value}_i \ge \alpha(j))$ to be the type II error for the $i-$th center evaluated at $\alpha = \alpha(j)$ (see Fig.~\ref{Illustration for $i-$th center}). The probability that the p-value of the $i-$th center is in the interval $(\alpha(j-1), \alpha(j)]$ is
\begin{equation}
\mathbb P(\text{p-value}_i\in (\alpha(j-1), \alpha(j)]) = \beta_{i,j-1}-\beta_{i,j}, \hspace{0.2cm}  i = 1,\ldots, M; \hspace{0.2cm} j = 1,\ldots, M+1.
\end{equation}
Notice that if p-values for some centres are in the same interval $(\alpha(j), \alpha(j-1)]$, the decision about $\mathcal H_0$ there will be the same. 
\par Let the random index $j_i$ define an interval of the $i-$th center p-value,
\begin{equation}
j_i = \left(j \, | \, \text{p-value}_i\in (\alpha(j-1), \alpha(j)] \right),\hspace{0.2cm} j_i = 1, \ldots, M+1.
\end{equation}
The probability of observing $\{j_1, \ldots, j_M\}$ is then expressed as
\begin{equation}\label{eq:block}
 \prod_{i=1}^M\left(\beta_{i, \, j_i-1}-\beta_{i, \,j_i}\right), \hspace{0.2cm} \text{where}\hspace{0.1cm} 1 = \beta_{i,0}>\beta_{i,1}>\ldots>\beta_{i,M}>\beta_{i, M+1}=0.
\end{equation}
Define $e_j$ to be the number of p-values falling in the interval $(0, \alpha(j)]$,
\begin{equation}\label{eq:def_ej}
e_j = e_j(j_1, \ldots, j_M) = \sum_{i=1}^M \mathbbm{1}\left(j_i\le j\right).
\end{equation}
Using $e_j$, an alternative formulation of Hochberg's step-up procedure is as follows:
\begin{itemize}
\item Compute $e_j$ for each $j = 1,\ldots,M$;
\item Find the maximum $J$ such that $e_J \ge J$; 
\item Reject $\mathcal H_0$ in centres $c_k, \, k = 1,\ldots,e_J$.
\end{itemize}
Let $S_1$ be the set of indexes for the centers with the strong effects. Define
\begin{equation}\label{eq:def_tildej}
\tilde j=\min\left(j \, |\sum_{i\in S_1}\mathbbm{1}\left(j_i>j\right)<m\right)
\end{equation}
Now an expression for $\beta_{\text{fw}}(M_1,m)$ can be written as
\begin{equation}\label{eq:def_beta_fw_new}
\beta_{\text{fw}}(M_1,m)=\mathbb P(e_j<j, \hspace{0.1cm}  \text{ for all } j \ge \tilde j).
\end{equation}
Consequently, one can compute the type II error rate as
\begin{equation}\label{eq:beta_fw}
\beta_{\text{fw}}(M_1,m) = \underset{e_j<j \hspace{0.1cm} \text{ for all } j \ge \tilde j}{\left(\sum_{j_1 = 1}^{M+1}\ldots\sum_{j_M = 1}^{M+1}\right)}\prod_{i=1}^M\left(\beta_{i,j_i-1}-\beta_{i,j_i}\right),
\end{equation}
which is the sum of the probabilities of all realisations $\{j_1, \ldots, j_M\}$, such that $e_j <j$  for all $ j \ge \tilde j$.

\bibliography{biblio_paper}%

\Addresses

\end{document}